\documentclass[11pt]{article}

\usepackage[numbers]{natbib}
\usepackage{amsmath}
\usepackage{amssymb}
\usepackage{amsthm}
\usepackage{xcolor}
\usepackage{graphicx}
\usepackage{array}
\usepackage{booktabs}
\usepackage{tabularx}
\usepackage{hyperref}
\usepackage{algorithm}
\usepackage{algpseudocode}
\usepackage{enumitem}\usepackage[margin=1in]{geometry}

\hypersetup{colorlinks=true,citecolor=blue,linkcolor=blue,hypertexnames=false}
\allowdisplaybreaks

\newcommand{\paperTitle}{An $n^{8/5+o(1)}$-Time $\Omega(\lambda^3)$-Approximation for Longest Common Subsequence}
\newcommand{\paperAuthor}{Zhao Song\thanks{\texttt{magic.linuxkde@gmail.com}. The author started working on this problem at the end of July 2026 and finished a draft of the $n^{1.6}$ result on August 25, 2026. The author had planned to release the draft only once the result was tight and could not be improved further. However, due to recent progress on FMM, the author decided to release this draft now. The author would like to thank Omri Weinstein for useful discussions.}}

\newcommand{\E}{\mathbb E}

\DeclareMathOperator{\lcs}{lcs}
\DeclareMathOperator{\lis}{lis}
\DeclareMathOperator{\opt}{opt}

\theoremstyle{plain}
\newtheorem{theorem}{Theorem}[section]
\newtheorem{lemma}[theorem]{Lemma}
\newtheorem{definition}[theorem]{Definition}

\newtheorem{corollary}[theorem]{Corollary}
\newtheorem{observation}[theorem]{Observation}
\theoremstyle{definition}

\newtheorem{remark}[theorem]{Remark}

\begin{document}

\date{August 25, 2026}
\title{\paperTitle}
\author{\paperAuthor}
\maketitle

\begin{abstract}
Let $\lambda$ denote the ratio of the length of a longest common subsequence of two
length-$n$ strings to $n$. Rubinstein, Seddighin, Song and Sun \cite{rsss19} gave an
$\Omega(\lambda^3)$-approximation for LCS running in $\widetilde O(n^{39/20})$ time,
where $39/20=1.95$. Song \cite{son19} mentioned that improving the $n^{1.95}$ running time is an interesting open question.

We give an algorithm that computes an $\Omega(\lambda^3)$-approximation
of the longest common subsequence in
$n^{8/5+o(1)}$ time.  This improves the
exponent $1.95$ to $1.6+o(1)$.

\end{abstract}
\newpage

\section{Introduction}

The longest common subsequence (LCS) of two strings of length $n$ is computed by a
textbook dynamic program in $O(n^2)$ time, and half a century of effort has
improved this only by logarithmic factors. Fine-grained complexity explains why:
no truly subquadratic algorithm exists unless the Strong Exponential Time
Hypothesis fails \cite{abw15,bk15}. Attention has therefore turned to
approximation, where the picture remains strikingly incomplete.

Two lines of work have emerged. The first asks for the best factor achievable in a
given running time, with no assumption on the input. In near-linear time the best
known factor is $\widetilde O(n^{0.4})$, and more generally an
$\widetilde O(n^{2\varepsilon/5})$-approximation is achievable in $O(n^{2-\varepsilon})$
time for every $0 < \varepsilon \le 1$ \cite{bcd21}.
Combining a sublinear-time estimator for longest increasing subsequences
\cite{anss22} with the classical reduction of \cite{hs77} has been proposed as a
route to an $n^{o(1)}$-approximation in linear time \cite{nos21}. At the other end of the range,
a $(1-\epsilon)$-approximation is known only in time $n^2/2^{\log^{\Omega(1)}n}$
\cite{mr26}, which improves on the quadratic dynamic program by a quasi-polynomial
factor but is not truly subquadratic.

Over fixed alphabets the situation is different. Rubinstein and Song gave a
$(1/2+\Omega(1))$-approximation for equal-length binary strings in truly
subquadratic time \cite{rs20}; Akmal and Vassilevska Williams extended this to
every constant alphabet \cite{avw21}; and He and Li obtained a
$(1/2+\Omega(1))$-approximation for binary strings in $n^{1+\varepsilon}$ time,
allowing unequal lengths \cite{hl23}. These results rely on a fixed alphabet and
do not apply to the unrestricted-alphabet setting considered here. For
unrestricted alphabets, Boneh, Golan and Kraus recently gave a deterministic
near-linear-time $O(n^{3/4}\log n)$-approximation \cite{bgk25}.

The second line parameterizes by the size of the optimum. Writing
$\lambda = \lcs(A,B)/n$, Rubinstein, Seddighin, Song and Sun \cite{rsss19} gave an
$\Omega(\lambda^3)$-approximation running in truly subquadratic time
$\widetilde O(n^{39/20})$, where $39/20 = 1.95$. \cite{son19} mentioned that improving the $n^{1.95}$ running time is an interesting open question. To our knowledge, no later
published work improves this $\lambda$-parameterized running-time/approximation
pair for unrestricted alphabets: at the same running time \cite{bcd21} returns
only an $n^{\Omega(1)}$-factor approximation, and \cite{mr26} is slower.

\subsection{Our result}
\label{sec:our-result}
We state our main result as follows.
\begin{theorem}[Main result; informal version of
Theorem~\ref{thm:uniform_eight_fifths_formal}]
\label{thm:uniform_eight_fifths_informal}
\label{thm:fixed_delta_informal}
There is a randomized algorithm that, given strings $A$ and $B$ of length
$n$ with $\lambda=\lcs(A,B)/n$, computes an $\Omega(\lambda^3)$-approximation
of $\lcs(A,B)$ in time $O(n^{8/5+o(1)})$ with high probability.
\end{theorem}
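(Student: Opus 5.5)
We follow the framework of \cite{rsss19}: reduce the $\Omega(\lambda^3)$-approximation of $\lcs(A,B)$ to a small number of structured subproblems, and speed up only the expensive step. Concretely, we split on the frequency of symbols. A \emph{rare} symbol occurs at most $t$ times in each of $A$ and $B$. A \emph{frequent} symbol occurs more often, so there are at most $n/t$ frequent symbols. An optimal alignment of length $\lambda n$ gets at least half its matches from one of the two classes, and we handle each class with its own algorithm.

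\textbf{Rare symbols.} Each rare symbol contributes at most $t^2$ matching pairs, so the total number of matching pairs is at most $nt$. The LCS restricted to rare symbols is then an LIS on at most $nt$ points, solvable exactly in $\widetilde O(nt)$ time by the Hunt--Szymanski reduction \cite{hs77}.

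\textbf{Frequent symbols.} This is where the $\lambda^3$ loss enters. We partition $A$ and $B$ into blocks of length $\ell$ and consider the $(n/\ell)^2$ block pairs. For one block pair, we lower-bound the LCS by the single-symbol term $\max_\sigma \min(\#_\sigma(A_i),\#_\sigma(B_j))$ summed along a monotone path. A pigeonhole argument as in \cite{rsss19} shows that an alignment of density $\lambda$ contains a monotone chain of block pairs whose summed single-symbol scores are at least $\Omega(\lambda^3 n)$: we lose one $\lambda$ for the fraction of useful blocks, one for the density inside a block, and one for restricting to the best single symbol.

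The bottleneck is computing the block-pair scores, which is a max-min product over the frequent symbols. The $n^{1.95}$ bound in \cite{rsss19} came from sampling this product. We instead compute it, or a constant-factor approximation of it, with fast rectangular matrix multiplication. We bucket the counts into $O(\log n)$ geometric levels, which loses only a constant factor. For each level we form Boolean matrices $X\in\{0,1\}^{(n/\ell)\times (n/t)}$ and $Y\in\{0,1\}^{(n/t)\times (n/\ell)}$ marking block/symbol pairs whose count reaches that level. The product $XY$ then tells us, for each block pair, whether some symbol reaches that level in both blocks. Once the scores are known, the DP over the block grid costs $O((n/\ell)^2)$.

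\textbf{Balancing.} We choose $t$ and $\ell$ so that three costs balance: $nt$ for the rare part, $(n/\ell)^2$ for the grid DP, and $\mathrm{MM}(n/\ell,n/t,n/\ell)$ for the products. An additional $n\ell$-type cost appears for refining within chosen blocks. For small $\lambda$, i.e.\ when $\lambda^3 n$ falls below a polylogarithmic threshold, we run the trivial constant approximation.

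To reach exactly $8/5$ we will likely need a recursive or two-level blocking: blocks of length $\ell_1$ for the matrix products, and a finer level handled by recursion on the candidate block pairs. The target is a choice such as $t=n^{3/5}$, for which the rare part costs $n^{8/5}$, with the matrix-product term also bounded by $n^{8/5+o(1)}$. Using only the trivial bound $\omega\le 3$ with $n^{o(1)}$ overheads from the bucketing and the recursion depth, this needs $\ell\ge n^{1/5}$ and $(n/\ell)^2(n/t)\le n^{8/5}$, i.e.\ $(n/\ell)^2\le n$, so $\ell\ge n^{1/2}$.

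\textbf{Main obstacle.} The hardest step is verifying that the $\Omega(\lambda^3)$ guarantee survives coarse blocks of length $\ell=n^{1/2}$. At that granularity the single-symbol lower bound inside a block may lose more than a $\lambda$ factor. We would address this by a second sampling layer within heavy block pairs, as in \cite{rsss19}, and by charging its cost against the $nt$ budget.
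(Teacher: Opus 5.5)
There is a genuine gap in your frequent-symbol branch, and it is not the one you flag. You claim that replacing a block pair's LCS by its best single-symbol count $\max_\sigma\min(\#_\sigma(A_i),\#_\sigma(B_j))$ costs only one factor of $\lambda$. That is false. If the optimal alignment inside a block pair uses $u$ distinct symbols, the best single symbol is only guaranteed a $1/u$ fraction, and $u$ can be as large as $\min\{\ell,2n/t\}$. This is polynomial in $n$ and independent of $\lambda$.

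Concretely, take $m:=\lfloor n/(2t)\rfloor$ distinct symbols and $A=B=(\sigma_1\cdots\sigma_m)^{n/m}$.
\begin{itemize}
\item Then $\lambda=1$, and every symbol occurs at least $2t$ times, so the rare branch certifies nothing.
\item Every block of length $\ell$ contains each symbol at most $\ell/m+1$ times.
\item A sound chain uses each block at most once, so it has at most $n/\ell$ block pairs.
\item Hence the frequent branch certifies at most $n/m+n/\ell=O(t+n/\ell)$.
\end{itemize}
With $t=n^{3/5}$ and $\ell\ge n^{1/5}$ this is $O(n^{4/5})=o(\lambda^3\lcs(A,B))$. The only escapes are $\ell=O(1)$, which makes the grid DP quadratic, or $t=\Theta(n)$, which makes the rare branch quadratic. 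A "second sampling layer" cannot repair this, because the loss lies in the scoring function itself, not in sampling error. A score that keeps one symbol per block pair can only give polynomial-factor guarantees, not a factor depending on $\lambda$ alone. Your account of the RSSS bottleneck is also inaccurate: the $n^{1.95}$ came from the sample-and-extend small-$\lambda$ routine costing $n^2\lambda^7$, not from a sampled max-min product.

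The missing idea is a mechanism that keeps genuine window-pair LCS values. The paper stays in the RSSS window framework: $A$-windows of length exactly $d$ against a dyadic ladder of $B$-windows, with exact per-pair LCS in $\widetilde O(d^2+w_{\max})$ (Lemma~\ref{lem:perpair}). There the $\lambda^3$ loss has a specific source. The centre certificate $I_c(a,b)=|\opt_{a,c}\cap Y_{c,b}|$ certifies $\lcs\ge\theta^4D/32$ (Lemmas~\ref{lem:sound} and~\ref{lem:complete}). Jensen aggregation then shows the staircase target of Lemma~\ref{lem:staircase_cap} has value $\Omega(\nu^4 n)$.

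The $8/5$ exponent comes from two further steps:
\begin{itemize}
\item Adaptive centre domination (Lemma~\ref{lem:adaptive_step_one}) combines bounded independence of the anchored intersection graph, random pools, and the sparse one-centre filter. It leaves only $\widetilde O(k^{2-\gamma}/\nu)$ high pairs unmarked.
\item $s$ scales of sampled exact localization (Lemma~\ref{lem:capped_repair}) repair those pairs. Lemma~\ref{lem:one_level_adaptive} balances the costs to $8/5+4/(25s+10)+21sa/(5s+2)$.
\end{itemize}
The whole range $\lambda\le n^{-a}$ goes to the ANSS-based endpoint of Theorem~\ref{thm:anss_lcs_endpoint}. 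Its $n^{O(1/\sqrt{\log\log n})}$ loss is absorbed by $\lambda^{-3}$ once $s=\Theta(\sqrt{\log\log n})$ and $a=\Theta(1/\sqrt{\log\log n})$.
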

Our theorem improves the  best previous work \cite{rsss19} from $n^{1.95}$ to $n^{1.6+o(1)}$.

{\bf Organization of the paper.}
Section~\ref{sec:tech} reviews the algorithm of \cite{rsss19} and outlines the main ideas behind our improvement.
Section~\ref{sec:preliminaries} introduces the notation and collects the tools from previous work used in our analysis.
Section~\ref{sec:proof} develops the improved sparsification and multiscale repair procedures, analyzes their running times, and combines them with the small-density algorithms to prove the main theorem.

\section{Technique Overview}
\label{sec:tech}

Section~\ref{sec:tech-rsss} reviews the window framework, sparsification, and nearby-search steps of \cite{rsss19}, and explains how their costs yield the $n^{1.95}$ running time.
Section~\ref{sec:tech-ours} outlines our improvements to these steps and explains how adaptive sparsification, multiscale exact repair, and the small-density algorithms combine to give the $n^{8/5+o(1)}$ running time.

\subsection{Summary of RSSS19}
\label{sec:tech-rsss}

\paragraph{Windows and compatible paths.}
Fix a base length $d$. The set $W_A$ partitions $A$ into $n/d$ windows of length $d$. For each dyadic scale $d_i=d\cdot2^i$, the ladder $W_B$ contains substrings of $B$ starting at multiples of $d_i$ and having lengths in $(d_i/2,d_i]$ that are multiples of $d$, up to $w_{\max}=\Theta(d/\lambda)$. Hence there are $O(\log(1/\lambda))$ dyadic scales but $\Theta(1/\lambda)$ possible exact lengths. Theorem~3.21 of \cite{rsss19} handles unequal lengths through padded size-pair instances. In our implementation we avoid enumerating exact-length sublayers and instead retain the pair-specific value $D_{ij}:=\max\{|w_i|,|w_j|\}$ within each dyadic scale; see Section~\ref{sec:tech-ours}. Lemma~4.7 of \cite{rsss19} gives a compatible common subsequence of length $\Omega(\lambda n)-2\epsilon_0n$, where $\epsilon_0=\epsilon\lambda$, for one of $(A,B)$, $(B,A)$, $(\operatorname{rev}A,\operatorname{rev}B)$, and $(\operatorname{rev}B,\operatorname{rev}A)$, so the algorithm runs all four orientations and returns the largest certified value. Given lower bounds $M(w,w')\leq\lcs(w,w')$, a block dynamic program finds the best compatible value in $O(|W_A||W_B|)=\widetilde O(n^2/d^2)$ time.

\paragraph{Step 1: sparsification.}
For one equal-length or padded size-pair instance, put $\widetilde\lambda=\lambda^2/2$ and sample $\widetilde O(k^\gamma)$ centres $w_a$. For each centre and window $w_i$, compute a fixed LCS position set $\opt_{i,a}\subseteq w_a$ and a greedy packing $Y_{a,i}$ until the residual LCS falls below $|w_a|\widetilde\lambda/2$. The test $\|\lcs_{w_a}(w_i,w_j)\|\geq\widetilde\lambda$, where $\lcs_{w_a}(w_i,w_j)=\lcs(\opt_{i,a},w_j)$, certifies the lower bound $\lcs(w_i,w_j)/\max\{|w_i|,|w_j|\}\geq\lambda^4/16$ used in Theorem~3.21 of \cite{rsss19}. If at least $2/\lambda$ windows have normalized LCS at least $\lambda$ with one centre, Cauchy--Schwarz gives two position sets with overlap at least $\lambda^2/2$. Call a pair close if at least $k^{1-\gamma}$ centres witness it; $\widetilde O(k^\gamma)$ sampled centres find every close pair with high probability. If more than $2k^{2-\gamma}/\lambda$ pairs were missed, a maximum-degree vertex would have more than $2k^{1-\gamma}/\lambda$ missed neighbours. That neighbourhood has no independent set of size $2/\lambda$, so Tur\'an's theorem \cite{turan41} produces a vertex with at least $k^{1-\gamma}$ common neighbours, making the corresponding missed pair close, a contradiction. Thus at most $2k^{2-\gamma}/\lambda$ pairs are missed. Apart from the exact-size multiplicity above, Step~1 costs $\widetilde O(k^{1+\gamma}w_{\max}^2+k^{2+\gamma}w_{\max})$.

\paragraph{Step 2: nearby search.}
Write $k^{2-\eta}$ for the missed-pair bound. Following \cite{cdgks18}, windows on the same side are nearby when their starts differ by at most $W:=k^{\eta/2}w_{\max}$. Sample $A$-rows with probability $p=\min\{1,\widetilde O(k^{-\eta/2}/\epsilon_{\mathrm{nbs}})\}$, inspect every sampled row exactly, and recompute all nearby pairs around each discovered miss. Along a fixed optimal compatible path, every missed pair in a cluster of at least $\epsilon_{\mathrm{nbs}}k^{\eta/2}$ relevant pairs is hit with high probability; the remaining pairs contribute at most $2n\epsilon_{\mathrm{nbs}}$. Taking $\epsilon_{\mathrm{nbs}}=\Theta(\lambda^4)$ gives cost $\widetilde O(w_{\max}^2k^{2-\eta/2}/\lambda^6)$.

\paragraph{Small $\lambda$ and the exponent.}
Below a crossover $\kappa$, RSSS19 samples $A$ at rate $\lambda^3$ and greedily extends against $B$ in $\widetilde O(n^2\lambda^7)$ time. With $d=\sqrt n\lambda$, $\gamma=2/3$, and $\kappa=n^{-1/140}$, the two Step~1 exponents are $155/84\approx1.8452$ and $389/210\approx1.8524$, the Step~2 exponent is $1589/840\approx1.8917$, and the small-$\lambda$ exponent is $39/20$; hence the maximum is $39/20$. These numbers reproduce Theorem~2.1 of \cite{rsss19}, while Section~\ref{sec:tech-ours} removes the exact-size multiplicity above.

\begin{figure}[p]
\centering
\includegraphics[width=0.88\linewidth,height=0.25\textheight,keepaspectratio]{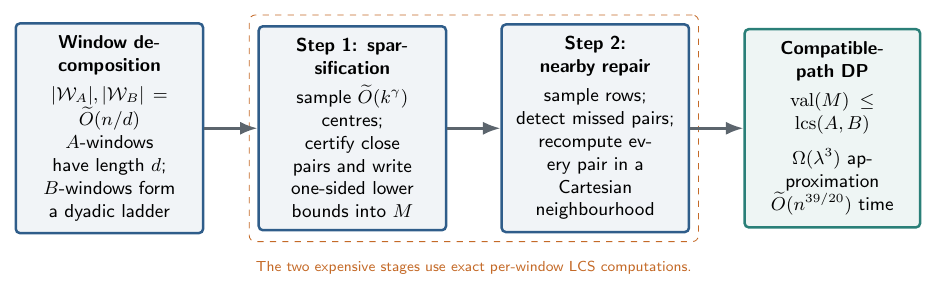}
\caption{The RSSS19 pipeline summarized in Section~\ref{sec:tech-rsss}. The two expensive middle stages compute and repair window-pair estimates before the compatible-path dynamic program.}
\label{fig:rsss_pipeline}

\vspace{1em}

\includegraphics[width=\linewidth]{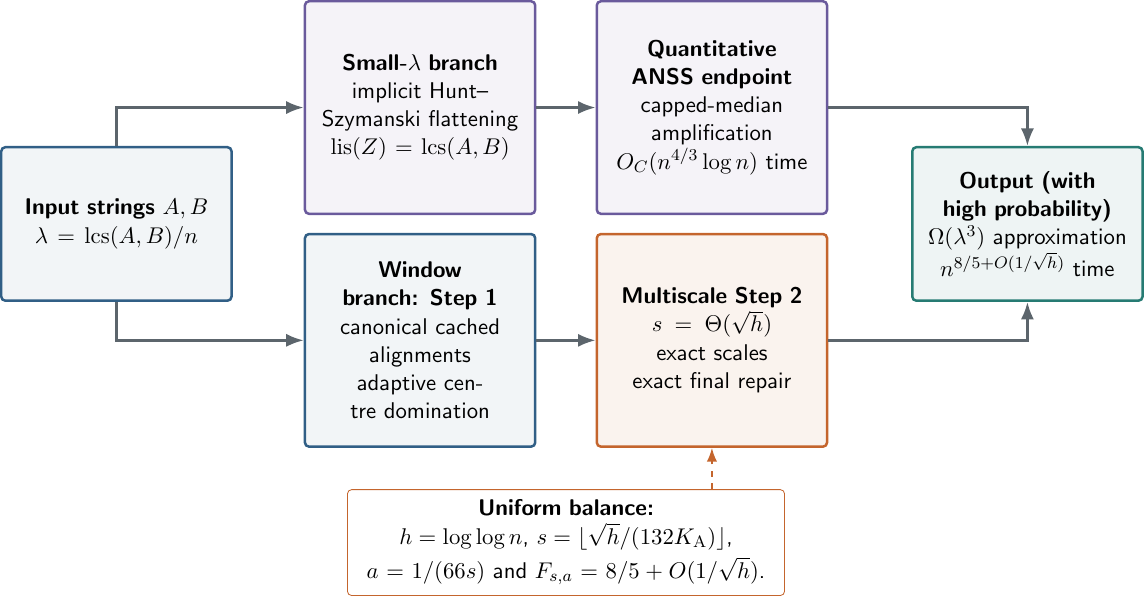}
\caption{The $8/5+o(1)$ algorithm of Theorem~\ref{thm:uniform_eight_fifths_formal}. The low-density branch uses the quantitative ANSS endpoint, while the window branch uses adaptive centre domination and $\Theta(\sqrt{\log\log n})$ scales of exact localization.}
\label{fig:our_approach_pipeline}
\end{figure}

\subsection{Summary of Our Approach}
\label{sec:tech-ours}

\paragraph{Re-balancing.} The four costs of Section~\ref{sec:tech-rsss} are not
balanced against one another: at the parameters of \cite{rsss19} they read
$n^{1.8452}$, $n^{1.8524}$, $n^{1.8917}$ and $n^{1.95}$. Treating $d$, $\gamma$ and $\kappa$ as free
and equalizing them already gives $\widetilde O(n^{151/79})$, with no change to the algorithm (Remark~\ref{rem:reproduce}).

\paragraph{Exploiting the asymmetry of the window construction.} Only $W_B$ carries
the ladder of lengths; every window of $W_A$ has length exactly $d$. Two things
follow. First, the estimate table is indexed by $W_A\times W_B$ and the dynamic
program reads nothing else, so the graph of missed pairs may be taken bipartite;
taking its maximum-degree vertex on the $B$ side puts the whole neighbourhood, and
hence every witness the Tur\'an argument produces, inside $W_A$. Centres may therefore be confined to $W_A$, where they all have length $d$ (Lemma~\ref{lem:bipartite}). Second, every LCS
the algorithm then computes has an endpoint of length $d$, so the per-pair cost is $P = \widetilde O(d^2 + w_{\max})$ rather than $w_{\max}^2$ (Lemma~\ref{lem:perpair}): the padding used in
\cite{rsss19} to equalize window lengths is an analytical device, and the dynamic
program may run unpadded.

\paragraph{Step 2.} Two corrections for the nonrecursive baseline, in which exact
detection and exact neighborhood repair pay the same per-pair cost. The entries of
$M$ are LCS values of an $A$-window against a $B$-window, hence bounded by $d$,
so an undetected pair contributes error $d$ and not $w_{\max}$. In this common-cost
model the radius $W$ is a free parameter: detection costs $\Theta(1/W)$ and
recomputation costs $\Theta(W)$, so the choice
$W = k^{\eta/2}w_{\max}$ of \cite{rsss19} sits a factor $w_{\max}/d$ off the
optimum $W = k^{\eta/2}d$. Together these replace $\lambda^{-6}$ by $\lambda^{-4}$ (Lemma~\ref{lem:step2}).
Within this additive-error calculation the power $\lambda^{-4}$ is forced:
the lost mass must remain below $\epsilon$ times a capped solution of size
$\Theta(\lambda^4 n)$. This is not a lower bound on other repair mechanisms.

\paragraph{An intersection surrogate.} The test of \cite{rsss19} asks whether
$\lcs(\opt_{i,a},w_j)$ is large. This is an LCS against a whole string, costing
$O(w_{\max})$ per pair and $k^{2+\gamma}w_{\max}$ in all. We replace it by
$I_a(i,j) := |\opt_{i,a}\cap Y_{a,j}|$, an intersection of two subsets of the
position set of the centre. It is sound: $Y_{a,j}$ is a union of $m\le4/\lambda^2$
disjoint common subsequences of $w_a$ and $w_j$, so some constituent meets
$\opt_{i,a}$ in a $1/m$ fraction of $I_a(i,j)$, and composing that with
$\opt_{i,a}$ gives $\lcs(w_i,w_j)\ge I_a(i,j)\lambda^2/4$. It is complete because
$Y_{c,b}$ is grown until the residual has LCS below $d\lambda^2/4$: splitting
$\opt_{a,c}$ across $Y_{c,b}$ yields
$|\opt_{a,c}\cap Y_{c,b}| \ge \lambda^2D_{ab}/2 - d\lambda^2/4 \ge \lambda^2D_{ab}/4$,
and the certificates the Tur\'an argument produces are exactly of this shape (Lemmas~\ref{lem:sound} and~\ref{lem:complete}).

\paragraph{One round, one matrix product: the oblivious Step~1.} A round is a set of inner
products of $0/1$ vectors over the position set of the centre. For a threshold
$\theta$ and a dyadic $B$-layer $(d_q/2,d_q]$, sample each centre coordinate
with probability
$p_q=\min\{1,2C\log n/(\theta^2d_q)\}$. After the matrix product, an eligible
entry $(i,j)$ is compared with the pair-specific threshold
$(3/16)p_q\theta^2D_{ij}$. A Chernoff bound separates
$I_a(i,j)\geq\theta^2D_{ij}/4$ from
$I_a(i,j)<\theta^2D_{ij}/8$ simultaneously for all pairs (Lemma~\ref{lem:sketch}). An accepted entry
therefore certifies
$\lcs(w_i,w_j)\geq\theta^4D_{ij}/32$, and this certified value, rather than
the raw product entry, is written into $M$. Ineligible pairs are masked, and a
fresh sample is used for each threshold, layer, and centre round.

With sufficiently small fixed crossover slack, the sampled dimension is $k^{o(1)}$ throughout the window branch. Since the
dual exponent of rectangular matrix multiplication is positive, the product of
a $k\times k^{o(1)}$ matrix by a $k^{o(1)}\times k$ matrix takes $k^{2+o(1)}$ time (Lemma~\ref{lem:round}). Rounds cannot be batched together: concatenating $g$ of them
computes $\sum_a I_a(i,j)$, and a pair may clear the threshold only in the sum.
That would not provide a one-centre certificate and could violate $M(w,w')\leq\lcs(w,w')$; batching is therefore ruled out, and the next paragraph escapes the resulting $k^{2+\gamma}$ cost by adaptivity instead. The window branch is run in the four
swap-and-reversal orientations required by the structural window lemma, and the
algorithm returns the largest of the four one-sided numerical estimates.

\paragraph{Adaptive centre domination.} The oblivious round tests every pair against every sampled centre, which is what makes Step~1 cost $k^{2+\gamma}$; but the staircase needs only that few $\theta$-high pairs remain unmarked, not a complete table. Fix a $B$-window $w_a$, write $D_a=\max\{|w_a|,d\}$, and let $S_b\subseteq[|w_a|]$ be the positions of $w_a$ used by the canonical alignment with $w_b$, so $|S_b|\ge\theta D_a$ for every high neighbour $b$. Counting incidences shows that among any $\lceil2/\theta\rceil$ high neighbours some two satisfy $|S_b\cap S_c|\ge\theta^2D_a/2$, so the graph joining such pairs has independence number below $2/\theta$ (Lemma~\ref{lem:anchor_independence}); and such an edge is visible to a single centre, because the shared positions compose into a common subsequence of $\opt_{a,c}$ and $w_b$ (Lemma~\ref{lem:edge_marks}). Hence $O(1/\theta)$ centres dominate every anchor. They cannot be identified by inspection, but $O(1/\theta)$ independent pools of $\widetilde O(k^\gamma\theta)$ centres suffice: selecting from each pool one high centre that is still unmarked either drives the unmarked set below $\widetilde O(k^{1-\gamma}/\theta)$ or exhibits an independent set that is too large (Lemma~\ref{lem:pool_domination}). A selected centre is then tested against all windows without any matrix product, by retaining $\widetilde O(\theta^{-2})$ coordinates of $\opt_{a,c}$ and walking the inverted lists of the sets $Y_{c,b}$ (Lemma~\ref{lem:sparse_filter}). This replaces $k^{2+\gamma}$ by $k^2\nu^{-3}$ (Lemma~\ref{lem:adaptive_step_one}), and the window branch of Theorem~\ref{thm:explicit_anss_free_formal} uses no rectangular matrix multiplication at all.

\paragraph{Small \texorpdfstring{$\lambda$}{lambda}.}
Section~\ref{sec:anss_endpoint} turns the ANSS sublinear-LIS estimator into
an LCS endpoint through the Hunt--Szymanski match sequence.  Its loss is
$n^{O(1/\sqrt{\log\log n})}$ and worst-case time
$O(n^{4/3}\log n)$ after capping and median amplification.  At a fixed
crossover this loss is $n^{o(1)}$; at a crossover
$n^{-\Theta(1/\sqrt{\log\log n})}$ it is absorbed by $\lambda^{-3}$.
Separately, the explicit ANSS-free
Theorem~\ref{thm:explicit_anss_free_formal} runs the
algorithm of Bringmann, Cohen-Addad and Das \cite{bcd21} with
$\varepsilon_{\mathrm{BCD}}=30/127$, giving approximation factor
$\widetilde O(n^{12/127})$ in $O(n^{224/127})$ time.

\paragraph{Multiscale exact localization.}
Algorithm~7 of \cite{rsss19} uses per-pair LCS computations both to detect an
underestimated sampled pair and to repair its Cartesian neighborhood. We keep
these computations exact but localize repeatedly. Sampled detection identifies
deficient cells of the frozen staircase target of
Lemma~\ref{lem:staircase_cap}; every trigger opens a box, and the next scale
samples afresh inside the union of those boxes. Only the final candidate region
is repaired exhaustively. Conditioning first on Step~1 and then fixing the
canonical comparator path makes the fresh row samples independent of that
path, which is the probability space used in Lemma~\ref{lem:capped_repair} and
Remark~\ref{rem:good_event}.

\paragraph{The explicit ANSS-free balance.}
With three scales, choose
$a=4/127$, $x=106/127$, and $\gamma=38/53$. At the crossover the radii are
$n^{46/127}$, $n^{36/127}$, and $n^{26/127}$. Centre preprocessing, adaptive
filtering, the four exact-localization costs, and the BCAD call are then all at
most $\widetilde O(n^{224/127})$; the block dynamic program and remaining
preprocessing are smaller. Corollary~\ref{cor:explicit_three_scale} performs
the balance, and Theorem~\ref{thm:explicit_anss_free_formal} supplies the density-oblivious
wrapper, padding, exact arithmetic, and global failure bound.

\paragraph{The $8/5$ balance.}
Lemma~\ref{lem:one_level_adaptive} gives the window cost
\[
 \frac85+\frac4{25s+10}+\frac{21s}{5s+2}a.
\]
Here $s$ is the number of localization scales and $a$ is the crossover
exponent.
Choosing fixed $s=\Theta(1/\delta)$ and fixed $a=\Theta(1/s)$ proves
Theorem~\ref{thm:fixed_delta_formal}.  Choosing
$s=\Theta(\sqrt{\log\log n})$ and
$a=\Theta(1/\sqrt{\log\log n})$, and balancing against the quantitative
ANSS endpoint, proves Theorem~\ref{thm:uniform_eight_fifths_formal}.  The
uniform factors in the multiscale lemma are only
$s^{O(1)}(\log n)^{O(1)}$, so they fit inside the same stated error term.

\section{Preliminaries}
\label{sec:preliminaries}

Section~\ref{sec:basic_notation} introduces our notation for strings, LCS densities, and compatible window paths, and specifies the approximation guarantees and computational model.
Section~\ref{sec:previous_tools} collects the results from previous work on the RSSS window framework, small-density LCS approximation, sublinear LIS estimation, extremal graph theory, and rectangular matrix multiplication.

\subsection{Basic notation}
\label{sec:basic_notation}

For a positive integer $m$, let $[m]:=\{1,\ldots,m\}$. Strings are indexed
from one. For a string $x$, we write $|x|$ for its length and $x[i..j]$ for the
substring occupying positions $i,\ldots,j$. A \emph{window} is a contiguous
substring, identified with both its interval and the string on that interval.

For strings $x$ and $y$, let $\lcs(x,y)$ denote the length of a longest common
subsequence. For a sequence $z$ over a totally ordered set, let $\lis(z)$
denote the length of a strictly increasing subsequence. The main input consists
of two strings $A,B\in\Sigma^n$. We use the standard word-RAM model with
$\Theta(\log n)$-bit words. At the start of the algorithm, the at most $2n$
input symbols are coordinate-compressed to integers in $[2n]$: if the symbols
are not already word-sized identifiers, sorting and deduplication takes
$O(n\log n)$ comparisons. All later equality tests therefore use word-sized
integers, and the private padding symbols are fresh integers outside this range.
This preprocessing is negligible for every running-time bound below. We use
\[
L:=\lcs(A,B),
\qquad
\lambda:=\frac{L}{n}=\|\lcs(A,B)\|.
\]
For a pair of windows $(w_i,w_j)$, define
\[
L_{ij}:=\lcs(w_i,w_j),
\qquad
D_{ij}:=\max\{|w_i|,|w_j|\},
\qquad
\rho_{ij}:=\frac{L_{ij}}{D_{ij}}.
\]
Thus $\lambda$ is the global LCS density and $\rho_{ij}$ is the normalized
density of a window pair.

A numerical estimate $\widehat L$ is a \emph{one-sided $q$-approximation},
where $q\in(0,1]$, if
\[
q\lcs(A,B)\le \widehat L\le \lcs(A,B).
\]
Equivalently, an approximation factor $Q\ge1$ means
$\lcs(A,B)/Q\le\widehat L\le\lcs(A,B)$. In particular, an
$\Omega(\lambda^3)$-approximation is one-sided unless stated otherwise.
This paper proves numerical estimates; it does not claim that the final
algorithm outputs a common-subsequence witness.

For two window families $W_A$ and $W_B$, a \emph{compatible window path} is a
sequence of pairs from $W_A\times W_B$ that can be ordered so that the windows
are pairwise disjoint and occur from left to right on both strings, as in
\cite{rsss19}. For a nonnegative table $M$ indexed by $W_A\times W_B$, let
\[
\operatorname{val}(M)
:=
\max_{P}\sum_{(i,j)\in P}M_{ij},
\]
where the maximum is over compatible window paths. We call $M$ one-sided if
$0\le M_{ij}\le L_{ij}$ for every entry. Compatibility then implies
$\operatorname{val}(M)\le\lcs(A,B)$.

The notation $\widetilde O(\cdot)$ suppresses factors polylogarithmic in the
input length. Constants may depend on fixed accuracy, slack, and failure
parameters. With high probability means failure probability at most $n^{-C}$
for any fixed requested constant $C>0$, after amplification. All random choices
are independent unless explicitly coupled.

\subsection{Tools from previous work}
\label{sec:previous_tools}

\paragraph{The RSSS window framework.}
Rubinstein, Seddighin, Song, and Sun introduced the window decomposition,
close-pair sparsification, compatible-path dynamic program, and nearby-repair
architecture on which our algorithm is based \cite{rsss19}. Their complete
result is the following.

\begin{lemma}[\cite{rsss19}, Theorem 2.1]
\label{lem:rsss}
There is a randomized algorithm that, given strings $A,B$ of length $n$ with
$\|\lcs(A,B)\|=\lambda$, computes an $\Omega(\lambda^3)$-approximation of
$\lcs(A,B)$ in time $\widetilde O(n^{39/20})$.
\end{lemma}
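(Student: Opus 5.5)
This statement is quoted from \cite{rsss19}, so the plan is not to re-derive it but to check that the cited theorem gives exactly the form we use. We also want to recover the exponent $39/20$ from the cost accounting of Section~\ref{sec:tech-rsss}, since later remarks (Remark~\ref{rem:reproduce}) rely on that accounting. Concretely, we follow the algorithm of \cite{rsss19} and split on a crossover density $\kappa$. For $\lambda\ge\kappa$ we use the window branch. For $\lambda<\kappa$ we use the sampling branch. We run both branches in the four swap-and-reversal orientations and return the largest one-sided estimate. Because $\lambda$ is unknown, we run both branches for all $O(\log n)$ dyadic guesses of $\lambda$; the logarithmic overhead is absorbed in $\widetilde O$.

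\textbf{Approximation guarantee.} For the window branch, we combine three facts.
\begin{enumerate}[label=(\roman*)]
\item Lemma~4.7 of \cite{rsss19} supplies a compatible path of value $\Omega(\lambda n)-2\epsilon_0 n$ in one orientation.
\item The Step~1 certificates give entries $M(w,w')\le\lcs(w,w')$ with normalized value at least $\lambda^4/16$ on every close pair.
\item After the Tur\'an and Cauchy--Schwarz argument, the missed pairs number at most $2k^{2-\gamma}/\lambda$, and nearby search repairs all but an $O(\epsilon_{\mathrm{nbs}} n)$ mass of them.
\end{enumerate}
Capping each pair's contribution at density $\Theta(\lambda^4)$ along the comparator path gives a certified value of order $\lambda^3\cdot \lambda n$ once $\epsilon_{\mathrm{nbs}}=\Theta(\lambda^4)$. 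One-sidedness then yields $\operatorname{val}(M)\le\lcs(A,B)$. For the sampling branch, sampling $A$ at rate $\lambda^3$ and greedily extending against $B$ gives an $\Omega(\lambda^3)$ one-sided estimate, by the analysis of \cite{rsss19}.

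\textbf{Running time.} We substitute $d=\sqrt n\,\lambda$, $k=n/d$, $w_{\max}=\Theta(d/\lambda)$, $\gamma=2/3$ and $\kappa=n^{-1/140}$ into the four costs, evaluating each at its worst admissible $\lambda$. These costs are $\widetilde O(k^{1+\gamma}w_{\max}^2)$, $\widetilde O(k^{2+\gamma}w_{\max})$, $\widetilde O(w_{\max}^2k^{2-\eta/2}/\lambda^6)$ for Step~2, and $\widetilde O(n^2\lambda^7)$ for the small-$\lambda$ branch. Each window-branch cost also carries the $\Theta(1/\lambda)$ exact-size multiplicity. The resulting exponents are $155/84$, $389/210$, $1589/840$ and $39/20$, whose maximum is $39/20$.

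The main obstacle is the bookkeeping of the $\lambda$-dependence. The worst $\lambda$ sits at the crossover for the window costs but at the top of the range $\lambda\le\kappa$ for the sampling cost. The exact-length multiplicity and the padded size-pair instances of Theorem~3.21 of \cite{rsss19} must also be charged correctly. I would settle this by deferring to the explicit computation in \cite{rsss19}, since the lemma is cited rather than reproved here.
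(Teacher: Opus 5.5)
Your plan matches what the paper does: Lemma~\ref{lem:rsss} is imported from \cite{rsss19} without proof, and the only supporting calculation in the paper is the exponent bookkeeping of Section~\ref{sec:tech-rsss}, which you reproduce with the same parameters $d=\sqrt n\,\lambda$, $\gamma=2/3$, $\kappa=n^{-1/140}$. One small slip: $155/84$, $389/210$ and $1589/840$ are the exponents of the displayed costs \emph{without} the exact-size multiplicity (the overview states them ``apart from'' it), not with it; charging an extra factor $1/\lambda=n^{1/140}$ would raise each by $1/140$, to at most $1595/840<39/20$, so the maximum is still $39/20$.
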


\begin{definition}[Window-compatible common subsequence; \cite{rsss19}, Definition~4.1]
\label{def:window_compatible}
Let $\mathcal W_A$ and $\mathcal W_B$ be families of windows of $A$ and
$B$. Let $\mathcal S=\langle w_1,\ldots,w_\alpha\rangle$ and
$\mathcal S'=\langle w'_1,\ldots,w'_\alpha\rangle$ be sequences of
pairwise non-overlapping windows from $\mathcal W_A$ and $\mathcal W_B$,
respectively, ordered by their starting positions. A common subsequence of $A$
and $B$ is \emph{window-compatible with respect to}
$(\mathcal S,\mathcal S')$ if it is the concatenation, in this order, of a
common subsequence of $(w_i,w'_i)$ for every $i\in[\alpha]$. It is
\emph{window-compatible} if such sequences $\mathcal S,\mathcal S'$ exist.
\end{definition}

\begin{definition}[Window construction for arbitrary $\lambda$; \cite{rsss19}, Definition~4.5]
\label{def:rsss_windows}
Fix a base window length $d$ and a parameter $\epsilon_0\in(0,1)$. Put
$f:=\log_2(1/\epsilon_0)$, $d_i:=d2^i$, and $t_i:=n/d_i$; as in
\cite{rsss19}, we suppress immaterial integer rounding. The $A$-windows and
the zeroth layer of $B$-windows are
\[
\mathcal W_A:=\{A[yd+1..(y+1)d]:0\leq y<t_0\},
\qquad
\mathcal W_B^0:=\{B[yd+1..(y+1)d]:0\leq y<t_0\}.
\]
For every $i\in\{1,\ldots,f\}$, define
\[
\mathcal W_B^i:=\{B[yd_i+1..yd_i+d_i/2+xd]:
1\leq x\leq d_i/(2d),\ 0\leq y<t_i\}.
\]
Finally, set $\mathcal W_B:=\mathop{\cup}_{i=0}^f\mathcal W_B^i$. Thus the
windows in layer $i$ start at multiples of $d_i$, and their lengths are
multiples of $d$ lying in $(d_i/2,d_i]$.
\end{definition}

\paragraph{Rounding, padding, and orientation convention.}
Algorithm~\ref{alg:approximate_lcs} fixes a common padded pair before taking any
orientation. More precisely, its base length $d$, its reciprocal-power-of-two
parameter $\epsilon_{\mathrm{win}}$, and its dyadic guess grid make every largest
layer scale $d_f=d/(\epsilon_{\mathrm{win}}\nu)$ a power of two. Let
$d_{\max}$ be the largest of these scales. The algorithm right-pads the original
two input strings once, using private padding alphabets disjoint from each other
and from the input alphabet, to the least common length
$n^+<n+d_{\max}\le2n$ divisible by $d_{\max}$. It then forms the four
swap/reversal orientations of this one padded pair. Thus a reversed call has
the padding on the left, exactly as the reversal of the common padded string
requires. Since every $d_f$ divides $d_{\max}$, all $f,d_i,t_i$, window starts,
and allowed $B$-window lengths are integers for every guess, and there is no
final partial $A$-window. The LCS is unchanged because no padding symbols match.
The structural lemma is applied to the common padded pair, so its four
orientations are exactly the four calls made by the algorithm.

\begin{lemma}[Window-compatible structural lemma;
 \cite{rsss19}, Lemma~4.7]
\label{lem:rsss_window_compatible}
There is an absolute constant $c_{\mathrm{str}}>0$ with the following
property. Let $X,Y$ be strings of length $n$, and let $\epsilon_0>0$. For at
least one of the four oriented pairs
\[
(A,B)\in\{(X,Y),(Y,X),(\operatorname{reverse}(X),\operatorname{reverse}(Y)),
(\operatorname{reverse}(Y),\operatorname{reverse}(X))\},
\]
the following holds. If $\|\lcs(A,B)\|\geq\lambda$, then the window
sets from Definition~\ref{def:rsss_windows} contain a window-compatible
common subsequence of $A$ and $B$ of length at least
$c_{\mathrm{str}}\lambda n-2\epsilon_0 n$.
\end{lemma}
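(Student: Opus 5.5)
The statement just given, Lemma~\ref{lem:rsss_window_compatible}, is imported from \cite{rsss19}. The plan is therefore to recall the structure of their argument and check that it applies verbatim to the window sets of Definition~\ref{def:rsss_windows}, now built on the common padded pair. The approach is the usual ``snap an optimal alignment to windows'' argument. Fix an optimal common subsequence of $X,Y$ of length $\lambda n$. View it as a monotone matching between the $A$-windows of length $d$ and positions of $B$.

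First, I would partition the $A$-windows by how many matched symbols they carry and how spread out their partners in $B$ are. An $A$-window $w$ carrying $\ell$ matches has its partners in some interval $I_w$ of $B$. If $\ell<\epsilon_0 d$, the window is discarded; the total loss from such windows is at most $\epsilon_0 n$. For the remaining windows, the matched symbols have density at least $\ell/|I_w|$ inside $I_w$. Since the intervals $I_w$ are disjoint and ordered, a window with $|I_w|>d/\epsilon_0$ would contribute matches at density below $\epsilon_0$ relative to its $B$-span. The total length of $B$ consumed by windows with large span is at most $n$, so the matches lost by discarding them are again at most $\epsilon_0 n$. This is the source of the $2\epsilon_0 n$ term, and it explains why the top layer is $f=\log_2(1/\epsilon_0)$.

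Second, each surviving interval $I_w$, of length at most $d_f$, is rounded to a $B$-window in some layer $i$ with $d_i/2<|I_w|\le d_i$. Layer-$i$ windows start only at multiples of $d_i$, so $I_w$ may straddle a grid point. This is where the four orientations enter. Either one can cover a constant fraction of the matches of $I_w$ by a window starting at the grid point to its left, or one reverses the strings and covers from the other side. Swapping the roles of $X$ and $Y$ handles the case where the $B$-side, rather than the $A$-side, is the short one. An averaging argument then shows that one of the four orientations retains a $c_{\mathrm{str}}$ fraction of the total. The covering windows must also be made pairwise disjoint. I would do this by keeping every other chosen window within each dyadic layer and resolving cross-layer conflicts greedily by the larger layer. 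This costs only another constant factor, because each $B$-window overlaps $O(1)$ of the chosen windows of comparable or larger scale.

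The main obstacle is this grid-alignment and disjointness step: making sure the constant $c_{\mathrm{str}}$ does not degrade with the number $f$ of layers. I would follow \cite{rsss19}, Lemma~4.7, and charge conflicts only within a single dyadic scale. Finally, I would verify that padding does not disturb the argument. The padding symbols never match, so they only add unmatched positions. The padded length satisfies $n^+\le2n$, so $\lambda$ changes by at most a factor two, which is absorbed into $c_{\mathrm{str}}$.
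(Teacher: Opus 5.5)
\textbf{Verdict: there is a genuine gap.} The disjointness step fails as you justify it, and the four-orientation averaging is asserted rather than proved.

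\textbf{What the paper does.} The paper gives no proof of this statement. It imports \cite[Lemma~4.7]{rsss19} as a black box. Its only addition is the pad-first convention, which makes window starts and lengths integral and makes the four orientations of the common padded pair exactly the four calls. Your closing remark about $n^+\le 2n$ therefore belongs to the later application, not to the lemma.

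\textbf{Where your reconstruction breaks.} The ingredients and the $2\epsilon_0 n$ accounting are fine; the load-bearing step is not.
\begin{itemize}
\item You choose the layer from the full length $|I_w|$. A layer-$j$ window must start at a multiple of $d_j$ and have length $>d_j/2$, so it can overshoot $I_w$ by a constant fraction of $d_j$.
\item That stretch of $B$ can be tiled by the spans of $\Theta(d_j/d)$ other $A$-windows matched at density one.
\item Keeping the larger-layer window then keeps at most $d$ matches and discards up to about $d_j/2$. For $j$ near $f$ this is about $d/(2\epsilon_0)$, so the loss is not a constant factor.
\item Your justification (a window meets only $O(1)$ windows of comparable or larger scale) bounds the number of conflicts per small window. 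It does not bound the weight that one kept window destroys.
\end{itemize}

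The orientation step is only asserted. The $A$-side block partitions of $(X,Y)$ and $(Y,X)$ differ, so there is no per-window average. You give no classification of matched pairs showing that what is lost in one orientation is recovered in another. Also, spans with $|I_w|\le d/2$ fall in no layer under your rule. These spans, where several $A$-windows compete for one length-$d$ block of $B$, are exactly the case the swap is needed for.

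\textbf{One way to close the gap} is to make disjointness automatic.
\begin{itemize}
\item Split $I_w=[s,e]$ at the point $g\in[s-1,e]$ that is a multiple of the largest possible $d_j$ ($j\le f$). Both pieces then have length at most $d_j$, and $g$ is a multiple of every $d_{j'}$ with $j'\le j$.
\item Cover $[g+1,e]$ by $B[g+1..g+L]$, with $L$ the least admissible length in the smallest layer that fits. The overshoot is then less than $d$, so the window is exactly the union of the $d$-blocks from $g+1$ to the block containing $e$.
\item Cover $[s,g]$ symmetrically in the reversed orientation, where windows end at grid points.
\item The next span's split point is a multiple of $d$ and is at least $e$, so these windows are pairwise disjoint. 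The only exception is a span lying inside a single $d$-block of $B$ that also meets another span; drop those $A$-windows.
\item Such a $B$-block has partners in two distinct $A$-blocks. In the swapped orientation its span is therefore not inside one block, so it is not dropped there.
\item Discarding spans longer than $d_f$ costs at most $\epsilon_0 n$ in each of the two swap classes.
\end{itemize}
Every remaining matched pair is thus covered in at least one of the four orientations. Hence one orientation achieves at least $(\lambda n-2\epsilon_0 n)/4$, which gives $c_{\mathrm{str}}=1/4$.
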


We also use
the numerical block dynamic program and
nearby-repair organization in their Algorithms 6 and 7. Whenever our argument
changes one of these interfaces, the modified statement is proved explicitly.

\paragraph{The small-density approximation.}
Bringmann, Cohen-Addad, and Das give the tradeoff used in the first
small-$\lambda$ branch.

\begin{lemma}[\cite{bcd21}, Theorem~1]
\label{lem:bcd}
For every $0<\varepsilon\le1$ there is a randomized algorithm that, given two
strings of length $n$, returns a one-sided estimate with approximation factor
$\widetilde O(n^{2\varepsilon/5})$ in time $O(n^{2-\varepsilon})$ with high
probability.
\end{lemma}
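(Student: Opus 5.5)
This statement is imported from \cite{bcd21}, so the plan is to check that their Theorem~1 matches the exact interface used here: a \emph{one-sided} estimate, in the sense of Section~\ref{sec:basic_notation}, with failure probability $n^{-C}$ for every fixed $C$. There are three things to check: that the output never exceeds $\lcs$, the factor, and the probability bound with its time cost.

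\textbf{One-sidedness.} I would first argue that the value returned by the Bringmann--Cohen-Addad--Das algorithm is always at most $\lcs$. Their procedure assembles its answer from common subsequences that it actually finds. These are exact LCS computations on sampled substrings, greedy matchings, and concatenations along increasing block sequences. So the returned number is the length of a genuine common subsequence of the two inputs, and this holds deterministically, whatever the random choices were. If their write-up instead returns a numerical estimate that can overshoot, I would replace it by the certified length of the witness it builds, which only lowers the value. Since the lower bound $\lcs/\widetilde O(n^{2\varepsilon/5})$ is proved for the witness, the factor is unchanged.

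\textbf{Probability amplification.} Because every run is one-sided, independent runs can be combined by taking their maximum. The maximum is still at most $\lcs$, and it meets the approximation bound as soon as a single run succeeds. If \cite{bcd21} guarantees only constant success probability, then $O(C\log n)$ repetitions bring the failure probability down to $n^{-C}$.

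\textbf{Running time.} The repetitions multiply the time by $O(\log n)$. This factor is the main, though minor, obstacle, because the statement promises $O(n^{2-\varepsilon})$ and not $\widetilde O(n^{2-\varepsilon})$. I would try two fixes, in this order.
\begin{itemize}
\item First, check whether Theorem~1 of \cite{bcd21} is already stated with high probability. In that case there is nothing to amplify.
\item Otherwise, run their algorithm with a slightly larger parameter $\varepsilon'=\varepsilon+1/\log n$. This costs only a constant factor in the approximation, since $n^{2/(5\log n)}=O(1)$. It saves a factor $n^{1/\log n}$ in time, which is only a constant and does not absorb $\log n$. So the cleaner fix is $\varepsilon'=\varepsilon+\Theta(\log\log n/\log n)$. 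That saves a $\log^{\Theta(1)}n$ factor in time, and the resulting polylogarithmic loss in the factor is absorbed by the $\widetilde O$ in $\widetilde O(n^{2\varepsilon/5})$.
\end{itemize}
This parameter shift is valid whenever $\varepsilon'\le 1$. At the endpoint $\varepsilon=1$, near-linear time together with $O(\log n)$ overhead has to be checked directly against their bound.
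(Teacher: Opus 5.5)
Your proposal is correct and matches the paper. The paper gives no separate proof: it imports Theorem~1 of \cite{bcd21} as a black box, and one-sidedness is part of the cited approximation guarantee. That theorem is already stated with high probability in time $O(n^{2-\varepsilon})$, so your first bullet is the case that applies. Raising the failure exponent to an arbitrary fixed $C$ then costs only $O(C)$ independent runs combined by a maximum, as in the proof of Theorem~\ref{thm:small_density_eight_fifths}; the $\varepsilon$-shift fallback, which would also need the hidden constants of \cite{bcd21} to be uniform in $\varepsilon$, is never needed.
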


\paragraph{Sublinear LIS estimation.}
The endpoint argument uses the following ordinary-LIS lemma of
Andoni, Nosatzki, Sinha, and Stein.

\begin{lemma}[\cite{anss22}, Theorems~1.1 and~4.3]
\label{lem:anss}
Let $1/N<\mu<1$, suppose $\mu=o(1)$, and put
$\epsilon=1/\log\log(1/\mu)$.  There is a randomized non-adaptive algorithm
that, given random access to a sequence $y$ of length $N$ satisfying
$\lis(y)\ge\mu N$, has a joint success event on which its output satisfies
\[
\frac{\lis(y)}{\alpha}-\mu N\le\widehat L\le\lis(y),
\qquad
\alpha=(1/\mu)^{\sqrt\epsilon}
(\log(1/\mu))^{2^{O(\log^2(1/\epsilon)/\sqrt\epsilon)}}.
\]
The event holds with high probability, and the total expected running time of
the complete call is $(1/\mu)N^{o(1)}$ at this parameter schedule.
\end{lemma}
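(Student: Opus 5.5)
The statement is a specialization of the general sublinear-LIS tradeoff of \cite{anss22}, so the plan is to instantiate their parametric theorem at the schedule $\epsilon=1/\log\log(1/\mu)$ and check each clause. I will not reprove their estimator.

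First, I would recall the general form of \cite{anss22}, Theorem~4.3. For a free accuracy parameter $\epsilon\in(0,1)$ and a density lower bound $\mu$, their non-adaptive estimator has a success event on which $\widehat L\le\lis(y)$. On the same event it achieves a multiplicative loss of the form $(1/\mu)^{O(\sqrt\epsilon)}$ times a polylogarithmic factor whose exponent depends only on $\epsilon$, together with an additive loss of $\mu N$. Its query and time complexity is $(1/\mu)\cdot(\log N)^{g(\epsilon)}$ for an explicit function $g$.

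I would then substitute $\epsilon=1/\log\log(1/\mu)$. Because $\mu=o(1)$, this gives $\epsilon=o(1)$. The displayed $\alpha$ is then read off directly, with the $2^{O(\log^2(1/\epsilon)/\sqrt\epsilon)}$ exponent coming from the recursion depth of their construction.

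Second, I would verify one-sidedness, meaning the upper bound $\widehat L\le\lis(y)$. Their estimator only reports the lengths of increasing sequences it has certified. Upper-bound failures are therefore confined to the complement of the joint success event.

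To reach failure probability $N^{-C}$, I would take the median of $O(\log N)$ independent runs. On the intersection of the individual good events, this median inherits both the upper bound and the lower bound. The lower bound survives because, with high probability by Chernoff, more than half of the runs succeed. Non-adaptivity is preserved because the query sets of all the runs are fixed in advance.

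Third, the running time. The per-run expected cost at this schedule is $(1/\mu)\cdot(\log N)^{g(1/\log\log(1/\mu))}$. The main obstacle is checking that this factor is $N^{o(1)}$ uniformly over the range $1/N<\mu=o(1)$. I would first bound $g(\epsilon)\le 2^{O(\log^2(1/\epsilon)/\sqrt\epsilon)}$. Since $\log(1/\mu)\le\log N$, we have $\epsilon\ge 1/\log\log N$. Hence
\[
\log\bigl((\log N)^{g(\epsilon)}\bigr)\le 2^{O(\sqrt{\log\log N}\,(\log\log\log N)^2)}\log\log N=o(\log N).
\]
The $O(\log N)$ repetitions add only a logarithmic factor, so the total expected time of the complete call is $(1/\mu)N^{o(1)}$.
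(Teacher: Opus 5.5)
Your route is the paper's route: the paper gives no argument for this lemma beyond citing Theorems~1.1 and~4.3 of \cite{anss22}. Your computation that $(\log N)^{g(\epsilon)}=N^{o(1)}$ once $\epsilon\ge1/\log\log N$ is correct for the form of $g$ you assume.

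What your plan skips is the one point that actually needs justification. Since $\epsilon=1/\log\log(1/\mu)\to0$, substituting it into Theorem~4.3 is legitimate only if every constant hidden in that theorem is independent of $\epsilon$: the $O(\cdot)$ inside $\alpha$, the sample sizes, and the running time. A guarantee proved for each fixed $\epsilon$ with $O_\epsilon(\cdot)$ constants does not permit a vanishing $\epsilon$. The paper names its source for this uniformity, namely Claim~A.1 and the recursion analysis in Appendix~A of the full version of \cite{anss22}. It likewise takes the expected time $(1/\mu)N^{o(1)}$, including the precision-forest preprocessing, from the Appendix~A calculation. Your ``read off directly'' and your closed form $(1/\mu)(\log N)^{g(\epsilon)}$ take both of these for granted.

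Note too that a multiplicative loss $(1/\mu)^{O(\sqrt\epsilon)}$, as you recall it, does not directly give the displayed $(1/\mu)^{\sqrt\epsilon}$. You must rescale $\epsilon$ by a constant. That is harmless (it is absorbed into the $2^{O(\cdot)}$ exponent and the time bound), but only under the same uniformity. The unit constant also matters later: Theorem~\ref{thm:anss_lcs_endpoint} uses it to obtain $K_{\mathrm A}=2$.

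There are two smaller errors. First, your one-sidedness argument is false. An estimator reading only $(1/\mu)N^{o(1)}$ entries cannot in general exhibit an increasing subsequence of the length it reports. For example, with $\lis(y)=N/2$ and $\mu=1/\log N$, the lower bound forces an output of at least $N^{1-o(1)}$ after only $N^{o(1)}$ queries, so the estimator must extrapolate from samples. Moreover, if it reported only certified lengths, the upper bound would be deterministic. The upper inequality in fact holds only on the ANSS success event, which is exactly why the paper stresses that both inequalities live on one event.

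Second, in your median step the intersection of all per-run good events is not a high-probability event when each run succeeds only with constant probability. The right event is that a strict majority of runs succeed, which Chernoff makes likely. On that event the median lies in $[\lis(y)/\alpha-\mu N,\lis(y)]$, giving both bounds at once. The paper performs this amplification outside the lemma, in Theorem~\ref{thm:anss_lcs_endpoint}, after first capping the expected running time via Markov's inequality.
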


Here and throughout, ``increasing'' means strictly increasing. This agrees
with \cite[Definition~2.1]{anss22}, which uses the strict order on
$\mathbb N$.

The two inequalities in Lemma~\ref{lem:anss} are asserted on the same event;
we do not treat the upper inequality as deterministic.  Section~
\ref{sec:anss_endpoint} supplies the random-access implementation, converts
the expected time to a worst-case cap, and amplifies only a constant one-run
success probability by independent repetitions and a median.

\paragraph{Tur\'an's theorem.}
We use the following equivalent form of the classical extremal bound.

\begin{lemma}[\cite{turan41}]
\label{lem:turan_form}
Let $H$ be a graph on $q$ vertices. If the complement of $H$ contains no clique
of size $r$, where $r\ge2$, then
\[
|E(H)|\ge \frac{q^2}{2(r-1)}-\frac q2.
\]
\end{lemma}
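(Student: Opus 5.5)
The statement is Tur\'an's bound in complement form. Let $\overline H$ be the complement of $H$. By hypothesis $\overline H$ has no clique of size $r$, so $\overline H$ is $K_r$-free. The classical Tur\'an theorem then gives
\[
|E(\overline H)|\le\Bigl(1-\frac1{r-1}\Bigr)\frac{q^2}{2}.
\]
Subtracting from $\binom q2=\frac{q^2}{2}-\frac q2$ yields
\[
|E(H)|\ge\frac{q^2}{2}-\frac q2-\frac{q^2}{2}+\frac{q^2}{2(r-1)}=\frac{q^2}{2(r-1)}-\frac q2,
\]
which is exactly the claim. So the only real content is the Tur\'an upper bound. Since the lemma is cited to \cite{turan41}, one could simply invoke it, but I would include a self-contained argument as well.

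For a self-contained proof I would phrase the condition directly on $H$. A clique of size $r$ in $\overline H$ is an independent set of size $r$ in $H$, so we may assume $\alpha(H)\le r-1$. I would then apply the Caro--Wei bound:
\[
\alpha(H)\ge\sum_{v}\frac1{\deg_H(v)+1}.
\]
To prove it, take a uniformly random ordering of the vertices and keep each vertex that precedes all its neighbours. The kept set is independent, and each $v$ is kept with probability $1/(\deg_H(v)+1)$, so the expected size of the kept set is the sum above. Hence some ordering produces an independent set at least that large.

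Next I would use convexity of $x\mapsto 1/(x+1)$ via Jensen. With average degree $\bar d=2|E(H)|/q$,
\[
\sum_v\frac1{\deg(v)+1}\ge\frac{q}{\bar d+1}.
\]
Combining the two bounds gives $r-1\ge q/(\bar d+1)$, that is, $\bar d\ge q/(r-1)-1$. Therefore
\[
|E(H)|=\frac{q\bar d}{2}\ge\frac{q^2}{2(r-1)}-\frac q2,
\]
as stated.

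I see no real obstacle here. The one point to check is the direction of the complement. "The complement contains no clique of size $r$" must be read as "$H$ has no independent set of size $r$", so that the lower bound applies to $E(H)$ and not to $E(\overline H)$. The case $r\ge2$ keeps $r-1\ge1$, so the division is well defined.
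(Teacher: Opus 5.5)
Your proposal is correct. The paper gives no proof of this lemma. It invokes Tur\'an's theorem as an ``equivalent form'' of the classical bound, and your first paragraph (apply the $K_r$-free edge bound to $\overline H$, then subtract from $\binom q2$) is exactly that equivalence. Your Caro--Wei plus Jensen argument is a valid self-contained alternative. It lands directly on the stated form, $\bar d\ge q/(r-1)-1$, without needing the exact Tur\'an extremal number.
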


\paragraph{Rectangular matrix multiplication.}
Let $\alpha$ denote the dual exponent of matrix multiplication. The current
bound $\alpha>0.321$ \cite{wxxz24} implies the following consequence, which is
all that our proof uses.

\begin{lemma}[\cite{wxxz24}]
\label{lem:rect_mm}
For every fixed $\alpha_0<\alpha$, a $k\times s$ matrix can be multiplied by an
$s\times k$ matrix in $k^{2+o(1)}$ field operations whenever
$s\le k^{\alpha_0}$. Over fields represented by $O(\log n)$-bit words, the
word-RAM overhead is polylogarithmic.
\end{lemma}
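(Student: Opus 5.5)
This lemma is cited from \cite{wxxz24} rather than proved from scratch, so the plan is to derive the stated consequence from the definition of the dual exponent. We recall the definition first. For $\beta\ge0$, let $\omega(1,\beta,1)$ be the infimum of exponents $\tau$ such that a $k\times k^\beta$ matrix times a $k^\beta\times k$ matrix can be computed with $O(k^{\tau})$ field operations. The dual exponent is
\[
\alpha:=\sup\{\beta\ge0:\omega(1,\beta,1)=2\}.
\]
The cited work shows $\alpha>0.321$. The plan has two steps: first handle the arithmetic-operation bound, then transfer it to the word-RAM.

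\emph{Step 1 (operation count).} Fix $\alpha_0<\alpha$. The function $\beta\mapsto\omega(1,\beta,1)$ is nondecreasing, since a smaller inner dimension can be padded with zeros up to $k^{\beta}$. By the definition of the supremum there is some $\beta\in(\alpha_0,\alpha]$ with $\omega(1,\beta,1)=2$; if the supremum is attained only in the limit, pick $\beta$ just above $\alpha_0$. Monotonicity then gives $\omega(1,\alpha_0,1)=2$.

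Hence for every $\eta>0$, a $k\times k^{\alpha_0}$ by $k^{\alpha_0}\times k$ product uses $O_\eta(k^{2+\eta})$ operations. For any $s\le k^{\alpha_0}$, pad the inner dimension with zero columns and rows up to $\lceil k^{\alpha_0}\rceil$. This does not change the product.

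The main subtlety is turning the family of bounds $O_\eta(k^{2+\eta})$ into a single $k^{2+o(1)}$ bound, because the constants may depend on $\eta$. This is handled by the standard diagonal argument. Choose $\eta(k)\to0$ slowly enough that $C_{\eta(k)}\le k^{\eta(k)}$, and at size $k$ run the algorithm for parameter $\eta(k)$. Alternatively, one can note that the paper's later uses take $s=k^{o(1)}$, where $\omega(1,o(1),1)=2$ suffices.

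\emph{Step 2 (word-RAM transfer).} In the paper's uses, the entries are $0/1$ and the products are integer counts of at most $s\le n$. We therefore work over a prime field $\mathbb F_P$ with $n<P\le2n$, where $P$ is found in $\widetilde O(n)$ time. Since the true counts are below $P$, the field result determines the integer product exactly. Each field operation on $O(\log n)$-bit words costs $O(1)$ word operations, or $O(\log n)$ for modular reduction. The overhead is therefore at most polylogarithmic, as claimed.

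I expect no real obstacle. The only point needing care is the uniformity in Step 1, and if one wants to avoid it, the statement can be read as holding for each fixed $\eta$, absorbed into the $o(1)$ terms of the paper's final exponents.
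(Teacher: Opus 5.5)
Your derivation is correct and matches what the paper relies on. The paper simply cites \cite{wxxz24} for the operation count (your unpacking of the dual exponent via monotonicity of $\omega(1,\beta,1)$, followed by the diagonal choice of $\eta(k)$, is the standard justification), and it performs the same prime-field word-RAM transfer in the proof of Lemma~\ref{lem:round}: it takes a prime $q\in(d,2d)$, so the $0/1$ inner products, which are bounded by $s\le d$, cannot wrap around.

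One small correction to your aside: the paper's only actual matrix multiplication, in Lemma~\ref{lem:main}, has $s=\widetilde O(n^{8/189})=k^{4/63+o(1)}$ rather than $k^{o(1)}$. So the route actually needed is your diagonal argument for a fixed $\alpha_0>0$, not the $\omega(1,o(1),1)=2$ shortcut.
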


\section{An Improved Running Time for the RSSS LCS Algorithm}
\label{sec:proof}

Section~\ref{sec:accounting} sets up the window parameters and running-time costs.
Section~\ref{sec:bipartite} restricts sparsification centres to the $A$-side and bounds the number of missed pairs.
Section~\ref{sec:perpair} gives a faster exact LCS computation for each window pair.
Section~\ref{sec:step2} optimizes the nearby-search radius to reduce the repair cost.
Section~\ref{sec:surrogate} replaces the sparsification test by an intersection surrogate and proves its soundness and completeness.
Section~\ref{sec:mm} sketches these intersections and evaluates the tests in batches using rectangular matrix multiplication.
Section~\ref{sec:small-lambda} applies the small-density LCS approximation of Bringmann, Cohen-Addad, and Das.
Section~\ref{sec:anss_endpoint} converts the ANSS LIS estimator into an LCS approximation with a worst-case time bound and amplified success probability.
Section~\ref{sec:optimization} balances the initial costs to obtain an explicit nonrecursive bound.
Section~\ref{sec:adaptive_domination} develops adaptive centre selection and sparse filtering to accelerate sparsification.
Section~\ref{sec:bootstrap} combines the local-density thresholds into a global staircase target and bounds the set of deficient entries.
Section~\ref{sec:capped-repair} repairs this set through successive rounds of exact detection and localization.
Section~\ref{sec:iterate} balances the multiscale costs and derives both a uniform exponent and an explicit three-scale bound.
Finally, Section~\ref{sec:main-proof} assembles the full algorithms, handles rounding and failure probabilities, and proves the stated approximation and running-time guarantees.

\subsection{The cost of the window framework}
\label{sec:accounting}

We first record the accounting in a form that exposes the free parameters. Fix a
base window length $d$. The construction of \cite[Definition 4.5]{rsss19} produces
a set $W_A$ of $A$-side windows and a set $W_B$ of $B$-side windows with the
following properties \cite[Algorithm~5 and Fact~4.6]{rsss19}: write $k_A:=|W_A|=n/d$, $k_B:=|W_B|$, and $k_{\mathrm{tot}}:=k_A+k_B$.
Then $k_B,k_{\mathrm{tot}}=\widetilde O(n/d)$; the maximum window length is
$w_{\max} = \Theta(d/\lambda)$; and the minimum window length is
$w_{\min} = \Theta(d)$.  The $B$-windows are organized into
$O(\log(1/\lambda))$ dyadic layers.  Within one such layer their exact lengths
are not equal, but all lie within a factor two; throughout the proof a
\emph{layer} means this dyadic bucket, and every threshold involving a window
pair uses its actual length $D_{ij}$.  The additive loss incurred by the
decomposition is $2\epsilon_0 n$ with $\epsilon_0 = \epsilon\lambda$
\cite[Lemma~4.7]{rsss19}, independent of $d$.

\begin{observation}[Asymmetry of the window construction]
\label{obs:asymmetry}
The window construction is asymmetric: every window of $W_A$ has length exactly
$d$, and only $W_B$ carries the dyadic layers.  In a layer with upper scale
$d_i$, its exact lengths lie in $(d_i/2,d_i]$.
\end{observation}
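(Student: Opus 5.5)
This observation follows directly from Definition~\ref{def:rsss_windows}, so I would verify its two claims in turn.

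\emph{The $A$-side.} By definition, $\mathcal W_A=\{A[yd+1..(y+1)d]:0\le y<t_0\}$. Each such window occupies the interval from $yd+1$ to $(y+1)d$, which has exactly $d$ positions. Under the rounding and padding convention, $d$ divides $n^+$, so $t_0=n^+/d$ is an integer and there is no final partial window. Hence every window of $W_A$ has length exactly $d$. No layer index appears in the definition of $\mathcal W_A$, so the $A$-side carries no dyadic structure.

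\emph{The $B$-side.} The layered structure appears only in $\mathcal W_B=\bigcup_{i=0}^f\mathcal W_B^i$.
\begin{itemize}
\item For layer $i\ge1$, a window has the form $B[yd_i+1..yd_i+d_i/2+xd]$ with $1\le x\le d_i/(2d)$. Its length is therefore $d_i/2+xd$.
\item Since $x\ge1$, the length is at least $d_i/2+d>d_i/2$.
\item Since $x\le d_i/(2d)$, the length is at most $d_i/2+d_i/2=d_i$.
\end{itemize}
So every length in layer $i\ge1$ lies in $(d_i/2,d_i]$. Moreover each length is $d_i/2+xd$, which is a multiple of $d$ because $d_i/2=d2^{i-1}$ is one. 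The integrality of $d_i/(2d)=2^{i-1}$ and of the window starts comes from the power-of-two convention in the padding paragraph. Layer $0$ consists of windows of length exactly $d=d_0\in(d_0/2,d_0]$, so the claim holds there as well.

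There is no real obstacle. The only points needing care are the strict lower endpoint, which holds because $x\ge1$, and the integrality assumptions, which the rounding convention supplies.
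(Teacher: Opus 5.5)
Your proposal is correct and takes the same approach as the paper, which also reads both claims directly off the construction: $A$-windows of length exactly $d$, and layer-$i$ windows of lengths $d_i/2+xd$ for $1\le x\le d_i/(2d)$. The paper cites the emitting loops of RSSS Algorithm~5 rather than Definition~\ref{def:rsss_windows}; your explicit check of layer $0$ and of the integrality supplied by the padding convention is extra care the paper leaves implicit.
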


\begin{proof}
Immediate from the construction of \cite[Algorithm 5]{rsss19}: the $A$-side loop
emits only windows $A(\mathrm{left}, d)$, whereas the $B$-side loop emits, for each
layer $i$ with $d_i = d\cdot 2^i$, windows of every length
$d_i/2 + x\cdot d$ for $x = 1, \dots, d_i/(2d)$. The same is said in the prose of
\cite[Section 4]{rsss19}: ``$W_A$ will simply be a partitioning of $A$ into disjoint
windows of length $d$''.

\end{proof}

Two consequences we use repeatedly. First, every $B$-window has length at least
$d$, since $W_B^0$ consists of length-$d$ windows and layer $i\ge1$ contributes
lengths in $(d_i/2, d_i]$ with $d_i \ge 2d$. Hence for every pair consulted by the
algorithm $D_{ij} \triangleq \max\{|w_i|,|w_j|\} = |w_j|$, the $B$-side length, and
$D_{ij}\ge d$; in particular the case distinction ``whether the $A$ side window is
larger or the $B$ side window is larger'' in \cite[proof of Theorem 2.1]{rsss19} is
vacuous here. Second, $W_A$ is a partition of $A$ into $n/d$ windows spaced $d$
apart, which is what Observation~\ref{obs:nearby-counts} counts.

Observation~\ref{obs:asymmetry} is the source of two of our savings. Note also the
consequence
\begin{align}
\label{eq:total-length}
\sum_{w \in W_A \cup W_B} |w|  =  \widetilde O(n/\lambda),
\end{align}
since each of the $O(\log (1/\lambda))$ layers contributes $O(n/d)$ windows and the
layer lengths $d_i$ sum to $O(d/\lambda)$.

For exponent bookkeeping we write $k=n^x$ for the common scale $n/d$,
suppressing polylogarithmic factors; whenever an exact cardinality is needed
we retain $k_A$ and $k_B$ explicitly. Thus $d=n^{1-x}$, and
$\lambda=n^{-a}$, hence
$w_{\max} = n^{1-x+a}$. Let $\gamma \in (0,1)$ be the sparsification parameter of
\cite[Theorem 3.21]{rsss19}, let $P$ denote the cost of one window-pair LCS
computation, and let $Q$ denote the cost of testing all pairs in one sparsification
round. The four costs of the algorithm are then
\begin{align}
T_1 &= k^{1+\gamma}\cdot P
&&\text{(Step 1: preprocessing)},
\label{eq:T1}\\
T_2 &= k^{\gamma}\cdot Q
&&\text{(Step 1: pair tests)},
\label{eq:T2}\\
T_3 &= P\cdot k^{2-\eta/2}\lambda^{-q}
&&\text{(Step 2: nearby search)},
\label{eq:T3}\\
T_0 &= n^{2-ra}
&&\text{(the small-$\lambda$ routine)},
\label{eq:T0}
\end{align}
where $k^{2-\eta}$ bounds the number of pairs underestimated by Step~1 and
$r$ measures the quality of the routine used when $\lambda$ is small. Following
\cite[proof of Theorem 2.1]{rsss19}, Step~1 is run once for each threshold
$\lambda'$ in a multiplicative net of $O(\epsilon^{-1}\log(1/\lambda))$ values and
once for each dyadic layer pair.  There are only polylogarithmically many such
runs.  We never split a dyadic layer into its many exact lengths; instead the
matrix-product output is compared entrywise with the actual $D_{ij}$-dependent
threshold.  These factors, as well as the outer geometric search over
$\lambda$, are suppressed throughout. In
\cite{rsss19} one has $P = w_{\max}^2$, $Q = k^2 w_{\max}$, $q = 6$, and the
routine of Eq.~\eqref{eq:T0} is sample-and-extend with $r = 7$.

\begin{remark}
\label{rem:reproduce}
Substituting the parameters of \cite[proof of Theorem 2.1]{rsss19}, namely
$d = \sqrt n\lambda$, $\gamma = 2/3$ and $\lambda = n^{-1/140}$, into
Eq.~\eqref{eq:T1}--\eqref{eq:T3} reproduces the exponents $1.8452$, $1.8524$ and
$1.8917$ stated there, and Eq.~\eqref{eq:T0} gives $1.95$. The three window terms are
therefore not balanced against the small-$\lambda$ term in \cite{rsss19}; already
re-optimizing $d$, $\gamma$ and the crossover in Eq.~\eqref{eq:T1}--\eqref{eq:T0}
lowers the exponent from $39/20$ to $151/79 = 1.9114\ldots$ with no change to the
algorithm. That $d$ is free to be re-optimized is noted in \cite[Section 4]{rsss19}
itself: ``one can play with the value of $d$ to optimize the running time''.
\end{remark}

\subsection{Restricting the sparsification to \texorpdfstring{$A$}{A}-side centres}
\label{sec:bipartite}

Step~1 of \cite{rsss19} repeats $\widetilde O(k^\gamma)$ rounds; in a round with
centre $w_c$ it builds the data structure of \cite[Lemma 3.22]{rsss19} and tests all
pairs. Two quantities control the cost: the length of the centre, which enters $P$
quadratically, and the number of pairs tested. In \cite{rsss19} the centre is drawn
from all of $W_A \cup W_B$, so it may have length $w_{\max}$. We show the centres
may be confined to $W_A$, where every window has length $d$.

Recall the notation of \cite[Section 3.2]{rsss19}: $\opt_{i,a}$ is a fixed longest
common subsequence of $w_i$ and $w_a$, viewed as a set of positions of $w_a$, and
$\lcs_{w_a}(w_i,w_j)=\lcs(\opt_{i,a},w_j)$. We make this choice algorithmic:
order the two windows by their global interval identifiers, choose the
lexicographically smallest sequence of matched position pairs among all maximum
alignments, cache it under that unordered pair, and use its two coordinate
projections in both orientations. Thus $\opt_{i,a}$ and $\opt_{a,i}$ always
come from the same cached alignment; Lemma~\ref{lem:edge_marks} uses this
convention.

This notion identifies mixed window pairs that have many $A$-side centres
providing a substantial common-subsequence certificate. Keeping the certificate
in absolute units accommodates the unequal window lengths used below.

\begin{definition}[$A$-close pairs]
\label{def:aclose}
For a mixed pair $(w_i,w_j)$ put $D_{ij}:=\max\{|w_i|,|w_j|\}$.  The pair is
\emph{$A$-close} if there are at least $k_A^{1-\gamma}$ windows $w_c\in W_A$
such that
\begin{align*}
\lcs_{w_c}(w_i,w_j)\ge \lambda^2D_{ij}/2,
\end{align*}
where $k_A=|W_A|$.  This is an absolute-length condition.  Equivalently, after
padding the shorter windows by nonmatching dummy characters as in
\cite[proof of Theorem 3.21]{rsss19}, it is the normalized condition
$\|\lcs_{w_c}(w_i,w_j)\|\ge\lambda^2/2$.
\end{definition}

The standard dynamic program implements this tie-breaking by storing, together
with each optimum length, the lexicographically first predecessor; reconstructing
and caching the resulting alignment changes neither the asymptotic time nor the
space bound.

The next lemma adapts the missed-edge counting argument to the asymmetric,
bipartite window family. It shows that detecting all $A$-close pairs leaves only
a sparse set of relevant false negatives.

\begin{lemma}[Bipartite false-negative bound]
\label{lem:bipartite}
Let $\mathrm{NG}^{\mathrm{bip}}_\lambda$ be the bipartite graph on $W_A \sqcup W_B$
whose edges are the pairs $(i,j) \in W_A\times W_B$ with
$\lcs(w_i,w_j)\ge\lambda D_{ij}$ that Step~1 fails to mark. If every $A$-close pair is
marked, then $|E(\mathrm{NG}^{\mathrm{bip}}_\lambda)| = \widetilde O(k^{2-\gamma}/\lambda)$.
\end{lemma}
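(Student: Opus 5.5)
We argue by contradiction, following the Tur\'an-based counting of \cite[Theorem 3.21]{rsss19} but exploiting bipartiteness so that all witnesses lie in $W_A$. Suppose $|E(\mathrm{NG}^{\mathrm{bip}}_\lambda)|> C k^{2-\gamma}\log(1/\lambda)/\lambda$ for a suitable constant $C$ (the logarithm absorbs the $\widetilde O(k)$ size of $W_B$ and the dyadic layers). Group the $B$-vertices by dyadic layer. Some layer carries a $1/O(\log(1/\lambda))$ fraction of the edges. Since $|W_A|=k_A$, averaging over the at most $\widetilde O(k)$ $B$-vertices of that layer gives a $B$-vertex $w_a$ whose missed neighbourhood $N\subseteq W_A$ has $|N|>2k_A^{1-\gamma}/\lambda$. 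Taking the maximum-degree vertex on the $B$ side is the point: every neighbour is an $A$-window of length $d$, and $D_{ia}=|w_a|=:D_a$ for all $i\in N$.

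Next we show that $N$ contains no large family of pairwise non-overlapping alignments. For $i\in N$ let $S_i:=\opt_{i,a}\subseteq[|w_a|]$, using the cached alignment, so $|S_i|=\lcs(w_i,w_a)\ge\lambda D_a$. Define a graph $H$ on $N$ joining $i,c$ when $|S_i\cap S_c|\ge\lambda^2D_a/2$. By inclusion--exclusion (or Cauchy--Schwarz on incidence counts), any $r=\lceil 2/\lambda\rceil$ sets of size $\ge\lambda D_a$ inside a ground set of size $D_a$ have total pairwise intersection at least $r\lambda D_a-D_a$. Bounding this against $\binom r2\cdot\lambda^2D_a/2$ shows that some pair is an edge of $H$. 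Hence the complement of $H$ has no clique of size $r$. Lemma~\ref{lem:turan_form} then gives $|E(H)|\ge |N|^2/(2(r-1))-|N|/2=\Omega(\lambda|N|^2)$. Consequently some vertex $i\in N$ has degree in $H$ at least $\Omega(\lambda|N|)> k_A^{1-\gamma}$ once $C$ is large.

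Finally we convert $H$-edges into $A$-closeness certificates. Fix this $i$ and let $c\in N$ be any $H$-neighbour. The positions in $S_i\cap S_c$ are positions of $w_a$ matched simultaneously by the alignment of $w_i$ with $w_a$ and by that of $w_c$ with $w_a$. Restricting the alignment of $w_c$ to these positions yields a common subsequence of $\opt_{i,a}$-type positions and $w_c$ of length at least $\lambda^2D_a/2$. We must phrase this in the form required by Definition~\ref{def:aclose}, with centre $w_c\in W_A$ for the mixed pair $(w_i,w_a)$. Composing through the shared $w_a$-positions gives a common subsequence of $\opt_{a,c}$ (positions of $w_c$, taken from the same cached alignment by the convention preceding Definition~\ref{def:aclose}) with $w_i$, or symmetrically with $w_a$. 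In either case we obtain $\lcs_{w_c}(w_i,w_a)\ge\lambda^2D_{ia}/2$. Thus the mixed pair $(w_i,w_a)$ has more than $k_A^{1-\gamma}$ centres in $W_A$, so it is $A$-close. By hypothesis it is then marked, contradicting $i\in N$.

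The main obstacle is this last step. We must check that the overlap of the two position sets of the $B$-window really composes into the specific quantity $\lcs(\opt_{\cdot,c},\cdot)$ with the centre being the $A$-window $w_c$, rather than $w_a$, which is what the definition uses. This requires using the same cached alignment in both orientations and matching $\lambda^2 D_{ij}/2$ with $D_{ij}=D_a$. The Tur\'an counting and the layer averaging are routine.
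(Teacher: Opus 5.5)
Your proof follows the paper's route. You take a maximum-degree vertex $w_a$ on the $B$ side, so that its whole missed neighbourhood lies in $W_A$; the split by dyadic layer is harmless but unnecessary. You then count incidences on the anchor positions to show the complement of the overlap graph is $K_r$-free with $r=\lceil 2/\lambda\rceil$. Your Bonferroni bound $\sum_{i<j}|S_i\cap S_j|\ge r\lambda D_a-D_a$ does exceed $\binom r2\lambda^2D_a/2$ for this $r$, so it substitutes for the paper's Cauchy--Schwarz. Next you apply Tur\'an to get a vertex of $H$-degree at least $k_A^{1-\gamma}$; this needs roughly $|N|>4k_A^{1-\gamma}/\lambda$, as in the paper, rather than $2k_A^{1-\gamma}/\lambda$, which your large $C$ supplies. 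Finally you compose through shared anchor positions.

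The step you flagged is, however, stated in the wrong orientation. Map $x\in S_i\cap S_c$ through the cached $\{a,c\}$ alignment into $w_c$, and through the $\{i,a\}$ alignment into $w_i$. This yields a common subsequence of $\opt_{a,c}$ and $w_i$, i.e.\ $\lcs_{w_c}(w_a,w_i)=\lcs(\opt_{a,c},w_i)\ge\lambda^2D_a/2$. It does not yield $\lcs_{w_c}(w_i,w_a)=\lcs(\opt_{i,c},w_a)$. That quantity depends on the canonical alignment of the two $A$-windows $w_i,w_c$, which your argument never touches, and it can vanish for $H$-adjacent $i,c$. For instance, take $w_c=PQ$ and $w_i=QP$ with $P,Q$ over disjoint alphabets and $|P|>|Q|$, and let $w_a$ contain $Q$ but no symbol of $P$. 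Then $|S_i\cap S_c|=|Q|$, yet $\opt_{i,c}$ lies inside $P$, so $\lcs(\opt_{i,c},w_a)=0$. The ``or symmetrically with $w_a$'' alternative is therefore unjustified.

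The repair is exactly what the paper does: conclude that the pair in the orientation $(w_a,w_i)\in W_B\times W_A$ is $A$-close, with certificate $\lcs(\opt_{a,c},w_i)$. The hypothesis that every $A$-close pair is marked must cover this orientation. That is why Step~1 also evaluates the reverse certificates (the product $U_BV_A^\top$ in Lemma~\ref{lem:round}), and why Corollary~\ref{cor:surrogate} feeds $\lcs(\opt_{a,c},w_b)$ into Lemma~\ref{lem:complete}. With this change your argument is complete.
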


\begin{proof}
Restricting attention to $W_A\times W_B$ is legitimate because the block dynamic
program of \cite[Algorithm 6]{rsss19} reads the estimate matrix only at entries
indexed by an $A$-window and a $B$-window; false negatives on $A\times A$ or
$B\times B$ pairs are never consulted.

Suppose, for contradiction, that
$|E(\mathrm{NG}^{\mathrm{bip}}_\lambda)| > 4k_A^{1-\gamma}k_B/\lambda$, where
$k_B = |W_B|$. Since the edge set is bipartite,
$|E| = \sum_{b\in W_B}\deg(b) \le k_B\max_{b\in W_B}\deg(b)$, so some
$w_a \in W_B$ has $q:=\deg(w_a) > 4k_A^{1-\gamma}/\lambda$. Put
$D:=|w_a|$. The neighbourhood $N(w_a)$ is contained in $W_A$, so all its
members have length $d\le D$. Define an auxiliary graph $H_a$ on $N(w_a)$ by
joining $w_i$ and $w_j$ when both
\begin{align}
\label{eq:absolute-nf}
\lcs_{w_a}(w_i,w_j)\ge\lambda^2D/2
\quad\text{and}\quad
\lcs_{w_a}(w_j,w_i)\ge\lambda^2D/2.
\end{align}
This is the absolute-unit version of the auxiliary graph of
\cite[Definition 3.20]{rsss19} after the shorter windows are padded to length $D$.

We verify from first principles that every set $T\subseteq N(w_a)$ of size
$t\ge r:=\lceil2/\lambda\rceil$ contains an edge of $H_a$. For $w_i\in T$, let
$S_i:=\opt_{i,a}\subseteq[D]$, and put
$m(x):=|\{i\in T:x\in S_i\}|$. Since $(w_i,w_a)$ is an edge of the missed-pair
graph, $|S_i|=\lcs(w_i,w_a)\ge\lambda D$. Thus
\[
\sum_{i\ne j}|S_i\cap S_j|=\sum_{x=1}^{D}m(x)(m(x)-1)\ge \frac{(\sum_xm(x))^2}{D}-\sum_xm(x)\ge t\lambda D(t\lambda-1).
\]
The first identity counts ordered incidences and the first inequality is
Cauchy--Schwarz. Dividing by at most $t^2$ ordered pairs, some $i\ne j$ satisfies
\[
|S_i\cap S_j|\ge\lambda D(\lambda-1/t)\ge\lambda^2D/2.
\]
The common positions in $w_a$ compose monotonically through the two fixed LCS
matchings, in either orientation, so Eq.~\eqref{eq:absolute-nf} holds.

Consequently the complement of $H_a$ is $K_r$-free. Lemma~\ref{lem:turan_form} gives
\[
|E(H_a)|\ge \frac{q^2}{2(r-1)}-\frac q2>\frac{q^2\lambda}{4}-\frac q2.
\]
Hence some $w_b\in N(w_a)$ has degree greater than $q\lambda/2-1$, which is at
least $k_A^{1-\gamma}$ because $q>4k_A^{1-\gamma}/\lambda$.

For every $w_c$ adjacent to $w_b$ in $H_a$ we have, in raw units,
\begin{align*}
\lcs_{w_c}(w_a,w_b)=\lcs_{w_a}(w_c,w_b)\ge\lambda^2D/2.
\end{align*}
The equality holds because $\opt_{a,c}$ and $\opt_{c,a}$ realize the same matched
string.  Since $D=D_{ab}$, all such $w_c$ lie in $N(w_a)\subseteq W_A$. Hence the pair $(w_a,w_b)$,
which lies in $W_B\times W_A$, is $A$-close and is therefore marked; but $w_b$ is a
neighbour of $w_a$ in $\mathrm{NG}^{\mathrm{bip}}_\lambda$, i.e.\ the pair is
unmarked, a contradiction. Finally $k_A, k_B = \widetilde O(k)$.
\end{proof}

Since a pair is $A$-close as soon
as $k_A^{1-\gamma}$ centres in $W_A$ witness it, sampling
$\widetilde O(k_A^\gamma) = \widetilde O(k^\gamma)$ centres uniformly from $W_A$
detects every $A$-close pair with high probability, exactly as in
\cite[Lemma 3.17]{rsss19}.  Notice that no unequal-length form of
\cite[Lemma 3.18]{rsss19} was used: the displayed incidence calculation proves the
needed statement directly in the anchor's $D$ positions.  Keeping the certificate
in absolute units is essential; normalizing Eq.~\eqref{eq:absolute-nf} by the short
length $d$ and later interpreting it with denominator $D$ would be invalid.

\subsection{The per-pair cost of a window LCS}
\label{sec:perpair}

The asymmetric construction makes one input of every window comparison short.
This permits an output-sensitive dynamic program whose cost depends quadratically
on $d$ and only linearly on the longer endpoint.

\begin{lemma}[Per-pair cost of a window LCS]
\label{lem:perpair}
With centres confined to $W_A$, every LCS computation performed by Step~1 and
Step~2 has one endpoint of length $d$, and each costs
$P = \widetilde O(d^2 + w_{\max})$.
\end{lemma}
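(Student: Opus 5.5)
The proof has two parts. First, every LCS computation in Step~1 and Step~2 has an endpoint of length $d$. Second, such a comparison can be done in time $\widetilde O(d^2+w_{\max})$ by a dynamic program whose table is indexed by the short string's positions and by the LCS value, instead of by positions of the long string.

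For the first part, I would enumerate the LCS computations the algorithm performs.

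\emph{Step~1 preprocessing.} The centre $w_c$ is confined to $W_A$, as justified after Lemma~\ref{lem:bipartite}. The algorithm computes $\opt_{i,c}$ for every window $w_i$, and it computes the greedy packing $Y_{c,i}$ by repeatedly extracting common subsequences of $w_c$ with the residual.

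\emph{Step~1 pair tests.} These evaluate $\lcs_{w_c}(w_i,w_j)=\lcs(\opt_{i,c},w_j)$. Here $\opt_{i,c}\subseteq w_c$ has length at most $|w_c|=d$.

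\emph{Step~2.} Detection and repair recompute exact values $\lcs(w_i,w_j)$ only for entries of $M$, which are indexed by $W_A\times W_B$. So $w_i\in W_A$ has length exactly $d$ by Observation~\ref{obs:asymmetry}.

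In every case one argument is a string of length at most $d$, and the other has length at most $w_{\max}$.

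For the second part, let $x$ be the short string ($|x|\le d$) and $y$ the long one ($|y|\le w_{\max}$). After coordinate compression, I would build in $O(|y|+|\Sigma_y|)$ time a next-occurrence structure on $y$: per-symbol sorted position lists, so that a query ``first occurrence of symbol $\sigma$ in $y$ after position $p$'' can be answered. I would then run the classical Hirschberg-style dynamic program $F[i][\ell]$ = the smallest prefix length of $y$ in which $x[1..i]$ has a common subsequence of length $\ell$, with $i\le|x|$ and $\ell\le|x|$. Each transition is either $F[i][\ell]\gets F[i-1][\ell]$ or one next-occurrence query from $F[i-1][\ell-1]$. The table has $O(d^2)$ entries, and each query costs $O(\log w_{\max})$ by binary search, which gives $\widetilde O(d^2+w_{\max})$ including construction of the lists.

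Recovering the lexicographically smallest optimal alignment that $\opt$ requires is the delicate point. I would store the chosen predecessor in each cell and backtrack. Lexicographic minimality of the matched pairs follows by choosing, at ties, the option with the earliest $y$-position. This is consistent with taking minimal $F$ values, but I would check carefully that it matches the canonical convention. If it does not, I would argue instead that any fixed deterministic tie-breaking is cached under the unordered pair and suffices for Lemma~\ref{lem:edge_marks}.

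The main obstacle is making the index structure amortize correctly. It is built once per long window, and symbol lists must not cost $|\Sigma|$. Building per-window lists by bucketing positions under compressed symbols costs $O(|y|)$ with a hash or global array that is reset lazily. Alternatively, one can build them once globally over $B$ (and over $A$) in $O(n)$ time and answer restricted queries by binary search with window offsets. That makes the per-pair cost $\widetilde O(d^2)$ plus $O(w_{\max})$ for any per-pair reading. Either way the bound $P=\widetilde O(d^2+w_{\max})$ follows.
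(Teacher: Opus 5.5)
Your core algorithm is the paper's. The paper keeps the single row $p_\ell$ (the smallest end position in the long string of a length-$\ell$ common subsequence), updates it in decreasing $\ell$ via $\operatorname{next}_\sigma$ with binary search in occurrence lists, and stores predecessors to recover one fixed alignment. Your table $F[i][\ell]$ is the same recurrence written out in full. Your fallback on tie-breaking is also the right one. Lemma~\ref{lem:edge_marks} and Corollary~\ref{cor:surrogate} only need $\opt_{a,c}$ and $\opt_{c,a}$ to be the two projections of one cached alignment, not lexicographic minimality.

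There is one gap: the greedy packing $Y_{c,i}$. It is not one LCS call but a sequence of extractions. Each extraction removes a common subsequence of size at least $d\lambda^2/4$ from the residual of $w_c$, so there can be $\Theta(\lambda^{-2})$ of them, followed by one final sub-threshold call. The downstream accounting charges the whole packing for one pair as a single $P$. This happens in $T_1=k^{1+\gamma}P$ and in the per-round bound $\widetilde O(kd^2+n/\lambda)$ that the paper derives at the end of this proof. Your analysis charges $\Theta(|x|^2)$ table entries per call regardless of the answer, so with it the packing costs $\Theta(d^2/\lambda^2)$ per pair. That is exactly the $\lambda^{-2}$ factor over $d^2$ the lemma is meant to remove, since $w_{\max}^2=\Theta(d^2/\lambda^2)$ up to constants.

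The paper closes this by citing \cite{rsss19} (Lemma~3.22 with Theorem~A.8). By that result the packing costs $O(|w_c|\cdot|Y_{c,i}|\log n+|w_i|\log n)$, and $|Y_{c,i}|\le d$. You can also close it with your own DP by making it output-sensitive, as the paper's proof does:
\begin{itemize}
\item In each row only the $L+1$ entries with $\ell\le L$ are finite, so truncating rows makes a call with answer $L$ cost $O(|x|L\log D)$.
\item The extracted pieces are disjoint subsets of the $d$ positions of $w_c$, so $\sum_tL_t=|Y_{c,i}|\le d$.
\item The final call has $L<d\lambda^2/4$.
\item The occurrence lists of $w_i$ are built once, because only $w_c$ is shrunk.
\end{itemize}
Summing gives $\widetilde O(d^2+|w_i|)$ for the entire packing, which is what the subsequent cost bounds require.
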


\begin{proof}
By Observation~\ref{obs:asymmetry} the centre and one endpoint of every consulted
pair have length exactly $d$. For two strings of lengths $d \le D$ whose LCS has
length $L \le d$, preprocess the occurrence lists of every symbol in the longer
string.  After reading a prefix of the shorter string, maintain $p_\ell$, the
smallest position in the longer string at which a common subsequence of length
$\ell$ can end, with $p_0=0$.  When the next symbol is $\sigma$, update in decreasing
order of $\ell$ by
\begin{align*}
p_\ell\gets\min\{p_\ell,\operatorname{next}_\sigma(p_{\ell-1})\},
\end{align*}
where $\operatorname{next}_\sigma(t)$ is the first occurrence of $\sigma$ after
position $t$.  The decreasing order preserves the preceding row of the dynamic
program.  There are at most $L+1$ finite states in any row, so binary search in the
occurrence lists gives total time $O(D+dL\log D)=\widetilde O(D+d^2)$.
Storing the predecessor that realizes each update recovers the two position
sets of one fixed LCS alignment within the same bound, as required by Step~1.
The proof of \cite[Lemma~3.22]{rsss19}, together with
\cite[Theorem~A.8]{rsss19}, shows that the construction of $Y_{a,i}$ costs
$O(|w_a|\cdot|Y_{a,i}|\log n + |w_i|\log n)$, and
$|Y_{a,i}| \le |w_a| = d$, so it too is $\widetilde O(d^2 + |w_i|)$. Summing over a
round and using Eq.~\eqref{eq:total-length}, and noting that both $\opt_{\cdot,c}$ and
$Y_{c,\cdot}$ are computed for every window, which only doubles the count, the
total preprocessing per round is
$\widetilde O(k d^2 + n/\lambda) = \widetilde O(nd + n/\lambda)$.
\end{proof}

Note that \cite{rsss19} charges $w_{\max}^2$ per pair, which at the optimum is a
factor $\lambda^{-2}$ larger. The padding by dummy characters used in
\cite[Theorem 3.21]{rsss19} to equalize window lengths is an analytical device
only: dummy characters match nothing, so the LCS values are unchanged and the
dynamic program may be run on the unpadded strings.

\subsection{Step 2 with an optimized nearby radius}
\label{sec:step2}

Step~2 is the nearby search of \cite[Algorithm 7]{rsss19}, inherited from
\cite{cdgks18}. It declares two same-side windows \emph{nearby} when their starting
indices differ by at most a radius $W$, samples $A$-side windows at rate
$p = \widetilde O(k^{-\eta/2}/\epsilon_{\mathrm{nbs}})$, brute-forces the sampled
rows, and recomputes the LCS of every nearby pair of every detected underestimated
pair. Both factors $\lambda^{-1}$ in the bound of \cite[Theorem 5.2]{rsss19} have
the same origin.

\begin{observation}[Nearby window counts]
\label{obs:nearby-counts}
With radius $W$, the number of $A$-side windows nearby a given one is
$O(W/d+1)$.  In a $B$-side layer with upper scale $d_i$, the number nearby a
given one is $O(W/d+d_i/d)$.  Consequently, whenever $W\ge w_{\max}$, the
number of nearby windows on either side is $\widetilde O(W/d)$.
\end{observation}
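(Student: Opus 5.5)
The plan is to count window starting positions directly, using Definition~\ref{def:rsss_windows}. Two same-side windows are nearby when their starting indices differ by at most $W$, so for a fixed window with start $s$ we count the windows whose start lies in $[s-W,s+W]$.

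\emph{$A$-side.} By Observation~\ref{obs:asymmetry}, $\mathcal W_A$ is a partition of $A$ into windows of length $d$, with starts at the multiples of $d$. An interval of length $2W$ therefore contains at most $2W/d+1$ such starts. This gives the bound $O(W/d+1)$.

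\emph{$B$-side.} Fix a layer $i$ with upper scale $d_i$.
\begin{itemize}
\item For $i\ge1$, the window starts in this layer are the multiples of $d_i$. The interval $[s-W,s+W]$ contains at most $2W/d_i+1$ of them.
\item Each such start carries exactly $d_i/(2d)$ windows, one for each allowed length $d_i/2+xd$ with $1\le x\le d_i/(2d)$.
\item Multiplying, the count is at most $(2W/d_i+1)\cdot d_i/(2d)=W/d+d_i/(2d)=O(W/d+d_i/d)$.
\item For $i=0$ the layer is again a partition into windows of length $d$, so the $A$-side bound applies; it is dominated by $O(W/d+d_0/d)$ since $d_0=d$.
\end{itemize}
If "nearby" is also meant across different layers of $W_B$, I sum over the $O(\log(1/\lambda))$ dyadic layers. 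This only introduces a polylogarithmic factor, which the $\widetilde O$ in the final claim absorbs.

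\emph{Consequence.} Every $B$-window has length at most $w_{\max}$, so every layer scale satisfies $d_i\le 2w_{\max}$.
\begin{itemize}
\item If $W\ge w_{\max}$, then $d_i/d=O(W/d)$, so each layer contributes $O(W/d)$.
\item On the $A$-side, $W/d\ge w_{\max}/d\ge1$, so $O(W/d+1)=O(W/d)$.
\item Summing over the logarithmically many layers gives $\widetilde O(W/d)$ on either side.
\end{itemize}

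The only delicate point is the $B$-side multiplicity. The many exact lengths sharing one start must be counted: this is exactly the $d_i/d$ term. The bound $d_i\lesssim w_{\max}$, which lets the hypothesis $W\ge w_{\max}$ absorb that term, holds by construction, since the top layer has scale $\Theta(d/\lambda)=\Theta(w_{\max})$. I do not expect a genuine obstacle; this observation is routine bookkeeping.
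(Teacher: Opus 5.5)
Your proposal is correct and is essentially the paper's argument: you count the $O(W/d_i+1)$ admissible starts in each layer, multiply by the $d_i/(2d)$ lengths per start, and absorb the $d_i/d$ term using $W\ge w_{\max}\gtrsim d_i$ before summing over the logarithmically many layers. Your separate handling of layer $0$ is a small piece of care the paper leaves implicit; that layer has one length per start, whereas the formula $d_0/(2d)$ would give $1/2$.
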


\begin{proof}
$W_A$ consists of disjoint windows of length $d$ spaced $d$ apart, giving
$O(W/d+1)$.  In layer $i$ of $W_B$ the starting indices are multiples of
$d_i$, and each start carries $d_i/(2d)$ lengths.  There are
$O(W/d_i+1)$ eligible starts, hence
\[
O((W/d_i+1)d_i/d)=O(W/d+d_i/d)
\]
nearby windows in that layer.  Summing over the dyadic layers and using
$W\ge w_{\max}\ge d_i$ gives the asserted
$\widetilde O(W/d)$ bound.
\end{proof}

The nearby-pair repair step has a tunable search radius that can be balanced
against its sampling cost. Optimizing that balance removes two powers of
$\lambda^{-1}$ from the earlier analysis within the stated cost model.

\begin{lemma}[Step~2 with an optimized radius]
\label{lem:step2}
Assume $w_{\max}\le W\le n$ and that the uncapped sampling expression
below is at most one. Step~2 can be implemented in time
$\widetilde O(k^{2-\eta/2}P\lambda^{-4})$, that is, with $q=4$ in
Eq.~\eqref{eq:T3}. Within the two-term, no-reuse cost model displayed in the
proof, balancing those two charged terms cannot give a smaller power of
$\lambda^{-1}$.
\end{lemma}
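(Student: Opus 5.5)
The plan is to follow the accounting of \cite[Algorithm~7 and Theorem~5.2]{rsss19} with two quantities left free: the per-miss error and the radius $W$. We then write the cost of Step~2 as a sum of two charged terms, detection and repair, and minimize over $W$.

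\textbf{Detection.} Each sampled $A$-row is brute-forced against all $\widetilde O(k)$ $B$-windows at cost $P$ per pair. The number of sampled rows is $pk_A$, so detection costs $\widetilde O(pk^2P)$.

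\textbf{Repair.} Each detected miss triggers recomputation of all nearby pairs. By Observation~\ref{obs:nearby-counts} with $W\ge w_{\max}$, there are $\widetilde O(W/d)$ nearby windows on each side, hence $\widetilde O(W^2/d^2)$ pairs per trigger. The number of triggers is at most $p$ times the missed-pair bound $k^{2-\eta}$, so repair costs $\widetilde O(p\,k^{2-\eta}(W/d)^2P)$.

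\textbf{Error.} Fix an optimal compatible path. Group its missed pairs into clusters of nearby pairs. A cluster containing at least $\epsilon_{\mathrm{nbs}}\cdot$(number of path rows within radius $W$) missed pairs is hit with high probability, provided $p$ is large enough. Once hit, the whole cluster is repaired exactly.

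The first correction to \cite{rsss19} is the size of the loss. An unrepaired missed pair loses at most its table value, and every entry of $M$ is an LCS against an $A$-window, hence at most $d$, not $w_{\max}$. Summed along the path, the unhit pairs therefore contribute at most $O(\epsilon_{\mathrm{nbs}} n)$. We need this below $\epsilon$ times the capped target of size $\Theta(\lambda^4 n)$ (the certified value $\lambda^4 D/16$ from Lemma~\ref{lem:sound}-type certificates), which forces $\epsilon_{\mathrm{nbs}}=\Theta(\lambda^4)$. I will check that the cluster definition in terms of path rows gives sampling rate
\[
p=\widetilde O\bigl(d/(W\epsilon_{\mathrm{nbs}})\bigr),
\]
since a radius-$W$ block of the path spans $W/d$ $A$-rows.

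\textbf{Balancing $W$.} Substituting $p$, detection costs $\widetilde O(k^2Pd/(W\epsilon_{\mathrm{nbs}}))$ and repair costs $\widetilde O(k^{2-\eta}PW/(d\,\epsilon_{\mathrm{nbs}}))$. Equalizing the two gives $W=k^{\eta/2}d$. This is the second correction: \cite{rsss19} used $W=k^{\eta/2}w_{\max}$, which is a factor $w_{\max}/d=\Theta(1/\lambda)$ off. At this radius the total is $\widetilde O(k^{2-\eta/2}P/\epsilon_{\mathrm{nbs}})=\widetilde O(k^{2-\eta/2}P\lambda^{-4})$, i.e.\ $q=4$.

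\textbf{Hypotheses.} The assumption $W\ge w_{\max}$ is what makes Observation~\ref{obs:nearby-counts} give $\widetilde O(W/d)$ on the $B$ side, so the optimum $W=k^{\eta/2}d$ must respect it. If it does not, we clamp $W=w_{\max}$ and absorb the $w_{\max}/d$ mismatch into the stated hypothesis. The assumption that the uncapped $p$ is at most one is used to avoid the degenerate regime where every row is sampled.

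\textbf{Optimality within the model.} For any $W$ the product of the two terms is $k^{4-\eta}P^2\epsilon_{\mathrm{nbs}}^{-2}$, independent of $W$. By AM--GM the sum is therefore at least $k^{2-\eta/2}P/\epsilon_{\mathrm{nbs}}$. In addition, $\epsilon_{\mathrm{nbs}}=O(\lambda^4)$ is forced by the loss requirement. Together these show that no smaller power of $\lambda^{-1}$ is possible in this two-term, no-reuse model.

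\textbf{Main obstacle.} I expect the hard step to be the hitting-probability argument with the new radius. One must verify that a cluster is defined so that both the loss bound (at most $d$ per missed pair) and the sampling rate $d/(W\epsilon_{\mathrm{nbs}})$ hold simultaneously along a single fixed path. This requires the path's rows within distance $W$ to number about $W/d$, which uses that $A$-windows are spaced exactly $d$ apart. My first attempt would be to partition the path into consecutive blocks of $W/d$ $A$-rows and apply a union bound over the blocks.
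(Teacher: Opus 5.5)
Your route is the paper's. You make the same two corrections: an unrepaired entry loses at most $d$ because every row is a length-$d$ $A$-window, and the radius is left free and balanced at $W=k^{\eta/2}d$. You also get the same detection and recomputation terms and the same sampling rate. Normalizing $\epsilon_{\mathrm{nbs}}$ by the $W/d$ rows of a block, instead of by $k^{\eta/2}$, is only a reparameterization of the paper's $\epsilon_{\mathrm{nbs}}=\Theta(\epsilon\lambda^4W/(dk^{\eta/2}))$. Your AM--GM observation, that the product of the two terms does not depend on $W$, is a tidy form of the within-model optimality claim.

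The step you flag as the main obstacle would, however, fail with the partition you propose. Blocks of $W/d$ consecutive $A$-rows control only the $A$-coordinate. A compatible path may spread the $B$-windows of those rows over a range much longer than $W$. Repairing the radius-$W$ neighbourhood of one sampled trigger then need not reach the other missed cells of the block, so ``once hit, the whole cluster is repaired'' breaks down.

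The counting behind \cite[Lemma~5.1]{rsss19}, which the paper invokes for its loss bound $2d(\epsilon_{\mathrm{nbs}}k^{\eta/2})n/W$, partitions in both coordinates instead. Cut each string into $\lceil n/W\rceil$ intervals of length $W$; this is where $W\le n$ is used.
\begin{itemize}
\item A path monotone in both coordinates meets at most $2\lceil n/W\rceil$ of the resulting grid cells.
\item The path cells inside one grid cell lie on distinct $A$-rows and are pairwise within $W$ in both coordinates, so any single trigger among them repairs all of them.
\item A grid cell with at least $h:=\epsilon_{\mathrm{nbs}}W/d$ missed path cells escapes the row sample with probability at most $e^{-ph}=n^{-C}$, taking $p=C\log n\cdot d/(W\epsilon_{\mathrm{nbs}})$.
\item Every other grid cell loses less than $hd$, so the total loss is at most $2\lceil n/W\rceil hd\le4\epsilon_{\mathrm{nbs}}n$, as you wanted.
\end{itemize}

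Two smaller points. First, your trigger count $p\,k^{2-\eta}$ holds only in expectation, and one sampled row can carry many misses, so there is no concentration for free. You need the paper's device: abort a trial that exceeds a constant multiple of its budget, use Markov's inequality for constant completion probability, and repeat $O(\log n)$ times, keeping the largest completed one-sided value. Second, drop the clamping remark. The hypothesis $W\ge w_{\max}$ says exactly that the balanced radius is admissible. If it were not, clamping to $w_{\max}$ would inflate the recomputation term by $w_{\max}/(k^{\eta/2}d)$, and the stated bound would not follow.
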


\begin{proof}
Two changes to the analysis of \cite[Theorem 5.2]{rsss19}.

\emph{(i) The error of an undetected pair is at most $d$, not $w_{\max}$.} The
estimate matrix is indexed by $W_A\times W_B$, and by
Observation~\ref{obs:asymmetry} every $A$-window has length $d$; hence every entry
is an LCS value bounded by $d$.

\emph{(ii) The radius $W$ is a free parameter.} The counting argument of
\cite[Lemma 5.1]{rsss19} bounds the total error by
$2 d(\epsilon_{\mathrm{nbs}}k^{\eta/2}) n/W$, which must be at most $\epsilon$
times the size $\Theta(\lambda^4 n)$ of the solution being produced; hence
$\epsilon_{\mathrm{nbs}} = \Theta(\epsilon\lambda^4 W/(dk^{\eta/2}))$. With
$p = \widetilde O(k^{-\eta/2}/\epsilon_{\mathrm{nbs}})$ the two costs are
\begin{align*}
\text{detection} &= \widetilde O(pk\cdot k\cdot P)
= \widetilde O(k^2 P d/(\epsilon\lambda^4 W)),\\
\text{recomputation} &= \widetilde O(k^{2-\eta}p\cdot (W/d)^2\cdot P)
= \widetilde O(k^{2-\eta}WP/(\epsilon\lambda^4 d)),
\end{align*}
using Observation~\ref{obs:nearby-counts}. The first decreases in $W$ and the
second increases in $W$; they balance at $W^2 = k^\eta d^2$, that is
$W = k^{\eta/2}d$, where both equal $\widetilde O(k^{2-\eta/2}P\lambda^{-4})$. The
choice $W = k^{\eta/2}w_{\max}$ of \cite{rsss19} is a factor
$w_{\max}/d = \Theta(1/(\epsilon\lambda))$ off this optimum, which together with
(i) accounts for the two lost factors of $\lambda$.

The recomputation bound above is initially an expectation over the sampled rows.
As in \cite[proof of Theorem 5.2]{rsss19}, abort a trial once it exceeds a constant
multiple of this budget, repeat independently, and keep the largest completed
one-sided estimate. Markov's inequality gives constant completion probability per
trial, and $O(\log n)$ repetitions make both completion and correctness hold with
high probability. This adds only a polylogarithmic factor.

For clarity, the last assertion is only a statement about this displayed cost
model. If one charges both formulas, does not reuse overlapping exact
computations, and does not exploit early stopping or deduplication beyond the
bounds above, the error constraint fixes
$\epsilon_{\mathrm{nbs}}=O(\lambda^4W/(dk^{\eta/2}))$ and balancing the two
resulting terms chooses $W=k^{\eta/2}d$. This is not a lower bound for all
exact-repair implementations.
\end{proof}

Combining Lemma~\ref{lem:bipartite}, which gives
$k^{2-\eta} = \widetilde O(k^{2-\gamma}/\lambda)$ and hence
$k^{-\eta/2} = k^{-\gamma/2}\lambda^{-1/2}$, with Lemma~\ref{lem:step2} yields
\begin{align}
\label{eq:T3-final}
T_3  =  \widetilde O(P\cdot k^{2-\gamma/2}\lambda^{-9/2}),
\end{align}
so the effective exponent in Eq.~\eqref{eq:T3} is $q' = 9/2$.

\subsection{An intersection surrogate for the sparsification test}
\label{sec:surrogate}

The test of \cite{rsss19} asks whether $\lcs(\opt_{i,a}, w_j)$ is large. This is an
LCS against the whole string $w_j$, is answered in $O(w_{\max})$ time per pair, and
costs $k^{2+\gamma}w_{\max}$ in total. We replace it by a test on two subsets of
the position set of the centre, which will let us batch a whole round into one
matrix product.

The following surrogate replaces an LCS query against an entire window by an
intersection within the centre's coordinate set. Its indicator-vector form is
what makes the subsequent batched matrix multiplication possible.

\begin{definition}[Intersection surrogate]
\label{def:surrogate}
For a centre $w_a$ put $I_a(i,j) = |\opt_{i,a}\cap Y_{a,j}|$, the intersection
being taken inside the position set of $w_a$. Recall $D_{ij}$ from
Definition~\ref{def:aclose}.  Within a dyadic layer it may vary by a factor
smaller than two, and every comparison below uses the actual $D_{ij}$.
Step~1 marks the pair $(i,j)$ when
$I_a(i,j) \ge \lambda^2 D_{ij}/4$.
\end{definition}

The surrogate is used only when its intersection count certifies a genuine
common subsequence. The next lemma converts that count into a one-sided lower
bound, ensuring that marking cannot overestimate the window LCS.

\begin{lemma}[Soundness]
\label{lem:sound}
If $I_a(i,j)\ge T$ then $\lcs(w_i,w_j) \ge T\lambda^2/4$. In particular the marking
rule of Definition~\ref{def:surrogate} certifies
$\lcs(w_i,w_j)/D_{ij}\ge\lambda^4/16$, which is exactly the guarantee of
\cite[Theorem 3.21]{rsss19}.
\end{lemma}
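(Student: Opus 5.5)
The plan follows the sketch in the technique overview. We decompose $Y_{a,j}$ into its constituent common subsequences, pigeonhole to find one that meets $\opt_{i,a}$ heavily, and compose the two alignments through the centre $w_a$.

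\emph{Structure of $Y_{a,j}$.} By the greedy packing of \cite[Lemma~3.22]{rsss19}, $Y_{a,j}$ is a disjoint union $Y^{(1)}\sqcup\cdots\sqcup Y^{(m)}$ of position sets in $w_a$. Each $Y^{(\ell)}$ is the $w_a$-projection of a common subsequence of $w_a$ and $w_j$ found in the residual, so it comes with a monotone matching $\phi_\ell:Y^{(\ell)}\to[|w_j|]$ that preserves symbols. The packing continues only while the residual LCS is at least $|w_a|\lambda^2/4$ (the threshold fixed in the text). Every piece therefore has size at least $|w_a|\lambda^2/4$, and since the pieces are disjoint inside $[|w_a|]$ we get $m\le 4/\lambda^2$. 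I expect this to be the main point needing care. One must check that the stopping threshold is stated relative to $|w_a|$ and not $D_{ij}$, and that the pieces are genuinely disjoint position sets. If the packing threshold in \cite{rsss19} is $|w_a|\widetilde\lambda/2=|w_a|\lambda^2/4$, the bound $m\le4/\lambda^2$ follows; otherwise I would restate the packing with that threshold, as the overview does.

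\emph{Pigeonhole.} Since $\opt_{i,a}\cap Y_{a,j}=\bigsqcup_\ell(\opt_{i,a}\cap Y^{(\ell)})$, some $\ell$ satisfies
\[
|\opt_{i,a}\cap Y^{(\ell)}|\ge I_a(i,j)/m\ge T\lambda^2/4 .
\]

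\emph{Composition.} Let $\psi:\opt_{i,a}\to[|w_i|]$ be the monotone matching of the fixed alignment between $w_i$ and $w_a$. For positions $x$ in $S:=\opt_{i,a}\cap Y^{(\ell)}$, taken in increasing order, both $\psi(x)$ and $\phi_\ell(x)$ are strictly increasing, and $w_i[\psi(x)]=w_a[x]=w_j[\phi_\ell(x)]$. Hence the pairs $\{(\psi(x),\phi_\ell(x))\}_{x\in S}$ form a common subsequence of $w_i$ and $w_j$ of length $|S|$. This gives $\lcs(w_i,w_j)\ge T\lambda^2/4$.

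For the second claim, substitute $T=\lambda^2D_{ij}/4$ from the marking rule of Definition~\ref{def:surrogate}. This yields $\lcs(w_i,w_j)\ge\lambda^4D_{ij}/16$, which is the normalized guarantee of \cite[Theorem~3.21]{rsss19}. No padding is needed, because $D_{ij}$ is the actual length and dummy characters would match nothing anyway.
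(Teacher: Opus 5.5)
Your proof is correct and matches the paper's argument essentially step for step. You split $Y_{a,j}$ into at most $4/\lambda^2$ disjoint greedy pieces, each of size at least $|w_a|\lambda^2/4$, then pigeonhole to a piece meeting $\opt_{i,a}$ in at least $I_a(i,j)/m\ge T\lambda^2/4$ positions, and compose the two monotone matchings through $w_a$; the substitution $T=\lambda^2D_{ij}/4$ for the second claim is likewise what the paper intends.
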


\begin{proof}
By construction $Y_{a,j} = L_1\cup\dots\cup L_m$ is a disjoint union of common
subsequences of $w_a$ and $w_j$, each of size at least $|w_a|\lambda^2/4$; since
the $L_t$ are disjoint subsets of the position set of $w_a$ we get
$m\le 4/\lambda^2$. Let $S = \opt_{i,a}\cap Y_{a,j}$ and pick $t$ maximizing
$|S\cap L_t| \ge |S|/m \ge T\lambda^2/4$. Each position of $S\cap L_t$ is matched
monotonically to a position of $w_i$ by $\opt_{i,a}$ and monotonically to a
position of $w_j$ by $L_t$; composing the two monotone matchings gives a common
subsequence of $w_i$ and $w_j$ of size $|S\cap L_t|$.
\end{proof}

The converse direction shows that the surrogate retains every strong
centre-based certificate needed by the sparsification argument. It relies on
the residual stopping condition in the construction of $Y_{c,b}$.

\begin{lemma}[Completeness]
\label{lem:complete}
Let $w_c \in W_A$ be a centre and suppose
$\lcs(\opt_{a,c}, w_b) \ge \lambda^2 D_{ab}/2$. Then
$I_c(a,b) = |\opt_{a,c}\cap Y_{c,b}| \ge \lambda^2 D_{ab}/4$.
\end{lemma}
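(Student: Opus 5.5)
Let $S:=\opt_{a,c}\subseteq[|w_c|]$ and fix a maximum common subsequence $Z$ of the string $w_c|_S$ and $w_b$, viewed as a set of positions $Z\subseteq S$ of $w_c$ together with a monotone matching into $w_b$. By hypothesis $|Z|=\lcs(\opt_{a,c},w_b)\ge\lambda^2D_{ab}/2$. The plan is to split $Z$ into the part inside $Y_{c,b}$ and the part outside it, and to show that the outside part is small because of the greedy stopping rule used to build $Y_{c,b}$. Since $|S\cap Y_{c,b}|\ge|Z\cap Y_{c,b}|$, it suffices to show $|Z\setminus Y_{c,b}|<d\lambda^2/4$.

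For the outside part I would use the construction of $Y_{c,b}$ from \cite[Lemma~3.22]{rsss19}, in the form recalled in Section~\ref{sec:tech-ours}. Greedy packing removes disjoint common subsequences of $w_c$ and $w_b$ from $w_c$ and stops only when the residual string $w_c$ restricted to $[|w_c|]\setminus Y_{c,b}$ has LCS with $w_b$ below $|w_c|\widetilde\lambda/2$. Here $|w_c|=d$ because $w_c\in W_A$, and $\widetilde\lambda=\lambda^2/2$, so the threshold is $d\lambda^2/4$. The positions of $Z\setminus Y_{c,b}$ lie in this residual position set. The restriction of $Z$'s monotone matching to them is a common subsequence of the residual string and $w_b$. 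Hence $|Z\setminus Y_{c,b}|<d\lambda^2/4$.

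Combining the two bounds gives
\[
I_c(a,b)\ge|Z\cap Y_{c,b}|=|Z|-|Z\setminus Y_{c,b}|>\lambda^2D_{ab}/2-d\lambda^2/4\ge\lambda^2D_{ab}/4,
\]
where the last step uses $D_{ab}\ge d$. This holds because, by the consequence of Observation~\ref{obs:asymmetry}, every $B$-window has length at least $d$, and the $A$-window has length exactly $d$.

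The main obstacle is pinning down the stopping threshold of the greedy packing exactly. The soundness proof (Lemma~\ref{lem:sound}) uses pieces of size at least $|w_a|\lambda^2/4$, so the residual after stopping must also be below $|w_c|\lambda^2/4=d\lambda^2/4$, measured in the centre's length $d$ and not in $D_{ab}$. I would state this invariant of the construction explicitly, namely that packing continues while a residual common subsequence of length at least $d\lambda^2/4$ exists. This is exactly where restricting centres to $W_A$ is used: if the threshold were in units of a longer centre, the slack $D_{ab}/4$ would not absorb it.
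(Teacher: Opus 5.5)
Your proof is correct and takes essentially the same approach as the paper. The paper also splits a common subsequence of $\opt_{a,c}$ and $w_b$ into the part inside $Y_{c,b}$, bounded by $|\opt_{a,c}\cap Y_{c,b}|$, and the part outside, bounded by the residual stopping threshold $d\lambda^2/4$, and then uses $D_{ab}\ge d$; you fix one optimal subsequence $Z$ and state explicitly the monotonicity fact, left implicit in the paper, that its outside part is a common subsequence of the residual string and $w_b$.
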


\begin{proof}
The construction of $Y_{c,b}$ terminates when the residual satisfies
$\lcs(w_c\setminus Y_{c,b},w_b) < |w_c|\lambda^2/4 = d\lambda^2/4$. Split
$\opt_{a,c}$, which is a subsequence of $w_c$, into its part inside $Y_{c,b}$ and
its part outside. Any common subsequence of $\opt_{a,c}$ and $w_b$ splits
accordingly, so
\begin{align*}
\lcs(\opt_{a,c},w_b)
 \le  |\opt_{a,c}\cap Y_{c,b}| + \lcs(\opt_{a,c}\setminus Y_{c,b},w_b)
 <  |\opt_{a,c}\cap Y_{c,b}| + d\lambda^2/4 .
\end{align*}
Rearranging gives
$|\opt_{a,c}\cap Y_{c,b}| > \lambda^2 D_{ab}/2 - d\lambda^2/4 \ge \lambda^2 D_{ab}/4$,
the last inequality because $D_{ab}\ge d$, which holds for every mixed pair since one of its two windows lies in $W_A$ and has length exactly $d$ by Observation~\ref{obs:asymmetry}.
\end{proof}

\begin{corollary}[Substituting the surrogate]
\label{cor:surrogate}
Definition~\ref{def:surrogate} may replace the test of \cite[Algorithm 2]{rsss19}
without changing either the guarantee of \cite[Theorem 3.21]{rsss19} or the
conclusion of Lemma~\ref{lem:bipartite}.
\end{corollary}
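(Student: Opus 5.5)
The corollary has two halves: the guarantee of \cite[Theorem 3.21]{rsss19} is one-sided and certifies density $\lambda^4/16$, and the conclusion of Lemma~\ref{lem:bipartite} bounds the number of missed pairs. I will prove them separately, using soundness for the first and completeness for the second.

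\emph{One-sidedness and the $\lambda^4/16$ certificate.} The only place the test enters \cite[Theorem 3.21]{rsss19} is through the value written into $M$ when a pair is marked. Suppose the surrogate marks $(i,j)$ with centre $w_a$, so $I_a(i,j)\ge\lambda^2D_{ij}/4$. Lemma~\ref{lem:sound} with $T=\lambda^2D_{ij}/4$ gives $\lcs(w_i,w_j)\ge\lambda^4D_{ij}/16$. Writing this certified value (not the raw intersection) into $M$ therefore keeps $M$ one-sided, and every marked pair has normalized LCS at least $\lambda^4/16$. This is exactly the guarantee the block dynamic program consumes.

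\emph{The missed-pair bound.} The proof of Lemma~\ref{lem:bipartite} used the test only through its hypothesis ``every $A$-close pair is marked.'' It suffices to show that this hypothesis still holds, with high probability, under the surrogate. Fix an $A$-close pair. Definition~\ref{def:aclose} is stated for an ordered pair $(w_i,w_j)$, while the contradiction in Lemma~\ref{lem:bipartite} produces $(w_a,w_b)\in W_B\times W_A$ with witnesses $w_c$ satisfying $\lcs_{w_c}(w_a,w_b)=\lcs(\opt_{a,c},w_b)\ge\lambda^2D_{ab}/2$. This is precisely the hypothesis of Lemma~\ref{lem:complete}, so $I_c(a,b)\ge\lambda^2D_{ab}/4$ and the surrogate marks the pair whenever $w_c$ is a sampled centre. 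Since at least $k_A^{1-\gamma}$ centres in $W_A$ witness the pair, sampling $\widetilde O(k_A^\gamma)$ centres uniformly from $W_A$ hits one with high probability. A union bound over the $\widetilde O(k^2)$ pairs and the polylogarithmically many thresholds and layers then gives the hypothesis of Lemma~\ref{lem:bipartite}, and its bound $\widetilde O(k^{2-\gamma}/\lambda)$ follows verbatim.

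\emph{Main obstacle: orientation bookkeeping.} I must check that the marked entry is read with the same roles as in the Tur\'an step. The pair $(w_a,w_b)$ lies in $W_B\times W_A$, but $M$ is indexed by $W_A\times W_B$, so I will define marking symmetrically: the unordered pair is marked if the surrogate fires in either orientation. Both $\opt_{a,c}$ and $Y_{c,b}$ must also be computed with respect to the centre $w_c\in W_A$. Lemma~\ref{lem:perpair} notes that both families are computed for every window, and the cached-alignment convention makes $\opt_{a,c}$ and $\opt_{c,a}$ consistent. Finally, the inequality $D_{ab}\ge d$ used inside Lemma~\ref{lem:complete} holds for every mixed pair by Observation~\ref{obs:asymmetry}, so the proof goes through.
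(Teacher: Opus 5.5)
Your proposal is correct and follows essentially the same route as the paper: Lemma~\ref{lem:sound} with $T=\lambda^2D_{ij}/4$ yields the $\lambda^4/16$ certificate, and Lemma~\ref{lem:complete} applied to the $k_A^{1-\gamma}$ witnesses $w_c$ of the pair $(w_a,w_b)$ produced by the Tur\'an step shows the surrogate still marks that pair. Your sampling and union-bound argument and your orientation bookkeeping (symmetric marking through the reverse product) only spell out what the paper handles in the remark after Lemma~\ref{lem:bipartite} and in Lemma~\ref{lem:round}.
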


\begin{proof}
The guarantee is Lemma~\ref{lem:sound}, which reproduces the bound
$\lcs(w_i,w_j)/\max\{|w_i|,|w_j|\}\ge\lambda^4/16$. For detection, the pair
produced in the proof of Lemma~\ref{lem:bipartite} satisfies
$\lcs(\opt_{a,c},w_b) \ge \lambda^2 D_{ab}/2$ for at least $k_A^{1-\gamma}$ centres
$w_c\in W_A$, and Lemma~\ref{lem:complete} converts each of these into
$I_c(a,b)\ge\lambda^2 D_{ab}/4$. This is precisely the certificate needed in the
repaired proof of Lemma~\ref{lem:bipartite}; no additional tie-breaking property
of the fixed longest common subsequences is used.
\end{proof}

\subsection{Sketching the ground set and batching by matrix multiplication}
\label{sec:mm}

The surrogate of Definition~\ref{def:surrogate} is an inner product of two $0/1$
vectors of dimension $|w_a| = d$, so a whole round is a single product of a
$k_A\times d$ matrix with a $d\times k_B$ matrix. We first reduce the inner
dimension.

Coordinate sampling reduces the surrogate's ambient dimension while preserving
the gap between its acceptance and rejection thresholds. Fresh independent
samples provide a simultaneous guarantee over all eligible centre--pair triples
in the run.

\begin{lemma}[Coordinate sketch]
\label{lem:sketch}
Fix a threshold and a dyadic $B$-layer with upper scale $d_q$, and call a mixed pair $(i,j)\in W_A\times W_B$ \emph{eligible} for the resulting run when its $B$-window lies in that layer, so that every eligible pair has $d_q/2<D_{ij}\le d_q$.  For a sufficiently large absolute constant $C$, put
\[
p_R:=\min\{1,2C\log n/(\lambda^2d_q)\}.
\]
If $p_R<1$, form $R\subseteq[d]$ by retaining each coordinate independently
with probability $p_R$; if $p_R=1$, take $R=[d]$. The sample is fresh for
this threshold, dyadic layer, and centre round, and is independent of the
input and the sampled centres. With probability at least $1-n^{-C}$,
simultaneously for every centre and every eligible pair $(i,j)$ in the run,
the sampled count
$|\opt_{i,a}\cap Y_{a,j}\cap R|$ exceeds
$3p_R\lambda^2D_{ij}/16$ whenever
$I_a(i,j)\ge\lambda^2D_{ij}/4$, and
$I_a(i,j)\ge\lambda^2D_{ij}/8$ whenever the sampled count exceeds this
entrywise threshold.
\end{lemma}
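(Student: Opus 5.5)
Treat each triple (centre $a$, eligible pair $(i,j)$) separately, then take a union bound. Fix such a triple and write $S:=\opt_{i,a}\cap Y_{a,j}\subseteq[d]$. Since $R$ is drawn independently of the input and of the sampled centres, the sets $\opt_{i,a}$ and $Y_{a,j}$ are deterministic once we condition on them. The sampled count $X:=|S\cap R|$ is then a sum of $|S|=I_a(i,j)$ independent Bernoulli$(p_R)$ variables, with mean $\mu=p_RI_a(i,j)$.

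First dispose of the case $p_R=1$. Here $R=[d]$ and $X=I_a(i,j)$ exactly. If $I_a(i,j)\ge\lambda^2D_{ij}/4$, then $X\ge\lambda^2D_{ij}/4>3\lambda^2D_{ij}/16$. Conversely, $X>3\lambda^2D_{ij}/16$ gives $I_a(i,j)>\lambda^2D_{ij}/8$. Both implications are deterministic.

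Now assume $p_R<1$, so $p_R=2C\log n/(\lambda^2d_q)$. Eligibility gives $D_{ij}>d_q/2$, hence $p_R\lambda^2D_{ij}>C\log n$.

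\emph{Completeness.} If $I_a(i,j)\ge\lambda^2D_{ij}/4$, then $\mu\ge p_R\lambda^2D_{ij}/4>C\log n/4$. The threshold $3p_R\lambda^2D_{ij}/16$ is at most $\tfrac34\mu$. The lower-tail Chernoff bound gives $\Pr[X\le\tfrac34\mu]\le e^{-\mu/32}\le n^{-C/128}$.

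\emph{Soundness.} We need: if $I_a(i,j)<\lambda^2D_{ij}/8$, then $X\le 3p_R\lambda^2D_{ij}/16$ with high probability. Monotonicity in $|S|$ lets us pad $S$ up to size $\lambda^2D_{ij}/8$, so we may assume $\mu\le p_R\lambda^2D_{ij}/8=:\mu^+$. The threshold equals $\tfrac32\mu^+$. The upper-tail Chernoff bound with mean bounded by $\mu^+$ gives $\Pr[X>\tfrac32\mu^+]\le e^{-\mu^+/12}\le n^{-C/96}$, since $\mu^+>C\log n/8$.

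The main obstacle is bookkeeping, not probability: the triple-wise bounds have exponent proportional to $C$, and this must survive the union bound.
\begin{itemize}
\item The number of triples in one run is at most $k_A\cdot k_A\cdot k_B\le n^3$, since centres and pairs are polynomially many.
\item Choosing the absolute constant in $p_R$ large enough relative to the requested exponent gives total failure probability $n^{3-\Omega(C)}\le n^{-C}$.
\item Concretely, I would first prove the per-triple bound $n^{-C/128}$, then require $C$ large so that $C/128-3\ge C'$ for the target exponent $C'$. Equivalently, I would rename constants, which the paper permits ("for a sufficiently large absolute constant $C$").
\end{itemize}
Freshness of $R$ for each threshold, layer and round is used only so that, for each run, $R$ is independent of everything it is applied to.
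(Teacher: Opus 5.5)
Your proof is correct and follows the paper's argument step for step. Both use the deterministic $p_R=1$ case, the bound $p_R\lambda^2D_{ij}>C\log n$ from eligibility, lower- and upper-tail Chernoff bounds at ratios $3/4$ and $3/2$, and a union bound over polynomially many centre--pair triples; your padding to $\mu^+$ is the standard way to run the upper tail with only an upper bound on the mean. Your explicit decoupling of the constant in $p_R$ from the failure exponent is genuinely needed, not cosmetic: at a boundary triple the lower tail is only $n^{-\Theta(C)}$ with a small hidden constant, a point the paper's proof passes over with ``choosing $C$ sufficiently large''.
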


\begin{proof}
If $p_R=1$, the assertions hold deterministically, because the sampled count
equals the true count and $1/8<3/16<1/4$.  Otherwise
$p_R\lambda^2D_{ij}\ge C\log n$, since $D_{ij}>d_q/2$.  If
$I_a(i,j)\ge\lambda^2D_{ij}/4$, the displayed threshold is at most three
quarters of the expectation, so the lower-tail Chernoff bound accepts with
failure probability $\exp(-\Theta(\log n))$.  Conversely, if
$I_a(i,j)<\lambda^2D_{ij}/8$, the threshold is at least three halves of the
expectation, so the upper-tail Chernoff bound rejects with the same failure
probability.  (If the completeness premise is infeasible because
$\lambda^2D_{ij}/4>d$, its implication is vacuous.)  Conditional on the
sampled centres, there are at most $k^{2+\gamma}$ tested centre--pair triples
across all rounds.  A union bound completes the proof after choosing $C$
sufficiently large.  The same Chernoff bound and $d\le d_q$ give
$|R|=O(p_Rd+\log n)=\widetilde O(\lambda^{-2})$ with the required
probability.  Independence of $R$ from the input and the centres makes the
conditioning valid.
\end{proof}

The relaxed sampled threshold costs only a factor two: by
Lemma~\ref{lem:sound}, a pair marked through our sketch has
$\lcs(w_i,w_j)/D_{ij}\ge\lambda^4/32$.  The formal guarantee in
\cite[Theorem~3.21]{rsss19} is $\lambda^4/16$; the weaker constant here is
the price of separating the two Chernoff tails, and it is absorbed into the
$\Omega(\lambda^3)$ guarantee.  An accepted product entry therefore writes
the certified lower bound $D_{ij}\lambda^4/32$ into the estimate table.

After sketching, all surrogate tests from one centre round can be evaluated by
two rectangular matrix products. The resulting cost is essentially quadratic
in the number of windows, provided the sketched inner dimension lies below the
dual-exponent threshold.

\begin{lemma}[Cost of one sparsification round]
\label{lem:round}
Let $\alpha>0$ be the dual exponent of rectangular matrix multiplication and suppose the sketch dimension of the run satisfies $|R|\le k^{\alpha_0}$ for a fixed $\alpha_0<\alpha$. Then the batched products used by procedure \textnormal{\textsc{WindowStepOne}} in Algorithm~\ref{alg:window_step_one} execute one sparsification round in time $k^{2+o(1)}+\widetilde O(kd)$, so that $Q = k^{2+o(1)}$ in Eq.~\eqref{eq:T2}.
\end{lemma}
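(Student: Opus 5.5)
The plan is to split one sparsification round with centre $w_a\in W_A$ into three parts: building the $0/1$ indicator matrices, one rectangular product, and an entrywise threshold pass. Fix the threshold and the dyadic $B$-layer, hence the sample $R\subseteq[d]$ of Lemma~\ref{lem:sketch}. The $k_A\times|R|$ matrix $U$ has row $i$ equal to the indicator of $\opt_{i,a}\cap R$. The $|R|\times k_B$ matrix $V$ has column $j$ equal to the indicator of $Y_{a,j}\cap R$, with columns of ineligible $B$-windows zeroed (the mask). Then $(UV)_{ij}=|\opt_{i,a}\cap Y_{a,j}\cap R|$ is exactly the sampled count of Lemma~\ref{lem:sketch}.

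\emph{Step 1: cost of forming $U$ and $V$.} The sets $\opt_{i,a}$ and $Y_{a,j}$ are produced by the round's preprocessing, already charged by Lemma~\ref{lem:perpair}. Writing them into sketched coordinates takes time proportional to their sizes. Each has size at most $|w_a|=d$, so filling both matrices costs $O((k_A+k_B)d)=\widetilde O(kd)$. Sorting $R$ and keeping a coordinate-to-column map, built once per round in $O(d)$ time, lets each position be tested for membership in $R$ in $O(1)$.

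\emph{Step 2: the product.} By Lemma~\ref{lem:sketch}, $|R|=\widetilde O(\lambda^{-2})$. By hypothesis $|R|\le k^{\alpha_0}$ with $\alpha_0<\alpha$. Pad $U$ and $V$ with zero rows and columns to square outer dimension $\max\{k_A,k_B\}=\widetilde O(k)$. Lemma~\ref{lem:rect_mm} then computes $UV$ in $k^{2+o(1)}$ field operations. The entries are integers in $[0,d]$, so working over $\mathbb{F}_p$ for a prime $p>d$ with $p=\mathrm{poly}(n)$ recovers them exactly. The word-RAM overhead is therefore polylogarithmic, which is absorbed into $k^{o(1)}$. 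If $p_R=1$, then $R=[d]$ and $|R|=d$; we need the same hypothesis $d\le k^{\alpha_0}$, which the statement assumes in general. The procedure may instead use a second product (the ``two products'' presumably split $Y$ by sublayer or handle the mask), but a single product suffices for the bound.

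\emph{Step 3: thresholding and writing $M$.} For every eligible $(i,j)$, compare $(UV)_{ij}$ with the pair-specific threshold $3p_R\lambda^2D_{ij}/16$; each $D_{ij}=|w_j|$ is available in $O(1)$. On acceptance, write $\max\{M_{ij},D_{ij}\lambda^4/32\}$. This pass costs $O(k_Ak_B)=\widetilde O(k^2)$.

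Summing the three steps gives $k^{2+o(1)}+\widetilde O(kd)$. The $\widetilde O(kd)$ term goes into $T_1$, so $Q=k^{2+o(1)}$ in Eq.~\eqref{eq:T2}. The main obstacle is the exact-arithmetic and model accounting in Step~2: checking that Lemma~\ref{lem:rect_mm}, which is stated for field operations, transfers to integer counts with polylogarithmic overhead, and that padding the unequal outer dimensions $k_A\ne k_B$ does not leave the $k^{2+o(1)}$ regime. Both are handled by the prime-field embedding and the bound $k_A,k_B=\widetilde O(k)$.
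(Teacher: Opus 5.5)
Your cost accounting matches the paper's proof. You restrict each $\opt$ and $Y$ set to $R$ in $O(d)$ per window, compute the $k\times|R|$ by $|R|\times k$ product via Lemma~\ref{lem:rect_mm}, work over $\mathbb F_q$ with a prime $q>d$ so the counts in $[0,|R|]\subseteq[0,d]$ are recovered exactly, and compare each entry with its own $D_{ij}$-dependent threshold. Padding to a square outer dimension and charging the $\widetilde O(kd)$ term to $T_1$ are harmless additions.

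What you get wrong is the second product, and with it the claim that one product suffices. The paper's two products are not a split by exact length inside a layer, which the paper explicitly avoids, and they are not a device for the mask. They are the two orientations of the surrogate.
\begin{itemize}
\item Your $UV$ is the forward product $U_AV_B^\top$. It counts $|\opt_{i,c}\cap Y_{c,j}\cap R|$ for $i\in W_A$, $j\in W_B$.
\item The round also computes $U_BV_A^\top$. It counts $|\opt_{a,c}\cap Y_{c,b}\cap R|$ for $a\in W_B$, $b\in W_A$, and writes an accepted entry into cell $(b,a)$ of the $W_A\times W_B$ table.
\end{itemize}
This reverse count is exactly the certificate produced by the Tur\'an argument of Lemma~\ref{lem:bipartite}. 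There the maximum-degree vertex $w_a$ lies on the $B$ side, so the $A$-close pair is $(w_a,w_b)\in W_B\times W_A$, with witnesses $\lcs(\opt_{a,c},w_b)\ge\lambda^2D_{ab}/2$. Corollary~\ref{cor:surrogate} converts these witnesses, via Lemma~\ref{lem:complete}, into lower bounds on $I_c(a,b)$.

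The forward count for the same cell is $|\opt_{b,c}\cap Y_{c,a}|$. It uses the canonical alignment of two $A$-windows $w_b,w_c$, and nothing in that argument makes it large. So a round consisting only of your product is sound but loses the missed-pair bound.

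For the time bound the fix is immediate. $U_B$ and $V_A$ are built in another $O(kd)$, and $U_BV_A^\top$ is again a $k\times|R|$ by $|R|\times k$ product costing $k^{2+o(1)}$. The stated bound therefore survives, but you should cost that product rather than argue it away.
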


\begin{proof}
Write $w_c$ for the centre of the round. Restricting the indicator vectors of $\opt_{i,c}$ and $Y_{c,j}$ to $R$ costs $O(|\opt_{i,c}| + |Y_{c,j}|) = O(d)$ per window, hence $O(kd) = O(n)$ per round.
Let $U_A, U_B$ be the sketched indicator matrices of the sets $\opt_{i,c}$ for
$i\in W_A$ and $i\in W_B$ respectively, and let $V_A, V_B$ be those of the sets
$Y_{c,j}$. Both orientations are needed and both are obtained from two products:
$U_A V_B^{\top}$ supplies the sampled counts for
$(i,j)\in W_A\times W_B$, while $U_B V_A^{\top}$ supplies the reverse
certificates with $a\in W_B$, $b\in W_A$.  These raw counts are only tests;
for each product entry we compare against its own
$3p_R\lambda^2D_{ij}/16$ threshold, and an accepted test writes
$D_{ij}\lambda^4/32$ into the estimate table.  Since LCS is symmetric, a
reverse certificate is written into the unique $W_A\times W_B$ table entry
with the $A$-side index first; no second table is maintained.  Entries outside
the eligible mixed-pair mask of the current threshold/layer run are ignored.
A fresh independent $R$ is drawn for every centre round, as in Lemma~\ref{lem:sketch}.  Write $s:=|R|$. By Lemma~\ref{lem:sketch}, $s = \widetilde O(\lambda^{-2}) = n^{2a+o(1)}$, and $s\le k^{\alpha_0}$ by hypothesis; the caller in Lemma~\ref{lem:main} discharges this at its optimum. For such aspect ratios, Lemma~\ref{lem:rect_mm} computes the product of a
$k\times s$ by an $s\times k$ matrix in $k^{2+o(1)}$ time. To realize the products in the word-RAM model, choose a prime
$q>d$ (for $d\ge2$, Bertrand's postulate gives one below $2d$) and work over
$\mathbb F_q$. Every dot product of the $0/1$ vectors lies in
$[0,s]\subseteq[0,d]$, so reduction modulo $q$ causes no wraparound. Field
operations use $O(\log n)$-bit words, and their polylogarithmic overhead is
absorbed in the $o(1)$ exponent. The case $d=1$ is immediate.
\end{proof}

Fix once and for all a reciprocal power of two
$\epsilon_{\mathrm{win}}\le c_{\mathrm{str}}/8$, where
$c_{\mathrm{str}}$ is the absolute constant in
Lemma~\ref{lem:rsss_window_compatible}. For a dyadic density guess $\nu$, define
\begin{align*}
J_\nu&:=\max\{j\in\mathbb Z_{\ge0}:
 (1+\epsilon_{\mathrm{win}})^j\epsilon_{\mathrm{win}}\nu\le1\},\\
\Theta_\nu&:=
\{(1+\epsilon_{\mathrm{win}})^j\epsilon_{\mathrm{win}}\nu:
 0\le j\le J_\nu\}\cup\{1\}.
\end{align*}
Duplicates are removed. A threshold/layer combination is \emph{applicable}
to a pair when the pair belongs to that dyadic layer in the current RSSS
orientation. Thus $|\Theta_\nu|=O(\log n)$ and the definition is independent
of which marking computations are executed.

All arithmetic is exact in the word-RAM model. Each threshold is a rational
whose numerator and denominator have $O(\log n)$ bits; the same is true of
$D\theta^2$, $D\theta^4$, and their ceilings. For a fixed guess $\nu$, the
denominators are nested powers of two, so the estimate matrix may be scaled
by one $O(\log n)$-bit common denominator before the numerical block dynamic
program. Additions, maxima, and comparisons therefore use a constant number
of $O(\log n)$-bit words up to polylogarithmic overhead. This specifies the
nonnegative real entries without a real-RAM assumption.

\begin{remark}[Threshold parameterization]
\label{rem:threshold_parameterization}
Definition~\ref{def:surrogate}, Lemmas~\ref{lem:sound},
\ref{lem:complete}, \ref{lem:sketch}, and \ref{lem:round}, and
Corollary~\ref{cor:surrogate} are uniform in the threshold parameter.
For every $\theta\in(0,1]$, the same statements and proofs hold after
replacing $\lambda$ by $\theta$. Procedure
\textnormal{\textsc{WindowStepOne}} applies this parameterized form for
each $\theta\in\Theta_\nu$. Since $\theta\ge\epsilon_{\mathrm{win}}\nu$, the sampled
dimension is $\widetilde O(\theta^{-2})\le
\widetilde O(\nu^{-2})$; the fixed-crossover choice in Lemma~\ref{lem:main} ensures the required
rectangular matrix-multiplication aspect ratio.
\end{remark}

We stress that only $\alpha > 0$ is used, not any particular bound on $\omega$; the
resulting exponent is therefore independent of the value of $\omega$.

\begin{algorithm}[!ht]
\caption{Certified batched sparsification (window Step~1)}
\label{alg:window_step_one}
\begin{algorithmic}[1]
\Procedure{WindowStepOne}{$A,B,\nu,d,k,\gamma,\chi$}
\Comment{Lemma~\ref{lem:round}}
\State Construct $W_A,W_B$ and the dyadic $B$-layers for the current
scale and RSSS orientation $\chi$.
\State Initialize $M^{\mathrm{pre}}\gets 0$ and
$\mathcal S\gets\emptyset$.
\For{each $\theta\in\Theta_\nu$ and each dyadic $B$-layer $q$}
  \State Set $g\gets\lceil C_{\mathrm{ctr}}|W_A|^\gamma\log n\rceil$
  and sample $g$ centres independently and uniformly from $W_A$.
  \For{each sampled centre $w_c$}
    \State Compute every fixed alignment $\opt_{i,c}$ and every
    $Y_{c,j}$ using Lemma~\ref{lem:perpair}.
    \State Set $p_R\gets\min\{1,2C\log n/(\theta^2d_q)\}$ and draw a
    fresh $R\subseteq[d]$ by retaining each coordinate with probability $p_R$.
    \State Form $U_A,U_B,V_A,V_B$ on $R$, and compute
    $U_AV_B^\top$ and $U_BV_A^\top$.
    \For{each eligible $(i,j)\in W_A\times W_B$}
      \If{either oriented count is at least $\lceil3p_R\theta^2D_{ij}/16\rceil$}
        \State $M^{\mathrm{pre}}_{ij}\gets
        \max\{M^{\mathrm{pre}}_{ij},D_{ij}\theta^4/32\}$.
      \EndIf
    \EndFor
  \EndFor
  \State Add the threshold, layer, actual lengths, and mark bits of this
  run to $\mathcal S$.
\EndFor
\State \Return $(M^{\mathrm{pre}},\mathcal S)$.
\EndProcedure
\end{algorithmic}
\end{algorithm}

\begin{remark}
\label{rem:nostack}
One cannot batch several rounds into a single product. Concatenating the sketches
of $g$ rounds computes $\sum_a I_a(i,j)$, and a pair may clear the threshold in the
sum without clearing it in any single round; Lemma~\ref{lem:sound} then fails and
the estimate matrix violates the requirement $M(i,j)\le\lcs(w_i,w_j)$ of
\cite[Algorithm 6]{rsss19}, so the algorithm may output more than $\lcs(A,B)$.
Grouping $g$ rounds and lowering the threshold degrades the guarantee by a factor
$g$, so $g = O(1)$.
\end{remark}

\subsection{The small-\texorpdfstring{$\lambda$}{lambda} regime}
\label{sec:small-lambda}

When $\lambda$ is small the window machinery becomes expensive and one switches to
a direct approximation. \cite{rsss19} samples $A$ at rate $\lambda^3$ and extends,
which costs $\widetilde O(n^2\lambda^7)$, that is $r = 7$ in Eq.~\eqref{eq:T0}. The
tradeoff of Bringmann, Cohen-Addad and Das is better.

\begin{corollary}[Small-density tradeoff]
\label{cor:r}
For $a \le 2/15$, an approximation factor $\lambda^{-3} = n^{3a}$ is achievable in
time $n^{2-15a/2}$, that is, $r = 15/2$ in Eq.~\eqref{eq:T0}. At the optimum
$a = 4/189$, so the restriction is not binding.
\end{corollary}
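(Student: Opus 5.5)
We obtain the corollary by instantiating Lemma~\ref{lem:bcd} with the parameter $\varepsilon$ chosen so that the Bringmann--Cohen-Addad--Das approximation factor matches the target factor $\lambda^{-3}$. With $\lambda=n^{-a}$, the target factor is $n^{3a}$. Lemma~\ref{lem:bcd} gives factor $\widetilde O(n^{2\varepsilon/5})$ in time $O(n^{2-\varepsilon})$. We therefore set $2\varepsilon/5=3a$, that is $\varepsilon=15a/2$, and obtain running time $n^{2-15a/2}$. This is the claim $r=15/2$ in Eq.~\eqref{eq:T0}.

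The lemma requires $0<\varepsilon\le1$, which is exactly $a\le 2/15$. This explains the hypothesis. When $a=0$, i.e.\ $\lambda$ is constant, the desired factor is $O(1)$, and the exact quadratic dynamic program already meets the bound $n^{2}$.

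The first point needing care is the polylogarithmic loss hidden in $\widetilde O(n^{2\varepsilon/5})$. The returned estimate is a one-sided approximation with factor $n^{3a}\cdot\mathrm{polylog}(n)$, whereas we want $\Omega(\lambda^3)$. To absorb this, I would choose $\varepsilon$ slightly larger: $\varepsilon=15a/2+O(\log\log n/\log n)$, making $n^{2\varepsilon/5}\cdot\mathrm{polylog}(n)\le n^{3a}$. The running time then changes only by a polylogarithmic factor, which is suppressed in the exponent bookkeeping of Section~\ref{sec:accounting}. The estimate stays one-sided because Lemma~\ref{lem:bcd} returns one-sided estimates.

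The second point is that $\lambda$ is unknown. I would handle this by the same outer geometric search over $\lambda$ that the paper already suppresses. For each dyadic guess, run the BCD routine with the corresponding $\varepsilon$ and take the maximum of the one-sided estimates. Only $O(\log n)$ guesses are needed, and the maximum remains at most $\lcs(A,B)$.

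Finally, the claim that the restriction is not binding is a numerical check on the balanced optimum, given as $a=4/189$: we simply verify $4/189<2/15$. I expect no real obstacle here. The only delicate steps are the polylogarithmic slack and the range condition on $\varepsilon$.
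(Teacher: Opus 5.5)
Your main computation is exactly the paper's one-line proof: setting $2\varepsilon/5=3a$ in Lemma~\ref{lem:bcd} gives $\varepsilon=15a/2$ and time $n^{2-15a/2}$. The range condition $\varepsilon\le1$ is $a\le2/15$, and $4/189<2/15$. However, two of your added refinements are wrong as written.

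\textbf{The polylogarithmic absorption has the sign reversed.} In Lemma~\ref{lem:bcd} the factor $n^{2\varepsilon/5}$ increases with $\varepsilon$. Taking $\varepsilon=15a/2+O(\log\log n/\log n)$ therefore makes the factor exceed $n^{3a}$ by even more. You need $\varepsilon=15a/2-O(\log\log n/\log n)$, falling back to the exact dynamic program if this is nonpositive. It is this decrease that costs a polylogarithmic factor in the time $n^{2-\varepsilon}$. The paper gets the same effect differently. It keeps $\varepsilon=15a/2$ and instead lowers the crossover by a polylogarithmic factor, running the window branch for every guess $\nu\ge n^{-a}\log^{-c}n$. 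Then, wherever BCD is relied upon, $\lambda^3\le n^{-3a}\log^{-3c}n$ absorbs the hidden polylog.

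\textbf{Your dyadic search over $\lambda$ inside this branch breaks the time bound.} A guess $\lambda'=n^{-a'}$ with $a'$ near $0$ costs $n^{2-15a'/2}\approx n^2$. No guessing is needed here, because $a$ is the known crossover exponent, not the unknown density. A single call at $\varepsilon=15a/2$ has factor $n^{3a}\le\lambda^{-3}$ for every $\lambda\le n^{-a}$ simultaneously. Densities above the crossover are handled by the window branch, and the two one-sided outputs are combined by taking their maximum.
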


\begin{proof}
Set $2\varepsilon/5 = 3a$ in Lemma~\ref{lem:bcd}.
\end{proof}

The following endpoint supplies the direct branch when the LCS density is below
a fixed polynomial threshold. It specializes the BCAD approximation to
$n^{8/5}$ time while retaining a $\lambda^{-3}$ approximation factor and a
one-sided output.

\begin{theorem}[BCAD at a fixed small-density crossover]
\label{thm:small_density_eight_fifths}
For every fixed $a>4/75$ and every fixed failure exponent $C>0$, there is a
randomized algorithm that, given length-$n$ strings with
$\lambda=\lcs(A,B)/n\le n^{-a}$, returns a number $Z$ satisfying
\[
\lambda^3\lcs(A,B)\le Z\le\lcs(A,B)
\]
with probability at least $1-n^{-C}$, in time $O_{a,C}(n^{8/5})$.
\end{theorem}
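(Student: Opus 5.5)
We apply Lemma~\ref{lem:bcd} with the fixed parameter $\varepsilon:=2/5$, so that the running time is $O(n^{2-\varepsilon})=O(n^{8/5})$. The approximation factor is then $\widetilde O(n^{2\varepsilon/5})=\widetilde O(n^{4/25})$. The plan is to show that on the promised range $\lambda\le n^{-a}$ with $a>4/75$, this factor is at most $\lambda^{-3}$ for all large $n$, and then to handle constants, failure probability and small $n$.

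\textbf{Step 1: comparing the factors.} Write the BCAD guarantee as $\lcs(A,B)/Q\le\widehat L\le\lcs(A,B)$ with $Q\le c(\log n)^{c'}n^{4/25}$ for constants $c,c'$ from Lemma~\ref{lem:bcd}. Since $\lambda\le n^{-a}$, we have $\lambda^{-3}\ge n^{3a}$, and $3a>4/25$ strictly. Hence
\[
Q\le c(\log n)^{c'}n^{4/25}\le n^{3a}\le\lambda^{-3}
\]
once $n\ge n_0(a)$, because the gap $n^{3a-4/25}$ absorbs the polylogarithm. Then $\widehat L\ge\lcs(A,B)/Q\ge\lambda^3\lcs(A,B)$, and the upper bound $\widehat L\le\lcs(A,B)$ is the one-sidedness in Lemma~\ref{lem:bcd}. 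We output $Z:=\widehat L$. The strict inequality $a>4/75$ is exactly what absorbs the hidden polylogarithmic factor; at $a=4/75$ the argument would fail. This is the only substantive point.

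\textbf{Step 2: failure probability and small inputs.} Lemma~\ref{lem:bcd} holds with high probability, which by the paper's convention means failure at most $n^{-C}$ for any requested fixed $C$, with constants depending on $C$. This gives the $O_{a,C}$ dependence. If the BCD high-probability statement is only for a fixed exponent, we amplify instead. We run $O(C)$ independent copies and output the maximum. Every successful copy lies in $[\lambda^3\lcs,\lcs]$, so the maximum is correct unless some copy overshoots $\lcs$. Taking the maximum is therefore safe only if the upper bound holds on every copy, and a union bound over the $O(C)$ copies handles this. If the failure events are not of that form, we take the median of $O(C)$ copies and apply a Chernoff argument, which bounds the probability that at least half the copies fail.

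For $n<n_0(a)$ we run the exact quadratic dynamic program, which takes $O_a(1)$ time and returns $Z=\lcs(A,B)$ exactly.

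Finally, the algorithm never needs to know $\lambda$. The guarantee only needs to hold under the promise $\lambda\le n^{-a}$, so no density guessing is required.
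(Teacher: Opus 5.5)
Your proposal is correct and matches the paper's proof: run Lemma~\ref{lem:bcd} with $\varepsilon=2/5$, use the strict gap $3a-4/25>0$ to absorb the hidden $K\log^q n$ factor once $n\ge n_0(a)$, amplify by a $C$-dependent constant number of independent repetitions and take the maximum (safe because each run's output is one-sided), and solve inputs with $n<n_0(a)$ exactly. One aside in your fallback discussion does not work as stated. If overshooting $\lcs(A,B)$ had some fixed per-copy failure rate $n^{-c_0}$, a union bound over $O(C)$ copies would not bring it below $n^{-C}$. In that case only your median variant, which relies on polynomially small per-copy failure, actually amplifies.
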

\begin{proof}
Run the algorithm of Lemma~\ref{lem:bcd} with $\varepsilon=2/5$. Its runtime
is $O(n^{2-2/5})=O(n^{8/5})$, and for some absolute constants $K,q>0$ its
one-sided output is at least
\[
\frac{\lcs(A,B)}{K n^{4/25}\log^q n}
\]
with high probability. Put $\delta:=3a-4/25>0$. For all
$n\ge n_0(a)$, one has $K\log^q n\le n^\delta$, and therefore the output is
at least
\[
n^{-4/25-\delta}\lcs(A,B)
=n^{-3a}\lcs(A,B)
\ge\lambda^3\lcs(A,B).
\]
The output is one-sided by Lemma~\ref{lem:bcd}. A constant number depending on
$C$ of independent repetitions, followed by taking the maximum, gives failure
probability at most $n^{-C}$. Inputs below $n_0(a)$ are solved exactly; because
there are only finitely many such lengths, their quadratic cost is absorbed in
the $a$-dependent constant multiplying $n^{8/5}$.
\end{proof}

Both the window algorithm and Lemma~\ref{lem:bcd} output a number that is at most
$\lcs(A,B)$, so the overall algorithm may run both and return the larger of the two;
the guarantee holds as soon as one of the two is good. Because the factor of Lemma~\ref{lem:bcd} is $\widetilde O(n^{3a})$ rather than $n^{3a}$, the window branch is run for every guess $\nu\ge n^{-a}\log^{-c}n$ and Lemma~\ref{lem:bcd} is used only below that. This raises the window exponent by $O(\log\log n/\log n)$, hence the running time by a polylogarithmic factor, and it leaves $\lambda^3\le n^{-3a}\log^{-3c}n$ wherever Lemma~\ref{lem:bcd} is applied, which absorbs the polylogarithmic factor in its approximation guarantee once the absolute constant $c$ is large enough.

\subsection{The ANSS endpoint}
\label{sec:anss_endpoint}

The next theorem packages the ANSS LIS estimator into the form consumed by the
low-density branch. The reduction makes the approximation loss explicit and
obtains high-probability correctness together with a worst-case time cap.

\begin{theorem}[Quantitative ANSS endpoint for LCS]
\label{thm:anss_lcs_endpoint}
The following holds with the absolute constants
$c_{\mathrm A}=1/2$ and $K_{\mathrm A}=2$. For every fixed $C>0$ there is a
randomized algorithm
which, on two strings $A,B\in\Sigma^n$, returns an integer $Z_{\mathrm A}$
such that, for all sufficiently large $n$,
\[
 c_{\mathrm A}
 \frac{\lcs(A,B)}{n^{K_{\mathrm A}/\sqrt{\log\log n}}}
 \le Z_{\mathrm A}\le \lcs(A,B)
\]
with probability at least $1-n^{-C}$.  Its worst-case running time is
$O_C(n^{4/3}\log n)$ in the random-access word-RAM model.  In particular,
for every fixed $a>0$ the lower bound is at least
$c_{\mathrm A}n^{-3a}\lcs(A,B)$ for all sufficiently large $n$.
\end{theorem}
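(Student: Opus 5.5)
The plan is to reduce LCS to LIS through the Hunt--Szymanski match sequence and feed that sequence to Lemma~\ref{lem:anss}. The reduction goes as follows. For every symbol $\sigma$, let $a_\sigma,b_\sigma$ be its multiplicities in $A,B$. Form $y$ by scanning $i=1,\ldots,n$ and writing, for each $i$, the positions $j$ with $B[j]=A[i]$ in \emph{decreasing} order. Strictly increasing subsequences of $y$ then pick at most one $j$ per $i$, so $\lis(y)=\lcs(A,B)$. The length is $N=\sum_\sigma a_\sigma b_\sigma$.

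Random access to $y$ comes from prefix sums of the block lengths $b_{A[i]}$ together with the sorted occurrence list of each symbol in $B$. After $O(n\log n)$ preprocessing (using the coordinate compression of Section~\ref{sec:basic_notation}), a query costs $O(\log n)$ by binary search. The key a~priori control on $N$ is the quantity $h:=\max_\sigma\min\{a_\sigma,b_\sigma\}$, which is computable exactly in $O(n)$ time and satisfies two bounds:
\[
h\le\lcs(A,B),\qquad N\le\sum_\sigma\min\{a_\sigma,b_\sigma\}\max\{a_\sigma,b_\sigma\}\le 2nh .
\]
The algorithm always outputs at least $h$, which is one-sided. Hence $N/\lcs(A,B)\le 2n$ for the relevant guesses, and this keeps the $1/\mu$ factor of Lemma~\ref{lem:anss} linear in $n$.

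I would then split on the size of $L=\lcs(A,B)$, without knowing it.
\begin{itemize}[leftmargin=1.5em]
\item First run the output-sensitive dynamic program of Lemma~\ref{lem:perpair} (states $p_\ell$, binary search in occurrence lists) on $(A,B)$. Abort it as soon as some $p_\ell$ with $\ell>n^{1/3}$ becomes finite. If it completes, it returns $L$ exactly in $O(nL\log n)=O(n^{4/3}\log n)$ time.
\item Otherwise $L>n^{1/3}$. Run Lemma~\ref{lem:anss} on $y$ for each dyadic guess $\widetilde L\in[n^{1/3},n]$, with $\mu:=\widetilde L/(4\alpha N)$ suitably clamped so that $1/N<\mu=o(1)$. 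For the correct guess, the lemma's hypothesis $\lis(y)\ge\mu N$ holds, and the additive loss $\mu N$ is at most half of $\lis(y)/\alpha$, which gives $Z\ge L/(2\alpha)$. This is the source of $c_{\mathrm A}=1/2$.
\item Return the maximum of $h$ and all completed estimates. The upper inequality of Lemma~\ref{lem:anss} holds on its success event for every guess, so the maximum remains one-sided.
\end{itemize}

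The circularity of $\alpha$ depending on $\mu$ is handled by bounding $\alpha$ uniformly over the admissible range. Since $1/\mu\le\mathrm{poly}(n)$, we have $\epsilon=1/\log\log(1/\mu)\ge 1/\log\log(n^{O(1)})$. It remains to check that $(1/\mu)^{\sqrt\epsilon}\le n^{O(1)/\sqrt{\log\log n}}$, and that the $(\log(1/\mu))^{2^{O(\log^2(1/\epsilon)/\sqrt\epsilon)}}$ factor is $n^{o(1/\sqrt{\log\log n})}$. Together these give $\alpha\le n^{K_{\mathrm A}/\sqrt{\log\log n}}$ with $K_{\mathrm A}=2$ for large $n$. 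This is routine but fiddly asymptotics.

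Finally I would convert expected time into worst-case time and high probability. Each ANSS call has expected time $(1/\mu)N^{o(1)}\le O(\alpha N/L)\,n^{o(1)}$. I need to check that $N/L$ is at most $n^{4/3}$ up to these factors, using $N\le 2nh\le 2nL$ (even $O(n^{1+o(1)})$ suffices). Cap each run at a constant multiple of its expected budget; Markov's inequality then gives constant success probability per run. Run $O_C(\log n)$ independent capped trials and take the median of the completed outputs; a Chernoff bound gives failure probability at most $n^{-C}$, and the median of one-sided values is one-sided on the joint event. The final clause of the theorem follows because $n^{K_{\mathrm A}/\sqrt{\log\log n}}\le n^{3a}$ eventually.

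The main obstacle is the bookkeeping that makes the $\mu$-schedule simultaneously satisfy three requirements. These are the hypothesis $\lis(y)\ge\mu N$, the additive slack $\mu N\le L/(2\alpha)$ with $\alpha$ itself depending on $\mu$, and the worst-case cap $O(n^{4/3}\log n)$, where the $N^{o(1)}$ overhead of ANSS must fit inside the slack between $n^{1+o(1)}$ and $n^{4/3}$. I would first try to bound $N^{o(1)}\cdot N/L$ by $n^{4/3}$ directly, using the threshold $L>n^{1/3}$ together with $N\le n^2$. If that fails, I would raise the exact-DP threshold to balance against the exact branch.
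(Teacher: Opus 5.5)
\paragraph{The gap: promise violation in the guessing step.}
Lemma~\ref{lem:anss} is a promise statement. Its joint success event, and with it the upper inequality $\widehat L\le\lis(y)$, is asserted only for inputs satisfying $\lis(y)\ge\mu N$; the paper stresses that the upper bound is not unconditional. You run it for every dyadic $\widetilde L\in[n^{1/3},n]$ with $\mu N=\widetilde L/(4\alpha)$. Every guess with $\widetilde L>4\alpha L$ therefore violates the promise, and nothing bounds those outputs, or their medians, from above. Your sentence ``the upper inequality holds on its success event for every guess'' is exactly what the lemma does not supply. The maximum over guesses is therefore not shown to be at most $\lcs(A,B)$, so one-sidedness is unproven.

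\paragraph{A second problem: the small guesses blow the time budget.}
Take $A=B=\sigma^n$. The DP aborts, $N=n^2$, and the guess $\widetilde L=n^{1/3}$ gives $1/\mu\approx4\alpha n^{5/3}$. Capping that run at a multiple of its own expected time exceeds $O(n^{4/3}\log n)$. Your estimate $O(\alpha N/L)\,n^{o(1)}$ covers only the correct guess.

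\paragraph{The missing idea: a certified lower bound on $L$ fixes $\mu$.}
You already hold a certified lower bound, $L\ge h\ge N/(2n)$, which makes guessing unnecessary. Choose one $\mu$ whose additive loss $\mu N$ is at most this known bound, so the promise always holds. The paper does the following.
\begin{itemize}
\item If $N\le 2n\ell$ with $\ell=\lceil\log n\rceil$, it materializes the sequence and computes its LIS exactly. This branch also secures $1/N<\mu$.
\item Otherwise it sets $\mu=1/(2n\ell)$ and $L_0=N/(2n)$. Then $\lis(y)=L\ge L_0=\ell\mu N>\mu N$.
\item The additive loss is $\mu N=L_0/\ell$, and it returns $\max\{\lfloor L_0\rfloor,\widetilde L\}$, where $\widetilde L$ is the amplified median.
\end{itemize}
Correctness is a two-case argument.
\begin{itemize}
\item If $L\ge 2\alpha L_0/\ell$, the median is at least $L/\alpha-L_0/\ell\ge L/(2\alpha)$.
\item Otherwise $\lfloor L_0\rfloor\ge L_0/2>L\ell/(4\alpha)\ge L/(2\alpha)$.
\end{itemize}
The same works with your $h$, for instance $\mu:=h/(\ell N)$.

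\paragraph{What the fixed $\mu$ buys.}
\begin{itemize}
\item It removes the $\alpha$--$\mu$ circularity you identified as the main obstacle, because $\mu$ no longer depends on $\alpha$.
\item Since $\log(1/\mu)=\log n+O(\log\log n)$, the bound $\alpha\le n^{2/\sqrt{\log\log n}}$ follows directly.
\item The expected time is $(1/\mu)N^{o(1)}\log n=n^{1+o(1)}$.
\item The $n^{4/3}$ in the statement is then just a uniform Markov cap with timeout probability $n^{-1/3+o(1)}$, not a balance point. Your exact-DP branch with threshold $n^{1/3}$ is unnecessary.
\end{itemize}
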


\begin{algorithm}[!ht]
\caption{One execution of the ANSS small-$\lambda$ endpoint}
\label{alg:small_lambda_estimate}
\begin{algorithmic}[1]
\Procedure{SmallLambdaEstimate}{$A,B\in\Sigma^n$}
  \Comment{Single-run core of Theorem~\ref{thm:anss_lcs_endpoint}}
  \If{$n<2$}
    \State \Return $\lcs(A,B)$.
  \EndIf
  \State Compute the increasing occurrence lists
  $P_\sigma\gets\{j\in[n]:B_j=\sigma\}$ for all symbols $\sigma$.
  \State Set $s_0\gets0$ and
  $s_i\gets s_{i-1}+|P_{A_i}|$ for $i\in[n]$; put $N\gets s_n$.
  \State Regard $Z\in[n]^N$ as the virtual concatenation of
  $P_{A_1},\ldots,P_{A_n}$, each list written in decreasing order.
  \State Answer $Z[p]$ by finding the unique $i$ with
  $s_{i-1}<p\le s_i$ and reading entry $p-s_{i-1}$ of $P_{A_i}$ in reverse.
  \State Set $\ell\gets\lceil\log n\rceil$.
  \If{$N\le2n\ell$}
    \State Materialize $Z$ and \Return its strict LIS length.
  \EndIf
  \State Set $L_0\gets N/(2n)$, $\mu\gets1/(2n\ell)$, and
  $\epsilon\gets1/\log\log(1/\mu)$.
  \State Run one complete fresh execution of the estimator of
  Lemma~\ref{lem:anss} on virtual $Z$; let $\widehat L$ be its output.
  \State \Return $\max\{\lfloor L_0\rfloor,\widehat L\}$.
\EndProcedure
\end{algorithmic}
\end{algorithm}

\begin{algorithm}[!ht]
\caption{Capped-median quantitative small-$\lambda$ endpoint}
\label{alg:quantitative_small_lambda_estimate}
\begin{algorithmic}[1]
\Procedure{QuantitativeSmallLambdaEstimate}{$A,B,C$}
  \Comment{Theorem~\ref{thm:anss_lcs_endpoint}}
  \State Set $B_{\mathrm{cap}}\gets\lceil n^{4/3}\rceil$ and
  $q\gets2\lceil18(C+1)\log n\rceil+1$.
  \For{$r=1,\ldots,q$}
    \State Run an independent call
    $Y_r\gets\Call{SmallLambdaEstimate}{A,B}$ for at most
    $B_{\mathrm{cap}}$ word operations.
    \Comment{Algorithm~\ref{alg:small_lambda_estimate}}
    \If{the call reaches the cap}
      \State Abort it and set $Y_r\gets0$.
    \EndIf
  \EndFor
  \State \Return $\operatorname{median}(Y_1,\ldots,Y_q)$.
\EndProcedure
\end{algorithmic}
\end{algorithm}

\begin{proof}
Write $L:=\lcs(A,B)$.  For $i\in[n]$, let $Z_i$ be the list of positions
$j\in[n]$ for which $B_j=A_i$, written in decreasing order, and concatenate
these lists to form $Z:=Z_1\circ\cdots\circ Z_n$.  Put
$M:=|Z|$.  We first verify the two elementary properties of this reduction
that will be used below.

By the convention in Section~\ref{sec:basic_notation}, an increasing
subsequence is strictly increasing. Such a subsequence of $Z$ uses at most
one entry from each decreasing block $Z_i$. Thus its block indices and its values give increasing positions
in $A$ and $B$ carrying the same symbols, and hence a common subsequence.
Conversely, the matched positions of any common subsequence select one entry
from each of the corresponding blocks and form a strictly increasing subsequence of
$Z$.  Therefore
\begin{equation}
 \lis(Z)=L. \label{eq:anss_flatten}
\end{equation}
For a symbol $\sigma$, let $f_A(\sigma)$ and $f_B(\sigma)$ be its two
frequencies and put
$a_\sigma:=\min\{f_A(\sigma),f_B(\sigma)\}$ and
$b_\sigma:=\max\{f_A(\sigma),f_B(\sigma)\}$.  Then
\begin{equation}
 M=\sum_\sigma a_\sigma b_\sigma
 \le \max_\sigma a_\sigma\sum_\sigma b_\sigma
 \le 2n\max_\sigma a_\sigma
 \le 2nL. \label{eq:anss_match_pairs}
\end{equation}
The last inequality follows by taking the common subsequence consisting only
of a symbol attaining $\max_\sigma a_\sigma$.

All occurrence lists of symbols in $B$ and their prefix sums across the
blocks $Z_i$ can be constructed in $O(n\log n)$ time.  Given a position in
$Z$, predecessor search among the prefix sums locates its block, after which
the requested value is read from the appropriate occurrence list in reverse
order.  This gives random access to the virtual sequence $Z$ in $O(\log n)$
time without materializing it.

Let $\ell:=\lceil\log n\rceil$.  If $M\le2n\ell$, we materialize $Z$ and
compute its strict LIS exactly in $O(M\log M)=O(n\log^2n)$ time.  Assume
henceforth that $M>2n\ell$, and set
\[
 N:=M,\qquad \mu:=\frac1{2n\ell},\qquad L_0:=\frac{M}{2n}.
\]
Then $1/N<\mu<1$, and by~\eqref{eq:anss_match_pairs},
\begin{equation}
 \lis(Z)=L\ge L_0=\mu N\ell>\mu N. \label{eq:anss_promise}
\end{equation}

We apply Lemma~\ref{lem:anss} with $N=M$, the value of $\mu$ above, and its
prescribed parameter schedule
\[
 \epsilon=\frac1{\log\log(1/\mu)}.
\]
One complete execution has a joint success event on which its output
$\widehat L$ satisfies
\begin{equation}
 \frac{L}{\alpha}-\mu N\le\widehat L\le L, \qquad
 \alpha=(1/\mu)^{\sqrt\epsilon}
 (\log(1/\mu))^{2^{O(\log^2(1/\epsilon)/\sqrt\epsilon)}}.
 \label{eq:anss_parameterized}
\end{equation}
The probability of this event is $1-o(1)$, and the total expected running
time of the complete recursive call, including its precision-tree
preprocessing, is $(1/\mu)N^{o(1)}$; the latter is the expected-time
calculation in Appendix~A of~\cite{anss22}.  Both inequalities in
\eqref{eq:anss_parameterized} belong to the same success event.  We do not
assume that the upper inequality holds for every outcome.

We next make the hidden rate in~\eqref{eq:anss_parameterized} explicit at
this particular schedule.  Put
\[
 u:=\log(1/\mu),\qquad x:=\log u.
\]
Thus $\epsilon=1/x$.  If
$T:=2^{O(\sqrt{x}(\log x)^2)}$, then
\[
 \log\alpha=\frac{u}{\sqrt{x}}+xT.
\]
The displayed approximation formula is Theorem~4.3 of~\cite{anss22}. The
fact needed here that its hidden constants are independent of $\epsilon$
follows from Claim~A.1 and the recursion analysis in Appendix~A of the full
version. Since
$u=e^x$ and $T=\exp(O(\sqrt{x}(\log x)^2))$, we have
$xT=o(u/\sqrt{x})$.  Here
\[
 u=\log(2n\ell)=\log n+O(\log\log n),\qquad
 x=\log\log n+o(1).
\]
Consequently, for all sufficiently large $n$,
\begin{equation}
 \frac{\log\alpha}{\log n}
 =\frac{1+o(1)}{\sqrt{\log\log n}},
 \qquad
 \alpha\le n^{2/\sqrt{\log\log n}}.
 \label{eq:anss_rate}
\end{equation}
At the same schedule, the expected time of
one execution, including the $O(\log n)$ overhead for each virtual access,
is
\[
 (1/\mu)N^{o(1)}\log n
 \le 2n\ell\,(n^2)^{o(1)}\log n=n^{1+o(1)}.
\]

It remains to convert expectation and the one-run success probability into
the claimed worst-case and failure bounds.  Cap a fresh execution after
$B:=\lceil n^{4/3}\rceil$ operations and make a timed-out execution output
$0$.  Markov's inequality bounds its timeout probability by
$n^{-1/3+o(1)}$.  Since the joint success probability in
\eqref{eq:anss_parameterized} tends to one, for all sufficiently large $n$
a capped execution completes and is jointly successful with probability at
least $2/3$.

Take an odd number
\[
 q:=2\lceil 18(C+1)\log n\rceil+1
\]
of independent capped executions, using fresh randomness throughout, and
let $\widetilde L$ be the median of their outputs.  Hoeffding's inequality
shows that the probability that at most half are complete and successful is
at most $\exp(-q/18)\le n^{-(C+1)}$.  If a strict majority are successful,
then a strict majority of their values lie in the interval
\[
 [L/\alpha-L_0/\ell,\,L],
\]
and therefore the median lies in the same interval.  Finally return
\[
 Z_{\mathrm A}:=\max\{\lfloor L_0\rfloor,\widetilde L\}.
\]
By~\eqref{eq:anss_match_pairs}, $\lfloor L_0\rfloor\le L$, so on the
majority-good event $Z_{\mathrm A}\le L$.  If
$L\ge2\alpha L_0/\ell$, then
$\widetilde L\ge L/(2\alpha)$.  Otherwise
$L_0>L\ell/(2\alpha)$; because the nonexact branch has $L_0>\ell$, for
large $n$ we have
\[
 \lfloor L_0\rfloor\ge L_0/2>\frac{L\ell}{4\alpha}
 \ge\frac{L}{2\alpha}.
\]
Thus in both cases $Z_{\mathrm A}\ge L/(2\alpha)$. Hence
$c_{\mathrm A}=1/2$ and $K_{\mathrm A}=2$ give the displayed approximation
guarantee. The union of the finitely
many exceptional input lengths is handled by exact computation.

Every capped execution takes at most $B$ operations.  Hence the amplified
worst-case time is
\[
 qB=O_C(n^{4/3}\log n).
\]
Preprocessing and the exact branch are smaller.  Finally, for each fixed
$a>0$, $K_{\mathrm A}/\sqrt{\log\log n}\le3a$ for all sufficiently large
$n$, which proves the last assertion.
\end{proof}

\subsection{The explicit nonrecursive bound}
\label{sec:optimization}

Write every cost as an exponent of $n$. By Lemma~\ref{lem:perpair},
$P = \widetilde O(d^2 + w_{\max})$, and at the optimum below $d^2$ dominates, so
$P = n^{2(1-x)}$. By Lemma~\ref{lem:round}, $Q = k^{2+o(1)}$. By
Eq.~\eqref{eq:T3-final}, $q' = 9/2$. With $k = n^x$ and $\lambda = n^{-a}$ the four
exponents of Eq.~\eqref{eq:T1}--\eqref{eq:T0} are
\begin{align}
t_1 &= x(1+\gamma) + 2(1-x) = 2 - x + x\gamma, \label{eq:t1}\\
t_2 &= x\gamma + 2x = x(2+\gamma), \label{eq:t2}\\
t_3 &= 2(1-x) + x(2-\gamma/2) + \tfrac92 a = 2 - x\gamma/2 + \tfrac92 a,\\
t_0 &= 2 - \tfrac{15}{2}a. \label{eq:t0}
\end{align}
The running time is $\widetilde O(n^{E})$ with $E = \max\{t_1,t_2,t_3,t_0\}$, and
we minimize over $x\in(0,1)$, $\gamma\in(0,1)$ and the crossover $a$.

The next lemma records the exponent obtained by optimizing the window framework
without the recursive density reduction. It provides a concrete ANSS-free
benchmark while checking that all parameter side conditions hold simultaneously.

\begin{lemma}[Explicit nonrecursive bound]
\label{lem:main}
\begin{sloppypar}
There is a randomized algorithm that, given strings $A,B$ of length $n$ with
$\|\lcs(A,B)\| = \lambda$, computes an $\Omega(\lambda^3)$-approximation of
$\lcs(A,B)$ in time $\widetilde O(n^{116/63})$ with high probability.
\end{sloppypar}
\end{lemma}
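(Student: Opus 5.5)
The plan is to instantiate the four-term accounting of Eq.~\eqref{eq:T1}--\eqref{eq:T0} with the improved ingredients already proved: $P=\widetilde O(d^2+w_{\max})$ from Lemma~\ref{lem:perpair}, $Q=k^{2+o(1)}$ from Lemma~\ref{lem:round}, $q'=9/2$ from Eq.~\eqref{eq:T3-final}, and $r=15/2$ from Corollary~\ref{cor:r}. That gives the exponents \eqref{eq:t1}--\eqref{eq:t0}, and I will choose $(x,\gamma,a)$ so that all four coincide at $116/63$. Concretely, setting $t_0=116/63$ forces $15a/2=10/63$, i.e.\ $a=4/189$, and then $9a/2=2/21$. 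Equating $t_3$ gives $x\gamma/2=10/63+6/63$, so $x\gamma=32/63$. Equating $t_1$ gives $x-x\gamma=10/63$, so $x=2/3$, hence $d=n^{1/3}$ and $\gamma=16/21$. Finally $t_2=x(2+\gamma)=\tfrac23\cdot\tfrac{58}{21}=116/63$, so all four terms balance exactly.

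Next I would discharge every side condition used by the lemmas at this point.
\begin{itemize}
\item $P$ is dominated by $d^2$: we have $d^2=n^{2/3}$ while $w_{\max}=n^{1/3+a}$.
\item The per-round preprocessing of Lemma~\ref{lem:perpair}, namely $\widetilde O(nd+n/\lambda)$ times $k^\gamma$ rounds, is $n^{4/3+32/63}=n^{116/63}$.
\item Corollary~\ref{cor:r} needs $a\le 2/15$, which holds.
\item The sketch dimension is $s=\widetilde O(\lambda^{-2})=n^{8/189+o(1)}$ and $k=n^{2/3}$, so $s\le k^{4/63+o(1)}$. Since $4/63<0.321<\alpha$, Lemma~\ref{lem:round} applies with any fixed $\alpha_0\in(4/63+\delta,\alpha)$. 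By Remark~\ref{rem:threshold_parameterization} the same bound holds for every threshold $\theta\ge\epsilon_{\mathrm{win}}\nu$.
\item For Lemma~\ref{lem:step2}, substituting $k^{\eta}=\widetilde\Theta(k^{\gamma}\lambda)$ from Lemma~\ref{lem:bipartite} gives radius $W=k^{\eta/2}d=n^{1/3+16/63-a/2+o(1)}$. This exceeds $w_{\max}=n^{1/3+a}$, and the uncapped sampling rate $k^{-\eta/2}\lambda^{-4}$ is $n^{-\Omega(1)}\le1$.
\item The block dynamic program costs $\widetilde O(n^2/d^2)=n^{4/3}$, which is smaller.
\end{itemize}
The polylogarithmic multiplicities (thresholds in $\Theta_\nu$, dyadic layers, density guesses, four orientations, repetitions) are absorbed by $\widetilde O$.

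Then I assemble the algorithm. For every dyadic guess $\nu\ge n^{-a}\log^{-c}n$, run the window branch (Step~1 via Algorithm~\ref{alg:window_step_one}, Step~2 via Lemma~\ref{lem:step2}, then the block DP) in all four orientations. Below that threshold, run Lemma~\ref{lem:bcd} with $\varepsilon=15a/2$. Return the maximum of all outputs.

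One-sidedness of every output follows from two facts. First, Lemma~\ref{lem:sketch} together with Lemma~\ref{lem:sound} gives that every written entry satisfies $D_{ij}\theta^4/32\le L_{ij}$. Second, compatibility of window paths gives $\operatorname{val}(M)\le\lcs(A,B)$.

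For the lower bound, consider the correct guess $\nu\approx\lambda$. Lemma~\ref{lem:rsss_window_compatible} provides, in one orientation, a compatible path of value $\Omega(\lambda n)$. Lemma~\ref{lem:bipartite} with Corollary~\ref{cor:surrogate} bounds the missed pairs, and Step~2 repairs them up to additive loss $\epsilon\lambda^4 n$. This yields $\Omega(\lambda^4 n)=\Omega(\lambda^3)\lcs(A,B)$, as in \cite[Theorem~2.1]{rsss19}. When $\lambda<n^{-a}\log^{-c}n$, the BCD factor $\widetilde O(n^{3a})$ is at most $\lambda^{-3}$, as argued after Theorem~\ref{thm:small_density_eight_fifths}. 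A union bound over the polynomially many high-probability events gives overall success with high probability.

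The main obstacle I expect is not the arithmetic but making sure the exponent bookkeeping is honest at the balance point. In particular, three things must be checked with the actual $\lambda$-dependences rather than just their exponents:
\begin{itemize}
\item the substitution of the missed-pair bound $\widetilde O(k^{2-\gamma}/\lambda)$ into Lemma~\ref{lem:step2};
\item the radius condition $W\ge w_{\max}$;
\item the requirement that guesses $\nu$ larger than the crossover only decrease the window costs.
\end{itemize}
The third matters because the $\lambda^{-9/2}$ factor is monotone, so the worst case is at the crossover. I would verify these first, since any slack there would break the exact four-way tie.
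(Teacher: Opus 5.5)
Your proposal is correct and matches the paper's proof. You reach the same balance point $x=2/3$, $\gamma=16/21$, $a=4/189$ by back-solving from the target $116/63$ instead of deriving $x=2/3$ from the monotonicity of $t_1,t_2,t_3$ in $x$, which is fine since only achievability is claimed; you check the same side conditions ($d^2\ge w_{\max}$, $W=n^{109/189}\ge w_{\max}=n^{67/189}$, sampling rate $n^{-30/189}\le 1$, sketch dimension $n^{8/189}\le k^{4/63}$ with $4/63<\alpha$) and use the same four-orientation window branch plus BCAD below the crossover.
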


\begin{proof}
From Eq.~\eqref{eq:t1} and Eq.~\eqref{eq:t2}, $t_1 \ge t_2$ if and only if $2-x\ge 2x$,
i.e.\ $x\le 2/3$, independently of $\gamma$ and $a$. Both $t_1$ and $t_3$ are
decreasing in $x$, so the optimum takes
\begin{align}
\label{eq:xstar}
x = 2/3 .
\end{align}
Setting $t_1 = t_3$ gives $-x + \tfrac32 x\gamma = \tfrac92 a$, that is
$\gamma = \tfrac23 + 3a/x$, so with Eq.~\eqref{eq:xstar}
\begin{align*}
\gamma  =  \tfrac23 + \tfrac92 a ,
\end{align*}
and then $t_1 = 2 - x/3 + 3a = \tfrac{16}{9} + 3a$. Setting this equal to $t_0$
from Eq.~\eqref{eq:t0} gives $\tfrac{21}{2}a = \tfrac29$, hence
\begin{align*}
a = \frac{4}{189},\qquad \gamma = \frac{16}{21},\qquad x = \frac23 ,
\end{align*}
and the common value is
\begin{align*}
E  =  2 - \frac{15}{2}\cdot\frac{4}{189}  =  2 - \frac{10}{63}  =  \frac{116}{63}.
\end{align*}
All four exponents agree at this point:
$t_1 = 2 - \tfrac23\cdot\tfrac{5}{21} = 2 - \tfrac{10}{63}$,
$t_2 = \tfrac23\cdot\tfrac{58}{21} = \tfrac{116}{63}$,
$t_3 = 2 - \tfrac{16}{63} + \tfrac{6}{63} = 2 - \tfrac{10}{63}$, and
$t_0 = 2 - \tfrac{30}{189} = 2 - \tfrac{10}{63}$.

It remains to check the side conditions. We have $d = n^{1/3}$, $k = n^{2/3}$ and
$w_{\max} = n^{1/3 + 4/189}$, so $d^2 = n^{2/3}$ dominates $w_{\max}$ in
Lemma~\ref{lem:perpair}, and $nd = n^{4/3}$ dominates $n/\lambda = n^{1+4/189}$ in
its proof. Also $\gamma = 16/21 \in (0,1)$, as required by
\cite[Theorem 3.21]{rsss19}. For Step~2, Lemma~\ref{lem:bipartite} gives
$\eta = \gamma - a/x = 46/63$, so the nearby radius is
$W = k^{\eta/2}d = n^{109/189}$.  In particular
$w_{\max}=n^{67/189}\le W\le n$, which makes
Observation~\ref{obs:nearby-counts} applicable, and the
sampling rate is
$p = \widetilde O(1/(\epsilon\lambda^4 k^{\eta/2})) = \widetilde O(n^{16/189-46/189})
\le 1$, as a probability must be. Finally $s = \widetilde O(n^{2a}) =
\widetilde O(n^{8/189})$ and $k^{\alpha} = n^{2\alpha/3}$, so the hypothesis
$s \le k^\alpha$ of Lemma~\ref{lem:round} reads $\alpha \ge 4/63 = 0.0635\ldots$,
which holds with room to spare for the known $\alpha > 0.321$ \cite{wxxz24}.

For the window branch, run the construction on the four transformed pairs
$(A,B)$, $(B,A)$, $(\operatorname{rev}A,\operatorname{rev}B)$ and
$(\operatorname{rev}B,\operatorname{rev}A)$ and take the largest numerical
estimate.  Lemma~4.7 of \cite{rsss19} supplies the required compatible window
path for at least one orientation.  Swap and reversal preserve LCS, and every
run is one-sided, so this wrapper costs only a factor four and preserves
soundness.

For $\lambda \le n^{-4/189}$ the algorithm of Corollary~\ref{cor:r} is used and
costs at most $n^{116/63}$; for $\lambda \ge n^{-4/189}$ the four-orientation
window branch costs at most $n^{116/63}$. Since $\lambda$ is unknown the
algorithm runs both and returns the larger of the numbers, which is again at
most $\lcs(A,B)$; the approximation guarantee is $\Omega(\lambda^3)$ in
either case.
\end{proof}

\subsection{Adaptive centre domination}
\label{sec:adaptive_domination}

The following graph records when two windows aligned to the same anchor share
many anchor positions. Its edges allow one selected centre to expose and
eliminate a large class of candidates at once.

\begin{definition}[Anchored intersection graph]
\label{def:anchor_graph}
Fix $\theta\in\Theta_\nu$ and an anchor $w_a\in W_B$, and write $D_a:=\max\{|w_a|,d\}$; since every window of $W_A$ has length exactly $d$ by Observation~\ref{obs:asymmetry}, this is the common value of $D_{ba}$ over all $w_b\in W_A$, and $d\le D_a$.
Put $N_\theta(a):=\{b:w_b\in W_A,\ \lcs(w_b,w_a)\ge\theta D_a\}$ and, for
$b\in N_\theta(a)$, let $S_b:=\opt_{b,a}\subseteq[D_a]$ be the positions of
$w_a$ used by the canonical alignment of the pair, so that
$|S_b|\ge\theta D_a$. The graph $H_{a,\theta}$ has vertex set $N_\theta(a)$,
with $b$ and $c$ adjacent whenever $|S_b\cap S_c|\ge\theta^2D_a/2$.
\end{definition}

The overlap threshold prevents a large family of anchor alignments from being
pairwise nonadjacent. This bound supplies the finite number of domination stages
used later.

\begin{lemma}[Bounded independence]
\label{lem:anchor_independence}
$\alpha(H_{a,\theta})<\lceil2/\theta\rceil$.
\end{lemma}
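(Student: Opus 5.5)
The bound follows from the same incidence-counting argument as in the proof of Lemma~\ref{lem:bipartite}, now carried out inside the anchor's $D_a$ positions at threshold $\theta$. I will show that every set $T\subseteq N_\theta(a)$ with $|T|=t\ge r:=\lceil2/\theta\rceil$ contains two vertices $b\ne c$ with $|S_b\cap S_c|\ge\theta^2D_a/2$. Such a $T$ is then not independent in $H_{a,\theta}$, so every independent set has fewer than $r$ vertices. It suffices to treat $t=r$ exactly, since any larger set contains an $r$-subset.

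\textbf{Setting up the count.} Fix $T$ of size $t\ge r$ and put $m(x):=|\{b\in T:x\in S_b\}|$ for $x\in[D_a]$. By Definition~\ref{def:anchor_graph}, each $S_b\subseteq[D_a]$ has $|S_b|\ge\theta D_a$, so $\Sigma:=\sum_x m(x)\ge t\theta D_a$. Here I use that $S_b$ consists of positions of $w_a$, and $|w_a|\le D_a$. Counting ordered incidences gives
\[
\sum_{b\ne c}|S_b\cap S_c|=\sum_x m(x)(m(x)-1)\ge\frac{\Sigma^2}{D_a}-\Sigma ,
\]
where the inequality is Cauchy--Schwarz over at most $D_a$ coordinates.

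\textbf{Lower-bounding the right-hand side.} The function $\Sigma\mapsto\Sigma^2/D_a-\Sigma$ is nondecreasing once $\Sigma\ge D_a/2$. Since $t\theta\ge2$, we have $\Sigma\ge t\theta D_a\ge 2D_a$, so the right-hand side is at least
\[
t\theta D_a(t\theta-1).
\]

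\textbf{Averaging to a single pair.} There are $t(t-1)<t^2$ ordered pairs, so some $b\ne c$ satisfies
\[
|S_b\cap S_c|\ge\theta D_a(\theta-1/t).
\]
Because $t\ge2/\theta$, we have $\theta-1/t\ge\theta/2$, hence $|S_b\cap S_c|\ge\theta^2D_a/2$. This is exactly the adjacency condition of $H_{a,\theta}$.

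\textbf{Where care is needed.} The only delicate point is the ambient dimension used in Cauchy--Schwarz. The sets $S_b$ must be measured in the anchor's coordinates, which is legitimate because $S_b=\opt_{b,a}$ lies in $[|w_a|]\subseteq[D_a]$, and the threshold must be stated in the same absolute units $D_a$. If $|w_a|<D_a$, the bound only improves: with fewer coordinates $D'\le D_a$, we get $\Sigma^2/D'\ge\Sigma^2/D_a$. As in the remark after Lemma~\ref{lem:bipartite}, normalizing by $d$ instead of $D_a$ would be invalid, so all quantities are kept in raw units.
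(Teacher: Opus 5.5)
Your proposal is correct and follows the paper's proof essentially line for line. It uses the same incidence count $\sum_x m(x)(m(x)-1)$, Cauchy--Schwarz over the $D_a$ anchor positions, the monotonicity of $z\mapsto z(z/D_a-1)$ for $z\ge D_a/2$, and averaging over fewer than $t^2$ ordered pairs to obtain $|S_b\cap S_c|\ge\theta^2D_a/2$.
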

\begin{proof}
Let $T\subseteq N_\theta(a)$ with $t:=|T|\ge\lceil2/\theta\rceil$ and set
$m(x):=|\{b\in T:x\in S_b\}|$, so that
$Z:=\sum_{x}m(x)=\sum_{b\in T}|S_b|\ge t\theta D_a\ge2D_a$. By
Cauchy--Schwarz,
\begin{align*}
\sum_{b\ne c}|S_b\cap S_c|
=\sum_xm(x)^2-\sum_xm(x)
\ge\frac{Z^2}{D_a}-Z
\ge t\theta D_a(t\theta-1),
\end{align*}
the last step because $z\mapsto z(z/D_a-1)$ increases for $z\ge D_a/2$.
Averaging over fewer than $t^2$ ordered pairs, some $b\ne c$ satisfies
$|S_b\cap S_c|\ge\theta D_a(\theta-1/t)\ge\theta^2D_a/2$ and is therefore an
edge. Hence no $\lceil2/\theta\rceil$ vertices are pairwise nonadjacent.
\end{proof}

Adjacency certifies that the two corresponding alignments share many positions
of the anchor. Transporting those shared positions through one centre creates a
large mark for the other endpoint.

\begin{lemma}[Edges are visible to one centre]
\label{lem:edge_marks}
If $b$ and $c$ are adjacent in $H_{a,\theta}$ then
$I_c(a,b)\ge\theta^2D_a/4$.
\end{lemma}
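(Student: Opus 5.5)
The plan is to reduce the claim to the completeness lemma, Lemma~\ref{lem:complete}, applied with $\theta$ in place of $\lambda$ (Remark~\ref{rem:threshold_parameterization}). With centre $w_c\in W_A$, Lemma~\ref{lem:complete} gives $I_c(a,b)\ge\theta^2D_{ab}/4$ as soon as $\lcs(\opt_{a,c},w_b)\ge\theta^2D_{ab}/2$. Since $w_b\in W_A$ and $w_a\in W_B$, we have $D_{ab}=D_a$. So the whole task is to show that adjacency of $b$ and $c$ in $H_{a,\theta}$ yields a common subsequence of $\opt_{a,c}$ and $w_b$ of length at least $\theta^2D_a/2$.

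The construction works inside the anchor's coordinates. Let $X:=S_b\cap S_c\subseteq[D_a]$, so $|X|\ge\theta^2D_a/2$ by adjacency.

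First, by the caching convention of Section~\ref{sec:bipartite}, $\opt_{c,a}=S_c$ and $\opt_{a,c}$ are the two coordinate projections of one fixed alignment of the pair $(w_a,w_c)$. The restriction of that alignment to the anchor positions $X\subseteq S_c$ therefore selects a subset $X'\subseteq\opt_{a,c}$ of positions of $w_c$. These are matched order-preservingly and symbol-preservingly to $X$.

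Second, $X\subseteq S_b=\opt_{b,a}$, so the canonical alignment of $(w_b,w_a)$ matches $X$ order-preservingly to positions of $w_b$, again with equal symbols.

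Composing the two monotone bijections through $X$ gives a monotone matching between $X'\subseteq\opt_{a,c}$ and positions of $w_b$. It has size $|X|$ and carries equal symbols, so $\lcs(\opt_{a,c},w_b)\ge|X|\ge\theta^2D_a/2$. This is the same composition argument used in the proofs of Lemma~\ref{lem:bipartite} and Lemma~\ref{lem:sound}.

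Finally, apply Lemma~\ref{lem:complete}. In its proof, the stopping rule of $Y_{c,b}$ leaves a residual LCS below $d\theta^2/4$, and $D_{ab}=D_a\ge d$. Together these give $I_c(a,b)=|\opt_{a,c}\cap Y_{c,b}|>\theta^2D_a/2-d\theta^2/4\ge\theta^2D_a/4$.

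The main obstacle is bookkeeping, not estimation. One must check that $\opt_{a,c}$, viewed as positions of $w_c$, really corresponds to the canonical alignment whose anchor projection is $S_c$. This is exactly why the paper caches a single lexicographically first alignment per unordered pair; without that convention, $\opt_{a,c}$ and $\opt_{c,a}$ could come from different alignments and the transport through $X$ would fail. One must also note that $|w_a|$ may be smaller than $d$ only in the degenerate reading of $D_a$. Every $B$-window has length at least $d$, so $D_a=|w_a|$ and $S_b,S_c\subseteq[|w_a|]$ genuinely live in the same position set.
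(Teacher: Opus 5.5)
Your proposal is correct and follows the paper's proof. You transport $S_b\cap S_c$ through the single cached $\{a,c\}$ alignment into $\opt_{a,c}$ and through the $\{b,a\}$ alignment into $w_b$, obtaining $\lcs(\opt_{a,c},w_b)\ge\theta^2D_a/2$, and then apply Lemma~\ref{lem:complete} in its $\theta$-parameterized form (Remark~\ref{rem:threshold_parameterization}) with $D_{ab}=D_a\ge d$; you also correctly identify the shared-cache convention as the point this lemma relies on.
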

\begin{proof}
Each $x\in S_b\cap S_c$ is used by both canonical alignments; mapping it
through the alignment of $\{a,c\}$ into $w_c$ and through that of $\{b,a\}$
into $w_b$ matches equal characters, monotonically in $x$. The resulting
common subsequence of $\opt_{a,c}$ and $w_b$ has length $|S_b\cap S_c|$, so
$\lcs(\opt_{a,c},w_b)\ge\theta^2D_a/2$, and Lemma~\ref{lem:complete} in the
form of Remark~\ref{rem:threshold_parameterization} gives the claim.
\end{proof}

The next abstract lemma turns a bounded-independence hypothesis into a
sampling-based domination procedure. Independent pools shrink the residual graph
to a prescribed size with only logarithmic sampling overhead.

\begin{lemma}[Random pool domination]
\label{lem:pool_domination}
Let $G$ be a graph whose vertices lie in a universe of size $k$, with
$\alpha(G)<r$, and let $1\le q\le k$. Draw $r$ independent pools of
$\lceil C_h(k/q)\log n\rceil$ uniform elements of the universe and run $r$
stages, at stage $t$ selecting a vertex of $G$ that lies in pool $t$ and is
still residual, then deleting it together with all of its neighbours. With
probability $1-n^{-\Omega(C_h)}$ at most $q$ vertices remain residual.
\end{lemma}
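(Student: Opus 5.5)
The plan is to show that if more than $q$ vertices survive all $r$ stages, then the algorithm has in fact built an independent set of size $r$, contradicting $\alpha(G)<r$. We order the events carefully so that the pools stay independent of the residual sets they are tested against. For $t=0,1,\ldots,r$ write $U_t$ for the residual set after $t$ stages, with $U_0=V(G)$. The residual sets are nested, $U_0\supseteq U_1\supseteq\cdots\supseteq U_r$.

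Consider stage $t$. Suppose $U_{t-1}$ is nonempty and $U_r$ ends up with more than $q$ vertices; since the sets are nested, $|U_{t-1}|>q$ as well. The set $U_{t-1}$ depends only on pools $1,\ldots,t-1$ and the selection rule, while pool $t$ is drawn independently of it. Hence, conditional on $U_{t-1}$ with $|U_{t-1}|>q$, pool $t$ misses $U_{t-1}$ with probability at most
\[
(1-q/k)^{\lceil C_h(k/q)\log n\rceil}\le n^{-C_h}.
\]
Define the bad event $\mathcal B_t$ as: $|U_{t-1}|>q$ and pool $t$ contains no element of $U_{t-1}$. Then $\Pr[\mathcal B_t]\le n^{-C_h}$. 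A union bound over the $r$ stages gives $\Pr[\bigcup_t\mathcal B_t]\le r\,n^{-C_h}$. Because $r\le k+1\le n+1$ in every application (indeed $r=\lceil2/\theta\rceil$), this total is $n^{-\Omega(C_h)}$.

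Now work on the complement of $\bigcup_t\mathcal B_t$ and suppose $|U_r|>q$. By nestedness $|U_{t-1}|>q$ for every $t$, so each stage does select some vertex $v_t\in U_{t-1}$. After $v_t$ is chosen, it and all its neighbours are removed, so every later choice $v_{t'}\in U_{t'-1}\subseteq U_t$ is neither $v_t$ nor adjacent to it. The selected vertices $v_1,\ldots,v_r$ are therefore distinct and pairwise nonadjacent, forming an independent set of size $r$. This contradicts $\alpha(G)<r$, so $|U_r|\le q$ on the good event.

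The step that needs the most care is the conditioning. The selection rule may be adversarial or adaptive, for example taking the first residual vertex in pool order, so a single union bound over a fixed family of sets would not work. The fix is the stage-by-stage bound above: $U_{t-1}$ is determined by earlier pools only, so fresh pool $t$ is independent of it. A secondary detail is that "uniform elements of the universe" may land outside $V(G)$. This is harmless, because the hitting probability is computed against $U_{t-1}\subseteq V(G)$ using its size relative to $k$.
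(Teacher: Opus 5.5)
Your proof is correct and is essentially the paper's argument: nested residual sets, independence of pool $t$ from the residual set entering stage $t$, the bound $(1-q/k)^{C_h(k/q)\log n}\le n^{-C_h}$, and the observation that the $r$ selections are pairwise nonadjacent, contradicting $\alpha(G)<r$. Your explicit union bound over the stages fills in a step the paper leaves implicit. You can also make $r\le k+1$ intrinsic to the lemma, rather than appealing to the application, by running the argument only on the first $\alpha(G)+1\le k+1$ stages, since residual sets only shrink.
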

\begin{proof}
If more than $q$ vertices remain at the end, then more than $q$ are residual
before every stage, since residual sets only shrink. Pool $t$ is drawn
independently of the first $t-1$ pools, hence of the residual set entering
stage $t$, and misses a set of more than $q$ vertices with probability at
most $(1-q/k)^{C_h(k/q)\log n}\le n^{-C_h}$. On the complementary event a
vertex is selected at every stage. Any neighbour of an earlier selection has
been deleted, so the $r$ selections are pairwise nonadjacent, contradicting
$\alpha(G)<r$.
\end{proof}

The following filter tests all candidate windows against one preprocessed centre
without recomputing every alignment. Random coordinate sampling separates the
two relevant overlap thresholds within the stated aggregate cost.

\begin{lemma}[Sparse one-centre filter]
\label{lem:sparse_filter}
Fix $\theta$, an anchor $w_a\in W_B$ and a centre $w_c\in W_A$. Assume
that the threshold-dependent inverted lists $\mathcal L^{(\theta)}_{c,x}$
have already been built as part of the centre preprocessing. There is a
procedure running in time
$\widetilde O(k\theta^{-2}+d)$ that, with probability $1-n^{-C}$, accepts
every $b$ with $I_c(a,b)\ge\theta^2D_a/4$ and accepts no $b$ with
$I_c(a,b)<\theta^2D_a/8$.
\end{lemma}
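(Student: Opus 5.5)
The plan is to reuse the coordinate-sampling argument of Lemma~\ref{lem:sketch}, applied to one centre at a time, and then compute the sampled intersections for all $b$ at once by walking inverted lists. We draw a fresh coordinate sample $R\subseteq[d]$ of the centre's positions, independent of the input and of the centre selection. Each coordinate is retained with probability $p:=\min\{1,2C\log n/(\theta^2D_a)\}$. The set $\opt_{a,c}$ is fixed by the cached canonical alignment of $\{a,c\}$. We compute it (or read it from the centre preprocessing) and form $R_a:=\opt_{a,c}\cap R$ in $O(d)$ time. Since $|\opt_{a,c}|\le d\le D_a$, the Chernoff bound gives $|R_a|=O(pd+\log n)=\widetilde O(\theta^{-2})$ with probability $1-n^{-C}$. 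If a larger sample appears, we abort and report failure; this is absorbed in the failure probability.

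Next I will use the inverted lists. For each position $x\in[d]$ of $w_c$, $\mathcal L^{(\theta)}_{c,x}$ lists the windows $b\in W_A$ with $x\in Y_{c,b}$. I interpret the ``threshold-dependent'' qualifier to mean that the $Y_{c,b}$ are built with residual threshold $d\theta^2/4$, as in Remark~\ref{rem:threshold_parameterization}. We keep an array of counters indexed by $W_A$ and initialize it lazily, or touch only the entries we increment. For every $x\in R_a$ we walk $\mathcal L^{(\theta)}_{c,x}$ and increment $\mathrm{cnt}[b]$. Each list has length at most $k_A=\widetilde O(k)$, so the walk costs $\widetilde O(k\theta^{-2})$. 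Afterwards $\mathrm{cnt}[b]=|\opt_{a,c}\cap Y_{c,b}\cap R|$ for every $b$. We then scan all $b\in W_A$ in $O(k)$ time and accept $b$ exactly when $\mathrm{cnt}[b]\ge 3p\theta^2D_a/16$. The total time is $\widetilde O(k\theta^{-2}+d)$, as claimed.

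Correctness follows the proof of Lemma~\ref{lem:sketch} with the single value $D_a$ in place of $D_{ij}$. If $p=1$ the count is exact, and $1/8<3/16<1/4$ settles both directions. Otherwise the choice of $p$ gives $p\theta^2D_a\ge 2C\log n$, so the expectation $pI_c(a,b)$ is at least $\Omega(C\log n)$ on the acceptance side. There the threshold is at most $3/4$ of the expectation, and the lower-tail Chernoff bound accepts except with probability $n^{-\Omega(C)}$. On the rejection side $I_c(a,b)<\theta^2D_a/8$, so the threshold is at least $3/2$ of the expectation. The upper-tail Chernoff bound applies, using the form with an absolute deviation of $\Omega(p\theta^2D_a)=\Omega(C\log n)$ when the expectation is small. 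A union bound over the at most $k$ windows $b$ completes the argument after enlarging $C$. The independence of $R$ from the sets $\opt_{a,c}$ and $Y_{c,b}$ is what makes these tail bounds valid, and it holds because those sets are determined by the input and the canonical tie-breaking alone.

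The main obstacle I expect is the time accounting rather than the probability. The naive bound multiplies $|R_a|$ by list lengths up to $k$, and that product must stay within $\widetilde O(k\theta^{-2})$. This requires the sample to be drawn at rate $\Theta(\log n/(\theta^2D_a))$ inside $\opt_{a,c}$, not at a rate tied to $d$, so that $|R_a|$ stays $\widetilde O(\theta^{-2})$. If $D_a$ is large relative to $d$, then $|R_a|$ is even smaller, which only helps. A second point to check is that the counter array is not re-initialized per anchor at cost larger than $O(k)$. This is fine because the final scan already costs $O(k)$.
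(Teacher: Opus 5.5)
Your proposal is correct and follows the paper's proof essentially step for step. Both sample the coordinates of $\opt_{a,c}$ at rate $p=\min\{1,2C\log n/(\theta^2D_a)\}$, walk the prebuilt lists $\mathcal L^{(\theta)}_{c,x}$ to fill the counters, accept at $\tfrac{3}{16}p\theta^2D_a$, separate the two cases with the $3/4$ and $3/2$ Chernoff tails and a union bound over $b$, and abort when more than $O(\theta^{-2}\log n)$ coordinates are retained. The differences are cosmetic: you sample $R\subseteq[d]$ and intersect with $\opt_{a,c}$ instead of sampling inside $\opt_{a,c}$ directly, and you spell out the small-mean upper-tail case, which the paper leaves implicit.
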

\begin{proof}
Use the prebuilt lists
$\mathcal L^{(\theta)}_{c,x}:=\{b:x\in Y^{(\theta)}_{c,b}\}$, which satisfy
$\sum_x|\mathcal L^{(\theta)}_{c,x}|=\sum_b|Y^{(\theta)}_{c,b}|\le kd$. Retain each
$x\in\opt_{a,c}$ independently with probability
$p:=\min\{1,2C_R\log n/(\theta^2D_a)\}$, initialize every
counter to zero, traverse $\mathcal L_{c,x}$ for every retained $x$,
incrementing the counter for each $b$ met, and accept $b$
when its counter reaches $\tfrac3{16}p\theta^2D_a$. If $p=1$ the counter equals
$I_c(a,b)$ and $\tfrac18<\tfrac3{16}<\tfrac14$ gives both implications
deterministically. If $p<1$ then $p\theta^2D_a=2C_R\log n$; a pair with
$I_c(a,b)\ge\theta^2D_a/4$ has mean at least $\tfrac14p\theta^2D_a$, of
which the cutoff is three quarters, and a pair with
$I_c(a,b)<\theta^2D_a/8$ has mean below $\tfrac18p\theta^2D_a$, of which the
cutoff is three halves, so Chernoff bounds and a union bound over $b$ give
failure probability $n^{-\Omega(C_R)}$. The number of retained coordinates is a sum of independent indicators of mean $p|\opt_{a,c}|\le pd\le2C_R\log n/\theta^2$, by $d\le D_a$. Consequently a Chernoff upper-tail bound shows that it exceeds the deterministic cutoff $4C_R\theta^{-2}\log n$ with probability $n^{-\Omega(C_R)}$. Each retained coordinate costs $|\mathcal L_{c,x}|\le k$, the scan of $\opt_{a,c}$ costs $O(d)$ and reading the counters costs $O(k)$, which gives the stated bound; a run exceeding this cutoff is aborted.
\end{proof}

The next lemma combines bounded independence, random-pool domination, and the
sparse filter into the first adaptive stage. It controls both the preprocessing
cost and the number of high pairs left unmarked across every threshold and
dyadic layer.

\begin{lemma}[Adaptive Step~1]
\label{lem:adaptive_step_one}
Procedure \textnormal{\textsc{AdaptiveStepOne}} in Algorithm~\ref{alg:adaptive_step_one} preprocesses $\widetilde O(k^\gamma)$ centres of $W_A$ and runs in time
\begin{align*}
\widetilde O(k^{1+\gamma}d^2+k^\gamma n/\nu+k^2\nu^{-3}),
\end{align*}
such that, with high probability, simultaneously
$M^{\mathrm{pre}}_{ij}\le\lcs(w_i,w_j)$ for every entry and, for every
$\theta\in\Theta_\nu$, every dyadic layer and
every anchor $w_a$, all but $\widetilde O(k^{1-\gamma}/\theta)$ members of
$N_\theta(a)$ are marked at $\theta$; equivalently, there are absolute constants $C_U>0$ and $q_U:=4$ such that
the union, over the full threshold net and all dyadic layers, of mixed pairs
that are high and unmarked at their assigned largest threshold has size at
most
\[
C_U k^{2-\gamma}\nu^{-1}\log^{q_U}n.
\]
Procedure \textnormal{\textsc{AdaptiveStepOne}} returns this information in
the complete orientation-specific instance descriptor defined in
Algorithm~\ref{alg:adaptive_step_one}.
\end{lemma}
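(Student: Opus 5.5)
The plan is to treat \textsc{AdaptiveStepOne} as a loop over $\theta\in\Theta_\nu$ and the dyadic layers. For each anchor $w_a\in W_B$, the loop runs the domination procedure of Lemma~\ref{lem:pool_domination} on the graph $H_{a,\theta}$ of Definition~\ref{def:anchor_graph}. It uses $r:=\lceil2/\theta\rceil$ pools and target residual size $q:=k^{1-\gamma}/\theta$. Each pool then has $\lceil C_h\,k^\gamma\theta\log n\rceil$ centres drawn uniformly from $W_A$. The pools are shared by all anchors of a given $(\theta,\text{layer})$ run, so the total number of centres per run is $r\cdot\widetilde O(k^\gamma\theta)=\widetilde O(k^\gamma)$. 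Over the $O(\log^2 n)$ runs this gives the $\widetilde O(k^\gamma)$ centres of the statement.

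\emph{Preprocessing.} For each centre $w_c$, compute every canonical alignment $\opt_{\cdot,c}$ and every $Y^{(\theta)}_{c,\cdot}$, together with the inverted lists $\mathcal L^{(\theta)}_{c,x}$. By Lemma~\ref{lem:perpair} this costs $\widetilde O(kd^2+n/\nu)$ per centre, since $w_{\max}=\Theta(d/\nu)$ and Eq.~\eqref{eq:total-length} applies. This yields the first two terms of the running time. As a by-product, the exact value $\lcs(w_c,w_a)$ is known for every anchor, so membership $c\in N_\theta(a)$ can be tested in $O(1)$.

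\emph{Stages.} Fix an anchor $w_a$. At stage $t$, select any $c$ in pool $t$ that lies in $N_\theta(a)$ and is still residual. Write its exact certified value $\lcs(w_c,w_a)$ into $M^{\mathrm{pre}}_{ca}$. Then run the sparse filter of Lemma~\ref{lem:sparse_filter} from $c$. For every accepted $b$, write the value $\theta^4D_a/32$ and remove $b$ from the residual set; remove $c$ as well.

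\emph{Soundness.} An accepted $b$ has $I_c(a,b)\ge\theta^2D_a/8$. Lemma~\ref{lem:sound}, in the $\theta$-form of Remark~\ref{rem:threshold_parameterization}, then gives $\lcs(w_b,w_a)\ge\theta^4D_a/32$. Every other write is an exact LCS value. Hence $M^{\mathrm{pre}}\le\lcs$ entrywise on the filter's success event.

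\emph{Filter cost.} There are $k$ anchors, $O(1/\theta)$ stages per anchor, and $\widetilde O(k\theta^{-2}+d)$ per filter call. With $\theta\ge\epsilon_{\mathrm{win}}\nu$ this sums to $\widetilde O(k^2\nu^{-3})$, since $kd/\nu\le k^2\nu^{-3}$ because $d\le k$ at the parameters of interest (or this term is dominated by preprocessing anyway).

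\emph{Residual bound.} By Lemma~\ref{lem:edge_marks}, every $H_{a,\theta}$-neighbour $b$ of a selected $c$ satisfies $I_c(a,b)\ge\theta^2D_a/4$, so the filter accepts it. The deleted set therefore contains the closed neighbourhood of each selection. The proof of Lemma~\ref{lem:pool_domination} only needs the selections to be pairwise nonadjacent, which still holds when extra vertices are deleted. Lemma~\ref{lem:anchor_independence} gives $\alpha(H_{a,\theta})<r$. Hence, with probability $1-n^{-\Omega(C_h)}$, at most $q=k^{1-\gamma}/\theta$ members of $N_\theta(a)$ stay residual. Every non-residual member was marked at $\theta$.

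\emph{Union bound.} Take a union bound over the $k$ anchors, the $O(\log n)$ thresholds and the $O(\log n)$ layers, together with the filter failure events. Summing $k\cdot k^{1-\gamma}/\theta$ over the net, each term is at most $k^{2-\gamma}\nu^{-1}/\epsilon_{\mathrm{win}}$ up to logarithmic factors. Pool sizes, net size, layer count and Chernoff cutoffs contribute at most $\log^4 n$, which gives the bound $C_Uk^{2-\gamma}\nu^{-1}\log^{4}n$. A mixed pair that is high at its assigned largest threshold $\theta$ lies in $N_\theta(a)$ for the corresponding anchor, so the union statement follows from the per-anchor statement.

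The step I expect to be delicate is the independence requirement in Lemma~\ref{lem:pool_domination}. The residual set entering stage $t$ depends on pools $1,\dots,t-1$, on the choice rule, and on the filter's coordinate coins. I will argue that pool $t$ is drawn independently of all of these: the filter coins are fresh per call, and the pools are independent of the input. Conditioning on everything before stage $t$ then leaves pool $t$ uniform, exactly as in that proof. The sharing of pools across anchors is harmless because the bound is applied per anchor and combined only through the union bound.
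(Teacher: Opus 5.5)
Your proposal is correct and follows the paper's proof essentially step for step. It uses the same ingredients: pools shared per threshold/layer, per-anchor stages that select a residual member of $N_\theta(a)$, the sparse filter together with Lemma~\ref{lem:edge_marks} to realize the deletion step of Lemma~\ref{lem:pool_domination} on $H_{a,\theta}$ (with Lemma~\ref{lem:anchor_independence} bounding the independence number), soundness via Lemma~\ref{lem:sound}, and the same independence and cost accounting. The only details to add are these: the procedure skips any threshold/layer with $q_\theta\ge k_A$, where the residual bound is vacuous, which keeps the centre count $\widetilde O(k^\gamma)$ and the hypothesis $q\le k$ of Lemma~\ref{lem:pool_domination} valid; and the $\widetilde O(k^{1+\gamma})$ cost of scanning pools should be included, though it is trivially dominated.
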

\begin{proof}
Fix $\theta$ and a layer, write $k_A:=|W_A|$, and put
$q_\theta:=\lceil C_qk_A^{1-\gamma}/\theta\rceil$.
If $q_\theta<k_A$, put $r_\theta:=\lceil2/\theta\rceil$ and
$g_\theta:=\lceil C_h(k_A/q_\theta)\log n\rceil$; then
$g_\theta=\widetilde O(k_A^\gamma\theta)$.  Otherwise the assertion is vacuous. Draw $r_\theta$ pools of $g_\theta$ uniform windows of $W_A$; since $1/\theta=O(k_A^\gamma)$ in this regime, the total number of
centre occurrences is
$r_\theta g_\theta=\widetilde O(k_A^\gamma)=\widetilde O(k^\gamma)$. Preprocessing one centre costs $\widetilde O(kd^2+n/\nu)$ by Lemma~\ref{lem:perpair}: the alignments $\opt_{a,c}$ against the $\widetilde O(k)$ windows of $W_B$ cost $\widetilde O(kd^2+\sum_a|w_a|)$ and $\sum_a|w_a|=\widetilde O(nw_{\max}/d)=\widetilde O(n/\nu)$ by the layer counts of Observation~\ref{obs:nearby-counts}, while the threshold-dependent packings $Y^{(\theta)}_{c,b}$ against $W_A$ cost $\widetilde O(kd^2)$. The alignments do not depend on $\theta$, whereas $Y_{c,b}$ and the inverted lists of Lemma~\ref{lem:sparse_filter} are rebuilt for each of the polylogarithmically many thresholds. For each
anchor run $r_\theta$ stages, at stage $t$ selecting the first centre of
pool $t$ that lies in $N_\theta(a)$ and is unmarked, marking it by its exact
precomputed value, and running Lemma~\ref{lem:sparse_filter} for the pair;
every accepted $b$ receives $\theta^4D_a/32$. Acceptance gives
$I_c(a,b)\ge\theta^2D_a/8$, so the written value is at most
$\lcs(w_a,w_b)$ by Lemma~\ref{lem:sound}, and one-sidedness holds. By
Lemma~\ref{lem:edge_marks} the filter accepts every $H_{a,\theta}$-neighbour
of the selected centre, so a stage realizes the deletion step of
Lemma~\ref{lem:pool_domination} for $G=H_{a,\theta}$, whose independence
number is below $r_\theta$ by Lemma~\ref{lem:anchor_independence}. The filter also accepts windows that are not neighbours, which only deletes more and leaves the conclusion intact, and its pools are drawn independently of all filter coins, so the pool used at a stage is independent of the residual set entering it. That lemma leaves at most $q_\theta$ members of $N_\theta(a)$ unmarked. Since the number of anchor-stage pairs is $\widetilde O(k/\theta)=n^{O(1)}$, choosing $C_h$ and $C_R$ large in terms of the desired failure exponent makes the union bound over all filters, stages, anchors, thresholds, layers, density guesses and orientations lose only $n^{-\Omega(1)}$. Summing over the anchors, the $O(\log n)$ thresholds and the $O(\log n)$
dyadic layers, and absorbing ceilings gives, for a sufficiently large absolute
$C_U$, the explicit bound
$C_Uk^{2-\gamma}\nu^{-1}\log^{q_U}n$ with $q_U=4$; in particular it bounds
the union used by the staircase.  Moreover, a pair marked at $\theta$ receives $\theta^4D/32$; this numerical lower bound is the only marked-pair fact used later. Equivalently, the loops may be reordered stage-major: pool $t$ can be
preprocessed, used by every anchor, and then discarded. This changes only the
memory organization, not the procedure or its running time. For the running time, each anchor invokes $O(1/\theta)$ filters of cost $\widetilde O(k\theta^{-2}+d)$ and
scans $\widetilde O(k^\gamma)$ pool entries, so one threshold costs
$\widetilde O(k^2\theta^{-3}+kd\theta^{-1}+k^{1+\gamma})$; summing the
geometric net and the $O(\log n)$ layers gives the bound.  The same
union bound includes every sparse-filter soundness event.  Conditioned on
their intersection, each accepted counter writes at most the corresponding
window LCS by Lemma~\ref{lem:sparse_filter}; hence the whole matrix
$M^{\mathrm{pre}}$ is one-sided on this joint event.
\end{proof}

\begin{figure}[!ht]
\centering
\includegraphics[width=\linewidth]{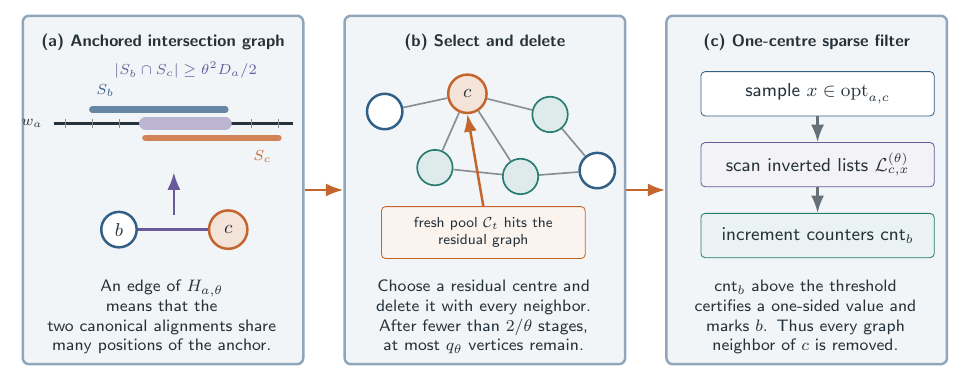}
\caption{Adaptive centre domination in Lemma~\ref{lem:adaptive_step_one}. Anchor-alignment overlap defines the graph $H_{a,\theta}$; each fresh pool selects a residual centre and eliminates its neighborhood through the sparse one-centre filter.}
\label{fig:adaptive_centre_domination}
\end{figure}

\begin{algorithm}[!ht]
\caption{Adaptive centre domination (window Step~1)}
\label{alg:adaptive_step_one}
\begin{algorithmic}[1]
\Procedure{AdaptiveStepOne}{$A,B,\nu,d,k,\gamma,\chi,\epsilon_{\mathrm{win}}$}
\Comment{Lemma~\ref{lem:adaptive_step_one}}
\State Take the orientation $\chi$ of the common padded pair supplied by
the outer wrapper (Algorithm~\ref{alg:approximate_lcs} or its fixed
three-scale specialization), obtaining $A_\chi,B_\chi$, and construct
$W_A,W_B$ with window parameter
$\epsilon_0=\epsilon_{\mathrm{win}}\nu$ and the dyadic $B$-layers. Do not
re-pad after orienting. Rows index $W_A$ and columns index $W_B$.
\State Initialize $M^{\mathrm{pre}}\gets0$ and $\mathcal S\gets\emptyset$.
\For{each $\theta\in\Theta_\nu$ and each dyadic $B$-layer}
\State Add this threshold, layer, its actual window lengths, and initially
false mark bits to $\mathcal S$.
\State Set $k_A\gets|W_A|$ and
$q_\theta\gets\lceil C_qk_A^{1-\gamma}/\theta\rceil$.
\If{$q_\theta\ge k_A$}
\State Continue to the next threshold/layer; its all-false mark bits remain in
$\mathcal S$.
\EndIf
\State Set $r_\theta\gets\lceil2/\theta\rceil$ and
$g_\theta\gets\lceil C_h(k_A/q_\theta)\log n\rceil$.
\State Draw pools $\mathcal C_1,\ldots,\mathcal C_{r_\theta}$, each of $g_\theta$ windows sampled uniformly from $W_A$.
\For{each centre $w_c$ occurring in some pool}
\State For every $w_a\in W_B$, compute and cache the canonical alignment of
the unordered pair $\{w_a,w_c\}$ (lexicographically smallest among maximum
alignments), set both projections $\opt_{a,c},\opt_{c,a}$ from this same
cache entry, and record $\lcs(w_c,w_a)$; every later occurrence reuses it.
\State For every $w_b\in W_A$, build the threshold-dependent packing
$Y^{(\theta)}_{c,b}$ greedily until
$\lcs(w_c\setminus Y^{(\theta)}_{c,b},w_b)<d\theta^2/4$, and set
$\mathcal L^{(\theta)}_{c,x}\gets\{b:x\in Y^{(\theta)}_{c,b}\}$ for $x\in[d]$.
\EndFor
\For{each $w_a$ of the layer, writing $D_a\gets\max\{|w_a|,d\}$}
\For{$t=1,\ldots,r_\theta$}
\State Let $w_c$ be the first centre of $\mathcal C_t$ with $\lcs(w_c,w_a)\ge\theta D_a$ whose pair with $w_a$ is unmarked at $\theta$; if there is none, continue with the next stage.
\State Mark $(c,a)$ at $\theta$ and set $M^{\mathrm{pre}}_{ca}\gets\max\{M^{\mathrm{pre}}_{ca},\lcs(w_c,w_a)\}$.
\State Initialize $\mathsf{cnt}_b\gets0$ for every $b\in W_A$.
\State Set $p\gets\min\{1,2C_R\log n/(\theta^2D_a)\}$ and retain each $x\in\opt_{a,c}$ independently with probability $p$.
\State Abandon this stage if more than $4C_R\theta^{-2}\log n$ coordinates are retained.
\State For every retained $x$ traverse $\mathcal L^{(\theta)}_{c,x}$, incrementing $\mathsf{cnt}_b$ for every $b$ met.
\For{each $b$ with $\mathsf{cnt}_b\ge\lceil3p\theta^2D_a/16\rceil$}
\State $M^{\mathrm{pre}}_{ba}\gets\max\{M^{\mathrm{pre}}_{ba},\theta^4D_a/32\}$ and mark $(b,a)$ at $\theta$.
\EndFor
\EndFor
\EndFor
\State Store the final mark bits of this threshold/layer in $\mathcal S$.
\EndFor
\State Set $\mathcal W_\chi\gets(A_\chi,B_\chi,W_A,W_B,\text{layer maps},
\text{row/column maps})$.
\State Set $\mathcal I_\chi\gets(\mathcal W_\chi,d,k,\gamma,\chi,
\epsilon_{\mathrm{win}},\Theta_\nu,M^{\mathrm{pre}},\mathcal S)$.
\State \Return $\mathcal I_\chi$.
\EndProcedure
\end{algorithmic}
\end{algorithm}

\subsection{The global staircase cap}
\label{sec:bootstrap}

For a one-sided estimate matrix $N$, let $\operatorname{val}(N)$ denote the value
returned by the block dynamic program of \cite[Algorithm~6]{rsss19}. Thus
\cite[Lemma~4.8]{rsss19} uses only the numerical entries of $N$; it does not
require the corresponding common subsequences.

Fix one outer density guess $\nu$ and use the full, explicitly defined
net $\Theta_\nu\subseteq[\epsilon_{\mathrm{win}}\nu,1]$, independently
of which Step~1 marking computations are skipped. For a window pair $(i,j)$, write
\[
L_{ij}:=\lcs(w_i,w_j),
\qquad
D_{ij}:=\max\{|w_i|,|w_j|\}.
\]
All maxima over an empty set below are defined to be zero. The construction has three parts, taken in turn below: a global target matrix built from the threshold net, a repair phase that reaches that target on all but a small fraction of a fixed comparator path, and an induction that iterates the substitution at any fixed depth.

The next lemma packages all local-density thresholds into one one-sided analysis
matrix whose staircase value is globally large. It also shows that only a
controlled number of entries remain deficient relative to this target, which is
the input needed by the repair step.

\begin{lemma}[Global staircase cap]
\label{lem:staircase_cap}
Condition on a successful Step~1 run, and let $M$ be its one-sided
estimate matrix. Choose an orientation for which
Lemma~\ref{lem:rsss_window_compatible} supplies the required compatible
sequence. In this lemma let $n$ denote the common padded length, put
$\lambda^+:=\lcs(A,B)/n$, and let $\nu$ be the dyadic guess satisfying
$\nu\le\lambda^+<2\nu$. If $c_e>0$
is the approximation constant of the repair routine, put
\[
\beta:=\frac12\min\{1/32,c_e\},
\]
and define
\begin{align*}
C_{ij}
&:=
\beta D_{ij}
\max\{\theta^4:\theta\in\Theta_\nu,                 \theta\le L_{ij}/D_{ij}\},\\
M^\star_{ij}
&:=\max\{M_{ij},C_{ij}\}.
\end{align*}
Then $M^\star$ is one-sided and
\[
\operatorname{val}(M^\star)=\Omega(\beta\nu^4 n).
\]
Moreover, for
\[
\mathcal U:=\{(i,j):M_{ij}<C_{ij}\}
\]
one has
\[
|\mathcal U|
\le C_U k^{2-\gamma}\nu^{-1}\log^{q_U}n,
\qquad q_U=4,
\]
\end{lemma}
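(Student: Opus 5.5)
The proof has three parts, taken in the order of the statement: one-sidedness of $M^\star$, the lower bound on $\operatorname{val}(M^\star)$, and the bound on $|\mathcal U|$.

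\emph{One-sidedness.} This is entrywise. By hypothesis $M_{ij}\le L_{ij}$. If the maximum defining $C_{ij}$ is attained at $\theta\le L_{ij}/D_{ij}\le1$, then
\[
C_{ij}=\beta D_{ij}\theta^4\le\beta D_{ij}\theta\le\beta L_{ij}\le L_{ij},
\]
and $C_{ij}=0$ when the set is empty. Hence $0\le M^\star_{ij}\le L_{ij}$.

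\emph{Value bound.} I would use the compatible sequence supplied by Lemma~\ref{lem:rsss_window_compatible} in the chosen orientation, with $\epsilon_0=\epsilon_{\mathrm{win}}\nu$. Its total length is at least $c_{\mathrm{str}}\lambda^+n-2\epsilon_{\mathrm{win}}\nu n\ge(c_{\mathrm{str}}/2)\nu n$, using $\epsilon_{\mathrm{win}}\le c_{\mathrm{str}}/8$ and $\lambda^+\ge\nu$. Let $\ell_t$ be the contribution of the $t$-th pair $(w_{i_t},w'_{j_t})$, so $\ell_t\le L_{i_tj_t}$, and discard pairs with $\ell_t<\epsilon_{\mathrm{win}}\nu D_{i_tj_t}/2$. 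Since the $B$-windows are disjoint, the discarded pairs lose at most $\sum_t\epsilon_{\mathrm{win}}\nu D/2\le\epsilon_{\mathrm{win}}\nu n\le(c_{\mathrm{str}}/8)\nu n$, so the retained pairs still carry $\Omega(\nu n)$.

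For a retained pair, $\rho:=L/D\ge\epsilon_{\mathrm{win}}\nu/2$. If $\rho\ge\epsilon_{\mathrm{win}}\nu$, the net $\Theta_\nu$ contains some $\theta\le\rho$ with $\theta\ge\rho/(1+\epsilon_{\mathrm{win}})$. Otherwise I would tighten the cutoff, or adjoin $\epsilon_{\mathrm{win}}\nu/2$ to the analysis net; either way the loss is a constant depending only on $\epsilon_{\mathrm{win}}$. In both cases
\[
C\ge\beta D\,\Omega(\rho^4)=\Omega(\beta\ell\rho^3)\ge\Omega(\beta\,\epsilon_{\mathrm{win}}^3\nu^3\ell),
\]
using $L\ge\ell$ and $\rho\ge\epsilon_{\mathrm{win}}\nu/2$. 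Summing over retained pairs, which form a compatible window path, gives $\operatorname{val}(M^\star)\ge\Omega(\beta\nu^3)\cdot\Omega(\nu n)=\Omega(\beta\nu^4n)$. This is the step needing the most care: the net's lower end $\epsilon_{\mathrm{win}}\nu$ must be matched with the discard threshold so that no retained pair has an empty maximum, and the losses $\epsilon_{\mathrm{win}}$, $c_{\mathrm{str}}$ must stay absolute constants.

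\emph{Bound on $\mathcal U$.} Suppose $(i,j)\in\mathcal U$, so $C_{ij}>0$. Let $\theta^\ast$ be the largest $\theta\in\Theta_\nu$ with $\theta\le\rho_{ij}$. Then the pair is high at $\theta^\ast$: $L_{ij}\ge\theta^\ast D_{ij}$, which matches $D_a=D_{ij}$ of Definition~\ref{def:anchor_graph}. If it had been marked at $\theta^\ast$, Lemma~\ref{lem:adaptive_step_one} gives $M_{ij}\ge\theta^{\ast4}D_{ij}/32\ge C_{ij}$, since $\beta\le1/64$; pairs marked through exact values have $M_{ij}=L_{ij}\ge C_{ij}$ as well. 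Hence every element of $\mathcal U$ is high and unmarked at its assigned largest threshold. The union of such pairs has size at most $C_Uk^{2-\gamma}\nu^{-1}\log^{q_U}n$ by Lemma~\ref{lem:adaptive_step_one}, which is the claimed bound.
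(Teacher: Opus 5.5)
Your proof is correct. The one-sidedness and $\mathcal U$ parts follow the paper's argument essentially verbatim: you assign each deficient pair to its largest qualifying threshold, observe that a mark there would write at least $\theta^4D_{ij}/32>C_{ij}$ because $\beta\le1/64$, and invoke the union bound of Lemma~\ref{lem:adaptive_step_one}.

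The value bound takes a different and more elementary route. The paper keeps the pairs with $\rho_t\ge\epsilon_{\mathrm{win}}\nu$ and applies the weighted power-mean (Jensen) inequality $\sum_tD_t\rho_t^4\ge(\sum_tD_t\rho_t)^4/(\sum_tD_t)^3$ together with $\sum_tD_t\le2n$. You instead write $D\rho^4=L\rho^3\ge\ell\rho^3$, using the multiplicative net for one power of $\rho$, and bound the other three powers pointwise by the cutoff. Your route avoids H\"older entirely. It pays a factor $\epsilon_{\mathrm{win}}^3$ in the constant, which is harmless because $\epsilon_{\mathrm{win}}$ is a fixed absolute constant; the constant $c_\tau$ in Lemma~\ref{lem:capped_repair} just shrinks accordingly. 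The paper's route gives a constant essentially independent of where the net is truncated. It would therefore survive if the net's lower end were pushed to $\epsilon\nu$ with $\epsilon\to0$, where your pointwise bound would degrade like $\epsilon^3$.

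One point to fix in the write-up. Of your two remedies for the mismatch at the bottom of the net, only ``tighten the cutoff'' is legitimate. $\Theta_\nu$ is fixed in the definition of $C_{ij}$ and is exactly the net Step~1 ran on. Adjoining $\epsilon_{\mathrm{win}}\nu/2$ would create positive caps at a threshold Step~1 never processed, so all such pairs could land in $\mathcal U$, breaking the third claim. Discard pairs with $\ell_t<\epsilon_{\mathrm{win}}\nu D_t$ instead. This costs at most $\epsilon_{\mathrm{win}}\nu n\le(c_{\mathrm{str}}/8)\nu n$, since the $B$-windows of the path are disjoint and $D_t$ is the $B$-window length. It is the paper's choice, and it guarantees $L/D\ge\epsilon_{\mathrm{win}}\nu$, hence a nonempty maximum, for every retained pair.
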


\begin{proof}
If $C_{ij}>0$, let $\theta_{ij}$ be the largest threshold attaining its
definition. Since $\theta_{ij}\le L_{ij}/D_{ij}\le1$ and $\beta\le1$,
\[
C_{ij}=\beta\theta_{ij}^4D_{ij}
\le \theta_{ij}D_{ij}
\le L_{ij}.
\]
Thus both $C$ and $M^\star$ are one-sided.

We next restate the part of the error-free RSSS argument that combines
different local densities. Lemma~\ref{lem:rsss_window_compatible}, with
$\epsilon_0=\epsilon_{\mathrm{win}}\nu$, gives a compatible sequence of
disjoint window pairs whose local LCS values $L_t$ satisfy
\[
\sum_tL_t\ge
c_{\mathrm{str}}\lambda^+n-2\epsilon_{\mathrm{win}}\nu n
\ge(c_{\mathrm{str}}-2\epsilon_{\mathrm{win}})\lambda^+n.
\]
For these pairs put $D_t$ equal to the larger window length and
$\rho_t=L_t/D_t$. Compatibility and disjointness give
\[
\sum_tD_t
\le
\sum_t(|w_{i_t}|+|w_{j_t}|)
\le2n.
\]
Pairs with $\rho_t<\epsilon_{\mathrm{win}}\nu$ contribute at most
$2\epsilon_{\mathrm{win}}\nu n\le2\epsilon_{\mathrm{win}}\lambda^+n$.
Since $\epsilon_{\mathrm{win}}\le c_{\mathrm{str}}/8$, the remaining pairs
have total LCS value at least
$(c_{\mathrm{str}}-4\epsilon_{\mathrm{win}})\lambda^+n
\ge c_{\mathrm{str}}\lambda^+n/2\ge c_{\mathrm{str}}\nu n/2$.
For each remaining
pair the multiplicative net contains a threshold $\theta_t$ with
$\theta_t\le\rho_t<(1+\epsilon_{\mathrm{win}})\theta_t$. Hence the compatible path using
these pairs has value at least
\[
\sum_t C_{i_tj_t}\ge\frac{\beta}{(1+\epsilon_{\mathrm{win}})^4}\sum_tD_t\rho_t^4\ge\frac{\beta(\sum_tD_t\rho_t)^4}{(1+\epsilon_{\mathrm{win}})^4(\sum_tD_t)^3}=\Omega(\beta\nu^4n).
\]
The first inequality follows from the definition of $C$ and the net relation
$\theta_t\le\rho_t<(1+\epsilon_{\mathrm{win}})\theta_t$: for each retained pair,
$C_{i_tj_t}\ge\beta D_t\theta_t^4\ge
\beta D_t\rho_t^4/(1+\epsilon_{\mathrm{win}})^4$. The second inequality is
the weighted Jensen inequality, in the form of \cite[Lemma~A.7]{rsss19} with
$y=3$. The third step uses
$\sum_tD_t\rho_t=\sum_tL_t\ge c_{\mathrm{str}}\nu n/2$ and
$\sum_tD_t\le2n$, proved immediately above, together with the absolute bound
$\epsilon_{\mathrm{win}}\le c_{\mathrm{str}}/8$. This is the Appendix-A
aggregation used in the proof of \cite[Theorem~2.1]{rsss19}, with the constant
$1/16$ replaced by $\beta$.
Since $M^\star\ge C$, its block-DP value satisfies the same bound.

Finally assign every pair in $\mathcal U$ to its largest qualifying
threshold $\theta_{ij}$. If Step~1 had marked this pair at that threshold, it would have written at least
$D_{ij}\theta_{ij}^4/32>C_{ij}$, contradicting $M_{ij}<C_{ij}$. It is
therefore a true $\theta_{ij}$-high false negative. For each fixed threshold, Lemma~\ref{lem:bipartite} when Step~1 is the oblivious procedure, and Lemma~\ref{lem:adaptive_step_one} when it is the adaptive one, applied in absolute units to all dyadic $B$-layers at once, bounds the number of such unmarked mixed pairs by
$\widetilde O(k^{2-\gamma}/\theta_{ij})$.  Its proof does not require the
$B$-windows to have one exact length. Summing over the polylogarithmic
threshold net and using $\theta_{ij}\ge\epsilon_{\mathrm{win}}\nu$ proves the asserted
bound.
\end{proof}

\begin{figure}[!ht]
\centering
\includegraphics[width=\linewidth]{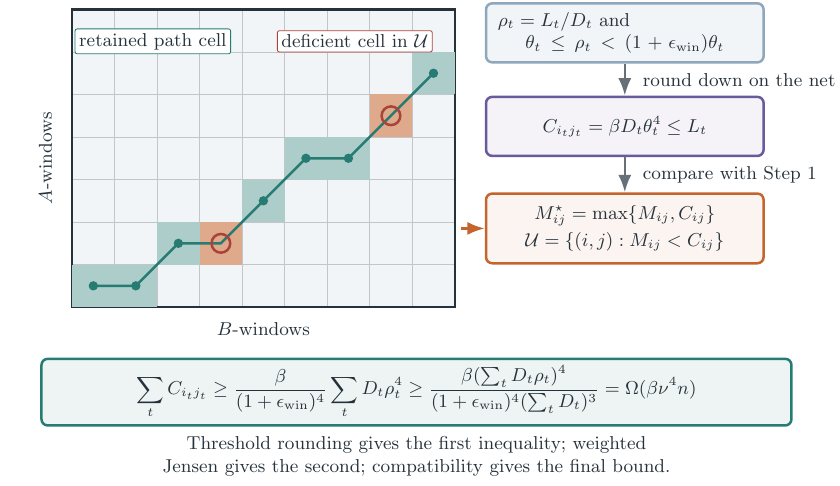}
\vspace{-1.5em}
\caption{Compatible staircase in the window matrix for the global staircase cap of Lemma~\ref{lem:staircase_cap}. Local densities on a compatible path are rounded down to the threshold net, producing one-sided targets $C_{ij}$; the orange cells represent the sparse deficient set $\mathcal U$ that remains for Step~2.}
\label{fig:global_staircase_cap}
\end{figure}

\subsection{Multiscale exact repair}
\label{sec:capped-repair}

The repair below keeps every LCS computation exact. Instead of opening one
neighborhood per trigger and repairing all of its cells, it alternates $s$
rounds of sampled exact detection with $s$ rounds of localization, shrinking
the candidate region at every scale, and repairs only the final region
exhaustively. No approximate routine is invoked anywhere, so the approximation
constant of this section is absolute.

The next lemma repairs this sparse deficient set by alternating sampled exact
detection with progressively smaller neighborhoods. It preserves a one-sided
global value while giving a running-time bound that remains uniform when the
number of scales grows.

\begin{lemma}[Multiscale exact repair]
\label{lem:capped_repair}
Continue in the setting of Lemma~\ref{lem:staircase_cap}, instantiated with
$c_e:=1$, so that $\beta=1/64$, and write
$m:=|\mathcal U|$ and
$\bar m:=C_Uk^{2-\gamma}\nu^{-1}\log^{q_U}n$, so that
$m\le\bar m$, where $C_U$ and $q_U=4$ are the absolute constants of
Lemma~\ref{lem:adaptive_step_one}. Let $s=s(n)\ge1$ be a number of scales with $s\le\log n$, put $\tau:=c_\tau\beta\nu^4/s$ for a sufficiently small absolute
constant $c_\tau>0$, and let radii $r_1\ge r_2\ge\cdots\ge r_s\ge1$ be given,
with $W_\ell:=r_\ell d$ and $h_\ell:=\lceil\tau r_\ell\rceil$. Assume $\tau r_s\ge2$, $w_{\max}\le W_s$, $2W_1<n$ and $r_1\le P_0r_s$. Then procedure
\textnormal{\textsc{MultiscaleStepTwo}} in Algorithm~\ref{alg:capped_repair}
returns, with high probability, a one-sided value
$Z_{\mathrm{out}}=\Omega(\beta\nu^4n)$ in time
\begin{align}
\label{eq:macro_step2}
\widetilde O(k^2P_0/(\tau r_1))
+\sum_{\ell=2}^{s}\widetilde O(\bar mP_0r_{\ell-1}/(\tau^2r_\ell))
+\widetilde O(\bar mP_0r_s/\tau)+\widetilde O(k^2),
\end{align}
where $P_0$ is the cost of testing one window pair exactly. Every factor
suppressed by $\widetilde O$ here and in the failure probability is
$s^{O(1)}(\log n)^{O(1)}$ with absolute constants, so the bound is uniform in
$s$.
\end{lemma}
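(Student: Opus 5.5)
The plan is to read \textsc{MultiscaleStepTwo} as an $s$-level version of the nearby search of Lemma~\ref{lem:step2}. I would prove correctness against one fixed comparator path, then charge each scale's cost separately.

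\emph{Structure of a scale.} At scale $\ell$ there is a current candidate region $\mathcal R_{\ell}$, with $\mathcal R_1$ the whole table $W_A\times W_B$. The procedure samples $A$-rows of $\mathcal R_\ell$ independently at rate $p_\ell=\min\{1,C\log n/h_\ell\}$. It computes the exact value $L_{ij}$ of every cell of $\mathcal R_\ell$ in a sampled row, at cost $P_0$ per cell. From $L_{ij}$ it evaluates $C_{ij}$, declares $(i,j)$ a trigger when $M_{ij}<C_{ij}$, and writes $\max\{M_{ij},C_{ij}\}$. Each trigger opens a box of radius $W_\ell=r_\ell d$, measured in the units of Observation~\ref{obs:nearby-counts}. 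The union of these boxes, intersected with $\mathcal R_\ell$, is $\mathcal R_{\ell+1}$. After scale $s$, every cell of the final region is computed exactly, the entries are updated, and the block dynamic program is run in $\widetilde O(k^2)$ time. Every written entry is an exact $L_{ij}$ or a value $C_{ij}\le L_{ij}$, so one-sidedness is deterministic given Lemma~\ref{lem:staircase_cap}.

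\emph{Correctness.} Fix the compatible path $\Pi$ of Lemma~\ref{lem:staircase_cap}, with $\sum_{\Pi}C\ge c\beta\nu^4n$. This path depends only on Step~1 and the input, so the fresh row samples at every scale are independent of it; this is the conditioning of Remark~\ref{rem:good_event}. Partition $\Pi$ into consecutive stretches spanning $r_\ell$ consecutive $A$-rows, and call a stretch \emph{$\ell$-dense} if it has at least $h_\ell=\lceil\tau r_\ell\rceil$ deficient cells. By induction on $\ell$, I would show that with high probability every $\ell$-dense stretch not yet discarded lies inside $\mathcal R_\ell$. The sample then hits one of its deficient cells with probability $1-n^{-\Omega(C)}$, and the radius-$W_\ell$ box around that trigger covers the whole stretch. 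The hypothesis $w_{\max}\le W_s$ is what makes the box cover the $B$-side extent of the stretch as well. The cells discarded at scale $\ell$ lie in non-dense stretches, so they number fewer than $\tau$ per path row. Each such cell has value at most $d$ (Lemma~\ref{lem:step2}(i)), and $\Pi$ meets at most $2n/d$ rows. Summed over $s$ scales, the loss is at most $2s\tau n=2c_\tau\beta\nu^4 n$, which is at most half of the path value for small $c_\tau$. The surviving cells of $\Pi$ lie in the final region and are repaired exactly, which gives $Z_{\mathrm{out}}=\Omega(\beta\nu^4n)$.

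\emph{Running time.} I would charge the four terms of Eq.~\eqref{eq:macro_step2} in order.
\begin{itemize}
\item Scale~$1$ costs $p_1 k\cdot k\cdot P_0=\widetilde O(k^2P_0/(\tau r_1))$.
\item Every trigger is a cell of $\mathcal U$, so there are at most $\bar m$ of them. Hence $\mathcal R_\ell$ is a union of at most $\bar m$ boxes, each spanning $O(r_{\ell-1})$ rows and $\widetilde O(r_{\ell-1})$ columns. Sampling its rows at rate $p_\ell$ costs $\widetilde O(\bar m\,r_{\ell-1}^2\,p_\ell P_0)$ in expectation.
\item Exhaustive repair of the final region costs $\bar m\,\widetilde O(r_s^2)\,P_0$.
\end{itemize}

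\emph{Main obstacle.} Matching the exact powers in Eq.~\eqref{eq:macro_step2} is the step I expect to be hardest. The naive box volume $r^2$ gives $\bar m P_0 r_{\ell-1}^2/(\tau r_\ell)$ and $\bar m P_0 r_s^2$, not the stated $\bar m P_0r_{\ell-1}/(\tau^2r_\ell)$ and $\bar mP_0r_s/\tau$. The fix I would try first is to count triggers rather than deficient cells. Triggers are found only on sampled rows, so in expectation there are about $p_\ell\cdot\bar m\le\bar m/(\tau r_\ell)$ of them. I would also confine each box to a band of width $O(r)$ around the trigger in the staircase direction, so that its volume is $O(r^2\tau)$ rather than $O(r^2)$. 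The extra factors of $\tau$ are the part I am least sure of. The hypothesis $r_1\le P_0r_s$ and the $\log n$ and $s$ bookkeeping would enter here. Expectations are converted to worst-case bounds by aborting at a constant multiple of the budget and repeating $O(\log n)$ times, as in Lemma~\ref{lem:step2}. All hidden factors are $s^{O(1)}(\log n)^{O(1)}$, which gives the claimed uniformity in $s$.
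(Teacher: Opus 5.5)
\textbf{Gap in the coverage argument.} Your architecture is the paper's: fresh row samples at each scale, triggers opening boxes, exact exhaustive repair of the last region, and loss charged to path cells that lack enough nearby deficient witnesses. However, the coverage argument as written fails.

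A stretch of the path occupying $r_\ell$ consecutive $A$-rows has $A$-extent about $W_\ell$, but its $B$-extent is unbounded. Consecutive compatible cells can use adjacent $A$-windows while their $B$-windows are $\Theta(n)$ apart. The hypothesis $w_{\max}\le W_s$ bounds only the length of each $B$-window. Its actual role is in Observation~\ref{obs:nearby-counts}, to make each box contain $\widetilde O(r_\ell^2)$ pairs. So a box around one sampled trigger need not cover a dense stretch, and the cells it misses are neither repaired nor charged to your loss.

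You must localize in both coordinates, as the paper does. Call a deficient path cell $\ell$-lonely if fewer than $h_\ell$ earlier deficient cells lie within $W_\ell$ of it in both coordinates. Because the path is monotone in both coordinates, those earlier cells form the shorter of two suffixes of the path order. Partitioning each string into $\lceil n/W_\ell\rceil$ intervals then bounds the lonely cells by $2h_\ell\lceil n/W_\ell\rceil\le 8\tau n/d$ per scale, so the total lost value is at most $8s\tau n$. Equivalently, cut your stretches at the interval boundaries of both strings.

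You also need every witness used at scale $\ell$ to lie in the current region. Either measure density only among cells that survived earlier scales, or use the algorithm's radius $2W_\ell$. By the triangle inequality, that radius carries every deficient cell within $W_{\ell+1}$ of a non-lonely $z$ into $Q_\ell$.

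Finally, detection should compare against the frozen $M^{\mathrm{pre}}$, as the algorithm does. With your in-place writes, witnesses repaired at an earlier scale would stop triggering.

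\textbf{Running time.} Your first fix is exactly the paper's and needs no supplement. Triggers at scale $\ell$ are cells of $\mathcal U$ in sampled rows, so $\E[|Q_\ell|]=\widetilde O(p_\ell\bar m r_\ell^2)=\widetilde O(\bar m r_\ell/\tau)$. The fresh scale-$(\ell+1)$ sample then tests an expected $2p_{\ell+1}|Q_\ell|$ pairs, which gives $\widetilde O(\bar m P_0 r_\ell/(\tau^2 r_{\ell+1}))$. The final scan costs $|Q_s|P_0=\widetilde O(\bar m P_0 r_s/\tau)$. These are precisely the terms of Eq.~\eqref{eq:macro_step2}.

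Drop the band idea. It is unnecessary, and nothing lets you shrink a box to a $\tau$-fraction without knowing where the path runs inside it.

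The hypothesis $r_1\le P_0 r_s$ serves only to absorb candidate generation and deduplication, $\widetilde O(\bar m r_1/\tau)$ identifiers, into the final-scan term.

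For the worst-case conversion, set per-level budgets from $\bar m$ rather than the unknown $m$, and inflate them by a factor $s$. Apply Markov sequentially to the $3s$ budget events, so that a trial completes with constant probability uniformly in $s$. Then couple the capped run to the uncapped process.
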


\begin{algorithm}[!ht]
\caption{Multiscale exact repair (Step~2)}
\label{alg:capped_repair}
\begin{algorithmic}[1]
\Procedure{MultiscaleStepTwo}{$\mathcal I_\chi,\nu,r_1,\ldots,r_s$}
\Comment{Lemma~\ref{lem:capped_repair}}
\State Unpack $\mathcal I_\chi$ as defined in
Algorithm~\ref{alg:adaptive_step_one}, obtaining the oriented strings, both
window families and maps, $d,k,\gamma,\chi,\epsilon_{\mathrm{win}},
\Theta_\nu,M^{\mathrm{pre}}$, and $\mathcal S$.
\State Set $\beta\gets1/64$, $q_{\mathrm{bud}}\gets q_U+6$,
$\tau\gets c_\tau\beta\nu^4/s$, $h_\ell\gets\lceil\tau r_\ell\rceil$,
$W_\ell\gets r_\ell d$, and
$p_\ell\gets\min\{1,C_{\mathrm{hit}}\log n/(\tau r_\ell)\}$.
\State Set $\mathsf B_0\gets |W_A||W_B|$.
\State For $\ell=1,\ldots,s$, set
$\mathsf B_\ell,\mathsf G_\ell\gets
\lceil C_{\mathrm{bud}}sC_Uk^{2-\gamma}r_\ell\nu^{-1}
\tau^{-1}\log^{q_{\mathrm{bud}}}n\rceil$.
\State Set $\mathsf B'_\ell\gets
\lceil C_{\mathrm{bud}}sp_\ell\mathsf B_{\ell-1}\rceil$
for every $\ell=1,\ldots,s$.
\State Store the $A$-window starts in one increasing array and the
starts of every dyadic $B$-layer in its own increasing array.
\State To enumerate a $2W_\ell$ neighborhood, take its one contiguous
$A$-range and, for every dyadic $B$-layer, take that layer's contiguous
$B$-range and emit their Cartesian product; later deduplicate identifiers.
\For{$t=1,\ldots,\lceil C_{\mathrm{rep}}\log n\rceil$}
\State $Q_0\gets W_A\times W_B$, represented implicitly.
\For{$\ell=1,\ldots,s$}
\State Initialize $Q_\ell\gets\emptyset$, the exact-test counter
$u_\ell\gets0$, and the raw-generation counter $g_\ell\gets0$.
\State Set $k_A\gets|W_A|$ and sample exactly
$\min\{k_A,\lceil p_\ell k_A\rceil\}$ $A$-rows uniformly
without replacement, with fresh coins.
\For{each sampled $A$-row, and each pair $(i,j)$ in its
$Q_{\ell-1}$ bucket}
\State Increment $u_\ell$ and discard the trial immediately if
$u_\ell>\mathsf B'_\ell$.
\State Compute $D_{ij}:=\max\{|w_i|,|w_j|\}$ and
$L_{ij}$ exactly using Lemma~\ref{lem:perpair} on the windows identified by
$\mathcal I_\chi$.
\State From the full net and layer data stored in $\mathcal S$, set
$\theta_{ij}\gets\max\{\theta\in\Theta_\nu:
\theta\le L_{ij}/D_{ij}\}$ among thresholds applicable to this pair,
with maximum zero when the set is empty, and set
$C_{ij}\gets\beta D_{ij}\theta_{ij}^4$.
\If{$M^{\mathrm{pre}}_{ij}<C_{ij}$}
\For{each pair whose two starting positions lie within $2W_\ell$ of
those of $(i,j)$, enumerated across every dyadic $B$-layer as above}
\State Increment $g_\ell$ and discard the trial immediately if
$g_\ell>\mathsf G_\ell$; otherwise append the pair to $Q_\ell$.
\EndFor
\EndIf
\EndFor
\State Sort the generated pair identifiers lexicographically, delete duplicates,
store the resulting $Q_\ell$ in buckets indexed by its $A$-row, and discard
the trial if $|Q_\ell|>\mathsf B_\ell$.
\EndFor
\State For every $(i,j)\in Q_s$ compute $L_{ij}$ exactly on its two
orientation-specific windows; let $M^{t}$ agree with $M^{\mathrm{pre}}$ off
$Q_s$ and equal $\max\{M^{\mathrm{pre}}_{ij},L_{ij}\}$ on $Q_s$.
\EndFor
\State \Return the maximum of $\operatorname{val}(M^{t})$ over all completed
trials, and $\operatorname{val}(M^{\mathrm{pre}})$ if none completed.
\EndProcedure
\end{algorithmic}
\end{algorithm}

\begin{proof}
Condition throughout on the joint Step~1 event of
Lemma~\ref{lem:adaptive_step_one}, on which $M^{\mathrm{pre}}$ is one-sided
and all residual bounds hold. Every entry of every completed-trial matrix
$M^t$ is then an entry of $M^{\mathrm{pre}}$ or an exact window LCS, so every
completed-trial value is one-sided.

For the analysis of one trial, define an uncapped process
$\widetilde Q_0,\ldots,\widetilde Q_s$ that uses exactly the same row samples,
exact tests, and neighborhood expansions as Algorithm~\ref{alg:capped_repair},
but never aborts because of a budget.  The implemented process follows this
uncapped process until the first budget violation.

\emph{Approximation.} Fix the comparator path of the successful orientation and
discard its cells of local density below $\epsilon_{\mathrm{win}}\nu$; as in
Lemma~\ref{lem:staircase_cap}, the surviving cells satisfy
$\sum_tC_{i_tj_t}=\Omega(\beta\nu^4n)$, and every individual cap value is at
most $d$. Order the deficient surviving cells along the path. For $\ell\in[s]$, call such a cell \emph{$\ell$-lonely} when fewer than $h_\ell$ earlier deficient cells of this order have starting position within $W_\ell$ of its own in both coordinates. The earlier cells within $W_\ell$ in one fixed coordinate form a suffix of the order, because the path is monotone in both coordinates, so the cells within $W_\ell$ in both coordinates form the shorter of these two suffixes, and a cell is $\ell$-lonely exactly when one of the two suffixes has fewer than $h_\ell$ cells. Partition each string into at most $\lceil n/W_\ell\rceil$
intervals of length at most $W_\ell$.  In either coordinate, after the first
$h_\ell$ deficient cells met in one interval, every later cell in that
interval has $h_\ell$ earlier cells at distance at most $W_\ell$.  Hence the
union of the two coordinate-lonely sets has size at most
$2h_\ell\lceil n/W_\ell\rceil\le8\tau n/d$: here
$W_\ell<n$, $h_\ell\le2\tau r_\ell$ follows from
$\tau r_\ell\ge2$, and
$\lceil n/W_\ell\rceil\le2n/W_\ell$.  Therefore the cells lonely at some
level carry total cap value at most
$8s\tau n=8c_\tau\beta\nu^4n$.  Taking the absolute constant
$c_\tau$ small enough makes this at most half of the staircase sum.

\emph{Coverage.} Consider a surviving deficient cell $z$ that is $\ell$-lonely
for no $\ell\in[s]$. We show by induction on $\ell$ that, with probability at least
$1-\ell n^{-C_{\mathrm{hit}}}$, the region $\widetilde Q_\ell$ contains $z$; moreover,
when $\ell<s$, it contains every deficient cell of the order whose starting
positions are within $W_{\ell+1}$ of those of $z$ in both coordinates. Since $z$ is not $\ell$-lonely, at least $h_\ell$ earlier deficient cells lie within $W_\ell$ of $z$ in both coordinates.
These witnesses occupy distinct $A$-rows, because compatible path cells use
distinct $A$-windows. Write $k_A:=|W_A|$. Each witness lies in
$\widetilde Q_{\ell-1}$: for $\ell=1$ this
is trivial, and for $\ell\ge2$ it is the induction hypothesis, because
$W_\ell\le W_{\ell-1}$. If $p_\ell=1$, every $A$-row is sampled and the probability of missing all
$h_\ell$ witnesses is zero. If $p_\ell<1$, sampling
$\lceil p_\ell k_A\rceil$ rows without replacement misses them with probability
at most
$(1-h_\ell/k_A)^{p_\ell k_A}\le\exp(-p_\ell h_\ell)\le
n^{-C_{\mathrm{hit}}}$, since $h_\ell\ge\tau r_\ell$. Thus in either
case a sampled witness $w_\ell$ is tested, because it lies in $\widetilde Q_{\ell-1}$, and is
found deficient, and $\widetilde Q_\ell$ then receives every pair whose starting positions
are within $2W_\ell$ of those of $w_\ell$. This includes $z$, since $w_\ell$ is within
$W_\ell\le2W_\ell$ of it. When $\ell<s$, it also includes every deficient
cell within $W_{\ell+1}$ of $z$, by the triangle inequality
$W_{\ell+1}+W_\ell\le2W_\ell$. After level $s$ the cell $z$ lies in $\widetilde Q_s$ and is repaired exactly, so $M^t$ carries its full
value $L_z\ge C_z$ there. A union bound over the at most $k^2$ path cells
bounds the total coverage failure by $k^2sn^{-C_{\mathrm{hit}}}$.

Let $\widetilde M^t$ be the final table of the uncapped process.  On
this coverage event every surviving path cell is nondeficient, lonely at some
scale, or repaired, and $\widetilde M^t\ge M^{\mathrm{pre}}$ entrywise.
Consequently
\begin{align*}
\operatorname{val}(\widetilde M^t)
\ge\sum_tC_{i_tj_t}-8s\tau n
=\Omega(\beta\nu^4n).
\end{align*}

\emph{Running time and coupling.} Write $k_A:=|W_A|$ and
$k_B:=|W_B|$.  If $p_\ell<1$, then
$p_\ell=C_{\mathrm{hit}}\log n/(\tau r_\ell)$ and
$2W_\ell\le2W_1<n$ imply $r_\ell<k_A/2$ up to the fixed endpoint-rounding
factor.  Since $\tau\le1$, we have
$p_\ell k_A\ge2C_{\mathrm{hit}}\log n$ for all sufficiently large $n$.
Consequently
\begin{align*}
\frac{\lceil p_\ell k_A\rceil}{k_A}\le2p_\ell .
\end{align*}
For $p_\ell=1$ this inclusion probability is exactly one.  Thus level $1$
tests $\widetilde O(p_1k^2)$ pairs and costs
$\widetilde O(k^2P_0/(\tau r_1))$.

Analyze first the uncapped process.  Only members of $\mathcal U$ can trigger,
and Observation~\ref{obs:nearby-counts} gives
$\widetilde O(r_\ell^2)$ output pairs per trigger because
$W_\ell\ge W_s\ge w_{\max}$.  The ordered start arrays stated in the algorithm enumerate precisely these
pairs without scanning the full product: one contiguous $A$-range is paired
with the contiguous range from each dyadic $B$-layer. The number of layers is
$O(\log n)$ and is absorbed by $\widetilde O$.
Let $\widetilde R_\ell$ be the raw identifier multiset emitted before
deduplication.  The preceding inclusion bound gives
\begin{align*}
\E[|\widetilde R_\ell|]
 =\widetilde O(p_\ell mr_\ell^2)
 =\widetilde O(mr_\ell/\tau),
\qquad
\E[|\widetilde Q_\ell|]
 \le\E[|\widetilde R_\ell|].
\end{align*}
Sorting $g$ identifiers, deleting repeats, and forming row buckets costs
$O(g\log g)$ time.  Using $m\le C_Uk^{2-\gamma}\nu^{-1}\log^{q_U}n$ and the two
ordered-range searches of Observation~\ref{obs:nearby-counts}, every omitted
counting factor is at most $\log^4n$. Thus the executable choice
$q_{\mathrm{bud}}=q_U+6$ dominates raw generation, deduplicated size, the
layer lookup and sorting. Consequently $\mathsf G_\ell$ and
$\mathsf B_\ell$ exceed their respective expectations by
$\Omega(C_{\mathrm{bud}}s)$; the budgets are computable without knowing
$\mathcal U$.

At level $\ell\ge2$, expose the budget events sequentially.  Conditioned on
all earlier levels respecting their budgets,
$|\widetilde Q_{\ell-1}|\le\mathsf B_{\ell-1}$, and the fresh level-$\ell$
row sample is independent of $\widetilde Q_{\ell-1}$.  The expected number
of exact tests is therefore at most
$2p_\ell\mathsf B_{\ell-1}$; level $1$ satisfies the same bound
deterministically.  After increasing the absolute budget constant to absorb
this factor two, $\mathsf B'_\ell$ exceeds the conditional expectation by
$\Omega(C_{\mathrm{bud}}s)$.  Markov's inequality, applied successively, and
a union bound over raw generation, deduplicated size, and exact tests at all
$s$ levels show that the uncapped process respects every budget with
probability at least $1-3/C_{\mathrm{bud}}\ge5/6$.

Couple Algorithm~\ref{alg:capped_repair} to this uncapped process by using the
same fresh row samples and the same deterministic test and enumeration order.
Induction on $\ell$ shows that, on the budget-good event, no abort occurs and
$Q_\ell=\widetilde Q_\ell$ at every level.  Thus the implemented final table
equals $\widetilde M^t$ on the intersection of the budget-good and coverage
events.
On a
completed trial, the level-$\ell$ test cost is deterministically at most
$\mathsf B'_\ell P_0=
\widetilde O_s(\bar m r_{\ell-1}P_0/(\tau^2r_\ell))$ for $\ell\ge2$.
The final scan costs
$|Q_s|P_0=\widetilde O(\bar m r_sP_0/\tau)$. On a completed trial,
candidate generation and deduplication touch at most
$\sum_{\ell=1}^s\mathsf G_\ell=
\widetilde O_s(\bar m r_1/\tau)$ identifiers, at polylogarithmic cost each.
Since $r_1\le P_0r_s$, this work is at most the final-scan term. When $p_\ell=1$ the
level tests all of $Q_{\ell-1}$, and $\tau r_\ell=O(\log n)$ then makes this
cost $\widetilde O(\bar m r_{\ell-1}P_0/(\tau^2r_\ell))$ as well. A trial therefore
completes with all coverage events with probability at least $1/2$, so $O(\log n)$ independent trials succeed with high probability; the returned maximum is at least the value of a successful trial, and it is one-sided because every trial's value is. The block dynamic program costs $\widetilde O(k^2)$ per trial, which is the last term of Eq.~\eqref{eq:macro_step2}.
Summing the level costs gives Eq.~\eqref{eq:macro_step2}. Finally, $q_U=4$ and $q_{\mathrm{bud}}=q_U+6$ are absolute. The
trial repetition contributes one further $\log n$, each range search and
sort contributes at most $\log^2n$, and the threshold, layer, orientation
and density-guess loops together contribute at most $\log^4n$. Thus all
work omitted from Eq.~\eqref{eq:macro_step2} is bounded explicitly by
$O(s^4\log^{q_{\mathrm{bud}}+8}n)$ times an absolute constant. There is
no recursive call and no other dependence on $s$, proving the stated
uniformity for $s\le\log n$.
\end{proof}

\begin{figure}[!ht]
\centering
\includegraphics[width=\linewidth,trim=0 8pt 0 0,clip]{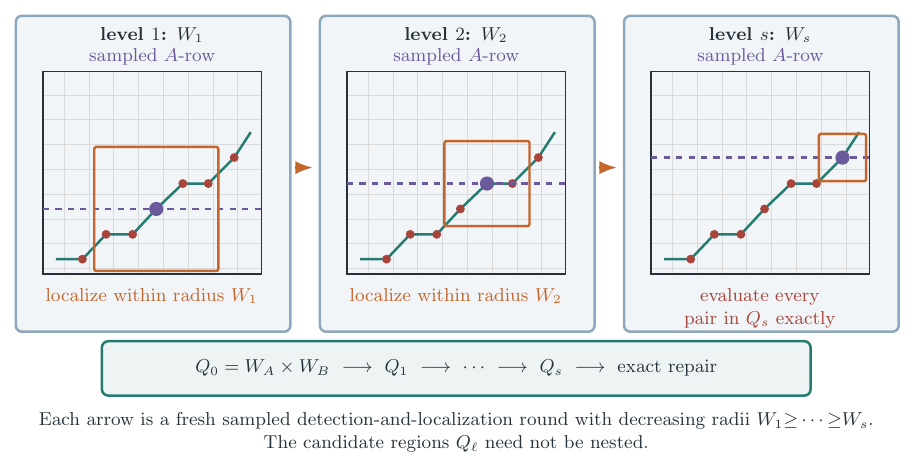}\setlength{\abovecaptionskip}{2pt}
\caption{The multiscale exact repair of Lemma~\ref{lem:capped_repair}. At each level a fresh row sample detects a deficient witness and opens a smaller spatial neighborhood; all pairs in the final region $Q_s$ are evaluated exactly. The decreasing radii do not assert set inclusion among the candidate regions $Q_\ell$.}
\label{fig:multiscale_exact_repair}
\end{figure}

\subsection{The multiscale exponents}
\label{sec:iterate}

The next lemma chooses the radii and window parameters so that preprocessing,
filtering, and repair have one controlled exponent. It yields the uniform
exponent $F_{s,a}$, which approaches $8/5$ when the scale count grows and the
crossover parameter shrinks.

\begin{lemma}[Adaptive one-level exponent]
\label{lem:one_level_adaptive}
Let $s=s(n)$ be an integer with $2\le s\le\sqrt{\log n}$, and let $0<a<2(s+1)/((s-1)(30s+3))$, for which $a\le1/(33s)$ is a convenient sufficient form. Put
$x:=(4s+2+3(s-1)a)/(5s+2)$, let $\gamma$ solve $x\gamma=3x-2+3a$, and set
$k=n^x$, $d=n^{1-x}$. For a density guess $\nu$, define the ideal radii
\begin{equation}
 r_\ell^\star:=
 k^{(s-\ell+1)\gamma/(s+1)}
 \nu^{(5s+5-9\ell)/(s+1)},
 \qquad \ell=1,\ldots,s. \label{eq:adaptive_radii}
\end{equation}
Use the upward-rounded monotone radii
$r_s:=\max\{1,\lceil r_s^\star\rceil\}$ and
$r_\ell:=\max\{r_{\ell+1},\lceil r_\ell^\star\rceil\}$ for
$\ell=s-1,\ldots,1$. Running Lemma~\ref{lem:adaptive_step_one} in place of
\textnormal{\textsc{WindowStepOne}} and then the $s$-scale repair of
Lemma~\ref{lem:capped_repair}, every density guess $\nu=n^{-b}$ with $b\le a$ is handled in time $s^{O(1)}(\log n)^{O(1)}n^{F_{s,a}}$ with absolute constants, hence uniformly in $s$, where
\begin{align*}
F_{s,a}=\frac{8s+4+21sa}{5s+2}=\frac85+\frac{4}{25s+10}+\frac{21s}{5s+2}a,
\end{align*}
and all hypotheses of Lemma~\ref{lem:capped_repair} hold.
\end{lemma}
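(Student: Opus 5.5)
The plan is pure exponent bookkeeping: substitute the stated parameters into the cost of Lemma~\ref{lem:adaptive_step_one} and into Eq.~\eqref{eq:macro_step2} of Lemma~\ref{lem:capped_repair}, show every term is at most $n^{F_{s,a}}$ up to $s^{O(1)}(\log n)^{O(1)}$, and check the hypotheses of Lemma~\ref{lem:capped_repair}. Throughout, $\nu=n^{-b}$ with $0\le b\le a$. Since $\tau=\Theta(\nu^4/s)$ and $\bar m=\widetilde O(k^{2-\gamma}\nu^{-1})$, every term is a monomial $n^{c_0+c_1b}$ with $c_1\ge0$, times factors polynomial in $s$ and $\log n$. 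It therefore suffices to bound each exponent at the worst case $b=a$. The polynomial-in-$s$ factors come from $\tau^{-1}$ and $\tau^{-2}$ and from the uniformity statement of Lemma~\ref{lem:capped_repair}.

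\textbf{Step 1: Step-1 and per-pair costs.} Take $P_0=\widetilde O(d^2+w_{\max})$ from Lemma~\ref{lem:perpair}. We have $d^2=n^{2-2x}$ and $w_{\max}=n^{1-x+b}$, and $x<1$ with $a$ small gives $d^2\ge w_{\max}$, so $P_0=n^{2-2x}$. The three Step-1 terms then have exponents
\begin{itemize}
\item $k^{1+\gamma}d^2$: $\;2-x+x\gamma=1+2x+3a$ (using $x\gamma=3x-2+3a$);
\item $k^\gamma n/\nu$: $\;x\gamma+1+b$, which is smaller;
\item $k^2\nu^{-3}$: $\;2x+3b$.
\end{itemize}
Substituting $x=(4s+2+3(s-1)a)/(5s+2)$ into $1+2x+3a$ gives exactly $(13s+6+\ldots)/(5s+2)$. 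I expect this to reduce to $F_{s,a}$ or below, since $x$ was evidently chosen to make the preprocessing term tight. I would also check $\gamma\in(0,1)$: the condition $x\gamma=3x-2+3a<x$ becomes $x<1-3a/2$, which holds for small $a$.

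\textbf{Step 2: repair terms.} I would take $\log_n$ of each term of Eq.~\eqref{eq:macro_step2} using the ideal radii from Eq.~\eqref{eq:adaptive_radii}.
\begin{itemize}
\item The ratio $r_{\ell-1}^\star/r_\ell^\star=k^{\gamma/(s+1)}\nu^{9/(s+1)}$ is independent of $\ell$. So all middle terms $\bar mP_0\tau^{-2}r_{\ell-1}/r_\ell$ are equal, with exponent $x(2-\gamma)+2-2x+9b+(x\gamma-9b)/(s+1)$.
\item The first term $k^2P_0/(\tau r_1)$ uses $r_1^\star=k^{s\gamma/(s+1)}\nu^{(5s-4)/(s+1)}$.
\item The final term $\bar mP_0r_s/\tau$ uses $r_s^\star=k^{\gamma/(s+1)}\nu^{(5-4s)/(s+1)}$.
\end{itemize}
The radii are a geometric interpolation designed so that these three families coincide. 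I would verify that each of them equals $F_{s,a}$ at $b=a$ after eliminating $\gamma$ via $x\gamma=3x-2+3a$. The $\widetilde O(k^2)$ term is trivially smaller. Rounding the radii up to be monotone and at least $1$ only changes constants, as long as $r_\ell^\star\ge1$ and the ideal sequence is decreasing. Monotonicity holds when $k^{\gamma}\nu^{9}\ge1$, i.e.\ $x\gamma\ge 9b$, which follows from $a$ being small relative to $1/s$.

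\textbf{Step 3: hypotheses of Lemma~\ref{lem:capped_repair}.} The conditions to check are $\tau r_s\ge2$, $w_{\max}\le W_s=r_sd$, $2W_1<n$, and $r_1\le P_0r_s$.
\begin{itemize}
\item $w_{\max}\le r_sd$ reduces to $r_s\ge n^{b}$ up to constants.
\item $\tau r_s\ge 2$ reduces to $r_s^\star\nu^4\gtrsim s$, i.e.\ $k^{\gamma/(s+1)}\nu^{(9-4s)/(s+1)+\ldots}$ being polynomially large.
\item $2W_1<n$ needs $r_1<k$.
\item $r_1\le P_0r_s$ needs $(k^\gamma\nu^9)^{(s-1)/(s+1)}\le n^{2-2x}$.
\end{itemize}
These are the genuinely delicate linear inequalities in $a$, and they are where the hypothesis $a<2(s+1)/((s-1)(30s+3))$ should come from. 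I expect this step to be the main obstacle. I would handle it by writing each condition as an affine inequality in $(a,b)$ with $b\le a$ and checking it at the extreme point $b=a$. The case $b=0$ should be automatic because the exponents are monotone in $b$. I would then confirm that the stated bound on $a$, together with $s\le\sqrt{\log n}$ so that the $s^{O(1)}$ slack is $n^{o(1)}$ and is absorbed by the polynomial gaps, makes all of them hold. Finally, $a\le 1/(33s)$ implies the stated bound because $(s-1)(30s+3)\le 33s\cdot 2(s+1)/\ldots$; this is a one-line comparison.
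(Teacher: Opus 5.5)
Your route is the same exponent bookkeeping the paper uses, and your repair-term algebra is right: the common ratio is $(k^\gamma\nu^9)^{1/(s+1)}$, and all repair terms share the exponent $2-\frac{s}{s+1}(x\gamma-9b)$. Two steps fail as written, however.

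\textbf{Step~1 arithmetic.} Using $x\gamma=3x-2+3a$, one gets $2-x+x\gamma=2x+3a$, not $1+2x+3a$. Your value $(13s+6+\cdots)/(5s+2)$ exceeds $2$, so it could never be at most $F_{s,a}$. With the correction, $2x+3a=F_{s,a}$ exactly, and this is also the filter exponent $2x+3b$ at $b=a$. So the $\gamma$-equation is precisely the statement that preprocessing equals filters. The choice of $x$ then makes the repair exponent equal to both, via $x\gamma-9a=(s+1)(2-21a)/(5s+2)$. The three cost families are therefore balanced at $F_{s,a}$, not merely bounded by it.

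\textbf{Step~3 endpoint (the substantive gap).} The hypotheses of Lemma~\ref{lem:capped_repair} do not all have their worst case at $b=a$. The costs increase in $b$. But $r_1^\star=k^{s\gamma/(s+1)}\nu^{(5s-4)/(s+1)}$ and $r_1^\star/r_s^\star=n^{(s-1)(x\gamma-9b)/(s+1)}$ both decrease in $b$, so $2W_1<n$ and $r_1\le P_0r_s$ are tightest at $b=0$.

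At $b=a$, the condition $r_1\le P_0r_s$ reduces to $-2\le 15(s-1)a$, which is vacuous. At $b=0$, the inequality $(s-1)x\gamma/(s+1)\le 2(1-x)$ is equivalent to $(s-1)(30s+3)a\le 2(s+1)$. This is the only place the hypothesis on $a$ enters. Your plan to check at $b=a$ and declare $b=0$ automatic by monotonicity gets the direction backwards for the one binding constraint, and it would never recover the hypothesis.

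The fix is to check each affine condition at the endpoint where it is actually tight. For $2W_1<n$, use $r_1^\star\le k^{s\gamma/(s+1)}$ and $\gamma<1$ to get $n/W_1^\star\ge n^{x/(s+1)}$.

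\textbf{Smaller points.} The correct product is $\tau r_s^\star=\Theta(1/s)\,(k^\gamma\nu^9)^{1/(s+1)}$; your exponent of $\nu$ there is garbled. Also, the slack margins are $n^{\Theta(1/s)}$, not polynomial. They dominate the losses only because $s\le\sqrt{\log n}$ gives $n^{\Omega(1/s)}\ge 2^{\Omega(\sqrt{\log n})}$, while the losses are $s^{O(1)}\le(\log n)^{O(1)}$. An arbitrary $n^{o(1)}$ slack, as you phrase it, would not be absorbed.
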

\begin{proof}
The three leading costs are $k^{1+\gamma}d^2$ from centre preprocessing,
$k^2\nu^{-3}$ from the filters, and
$d^2k^{2-c_s\gamma}\nu^{-9c_s}+k^2$ from the repair, with exponents
$t_1=2-x+x\gamma$, $t_2(b)=2x+3b$ and $t_3(b)=2-c_sx\gamma+9c_sb$, where
$c_s=s/(s+1)$; the last two increase in $b$, so $b=a$ is the worst guess.
The stated $x$ and $\gamma$ are exactly the solution of
$t_1=t_2(a)=t_3(a)$, and the common value is $2x+3a=F_{s,a}$. The remaining
preprocessing term $k^\gamma n/\nu$ has exponent $1+x\gamma+b$, below $t_1$
by $1-x-b>0$. Since $1-x=(s-3(s-1)a)/(5s+2)$ and
$x\gamma-9a=(s+1)(2-21a)/(5s+2)$, the assumption on $a$, which forces $a<2/21$, and $s\ge2$ give $0<\gamma<1$, $b\le a<1-x$, hence $w_{\max}\le d^2$ and
$P_0=\widetilde O(d^2)$; $x\gamma>9a\ge9b$, hence the ideal radii satisfy
$r_\ell^\star/r_{\ell+1}^\star=(k^\gamma\nu^9)^{1/(s+1)}$, which is at least $n^{(2-21a)/(5s+2)}$. Hence the ideal radii decrease and, since
$\tau=c_\tau\beta\nu^4/s$,
$\tau r_s^\star\ge n^{(2-21a)/(5s+2)}/O(s)\to\infty$ under
$s\le\sqrt{\log n}$. Moreover
\[
r_s^\star\nu=\frac{s}{c_\tau\beta\nu^3}\tau r_s^\star
\ge\frac{s}{c_\tau\beta}\tau r_s^\star\to\infty.
\]
The equality is the definition of $\tau$, the inequality uses $\nu\le1$,
and the final limit follows from the preceding bound on $\tau r_s^\star$.
Thus, after enlarging the uniform input threshold,
$r_s^\star\nu\ge1/\epsilon_{\mathrm{win}}$ and therefore
$W_s^\star:=dr_s^\star\ge d/(\epsilon_{\mathrm{win}}\nu)=w_{\max}$; and
\[
r_1^\star=k^{c_s\gamma}\nu^{(5s-4)/(s+1)},\qquad
W_1^\star:=d r_1^\star.
\]
Since $(5s-4)/(s+1)>0$ and $\nu\le1$, we have
$r_1^\star\le k^{c_s\gamma}$ and hence
$W_1^\star\le n^{1-x+c_sx\gamma}$. Moreover
$1-(1-x+c_sx\gamma)=x(1-c_s\gamma)\ge x/(s+1)$ because $\gamma<1$. Thus $n/W_1^\star\ge n^{x/(s+1)}\to\infty$ under $s\le\sqrt{\log n}$, and in particular $2W_1^\star<n$ for all sufficiently large $n$. Finally $r_1^\star/r_s^\star=n^{(s-1)(x\gamma-9b)/(s+1)}$ while $P_0=n^{2(1-x)}$, and $2(1-x)-(s-1)x\gamma/(s+1)$ has the sign of $2-a(s-1)(30s+3)/(s+1)$, which is positive precisely by the assumption on $a$; hence $r_1^\star\le P_0r_s^\star$ at every $b\le a$, this being the condition that couples $a$ to $s$. The block dynamic program is the explicit $k^2$ term of Lemma~\ref{lem:capped_repair}, dominated by $k^2\nu^{-3}$.

For the actual algorithm take $d=2^{\lceil(1-x)\log_2n\rceil}$ and perform
the single padding step of Algorithm~\ref{alg:approximate_lcs}. Since
$d_{\max}=O(n^{1-x+a})=o(n)$, its common padded length satisfies
$n\le n^+<2n$; hence $k:=n^+/d=\Theta(n^x)$, differing by only a constant
factor from the unpadded surrogate $\lceil n/d\rceil$. Use the dyadic guesses
of Algorithm~\ref{alg:approximate_lcs} and the rounded radii defined in the
statement. Every change from an ideal radius is by at most a constant factor.
The strict margins above are $n^{\Omega(1/s)}$ and
$s\le\sqrt{\log n}$, so after this rounding
$r_s\nu\ge1/\epsilon_{\mathrm{win}}$, hence
$W_s\ge w_{\max}$, while $2W_1<n$, $\tau r_s\ge2$, and
$r_1\le P_0r_s$ still hold for all sufficiently large $n$.  Ceilings in
sample counts have already been charged in Lemma~\ref{lem:capped_repair}.
\end{proof}

\begin{corollary}[Explicit three-scale exponent]
\label{cor:explicit_three_scale}
Running Lemma~\ref{lem:one_level_adaptive} with $s=3$ and $a=4/127$, and the
estimate of Corollary~\ref{cor:r} below the crossover, computes an
$\Omega(\lambda^3)$-approximation of $\lcs(A,B)$ in time
$\widetilde O(n^{224/127})$, where $224/127<1.7638$. It avoids the sublinear LIS estimator of Lemma~\ref{lem:anss}, using below the crossover only the refereed tradeoff of Lemma~\ref{lem:bcd}, and the window branch performs no rectangular matrix multiplication.
\end{corollary}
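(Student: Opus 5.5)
The plan is to specialize Lemma~\ref{lem:one_level_adaptive} at $s=3$, $a=4/127$, and check three things: the window exponent $F_{3,a}$, the side conditions on $a$, and the cost of the BCAD branch below the crossover. Then the two branches are combined exactly as in the proof of Lemma~\ref{lem:main}.

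\emph{Window exponent.} With $s=3$ the formula gives
\[
F_{3,a}=\frac{28+63a}{17}.
\]
At $a=4/127$,
\[
F_{3,a}=\frac{28\cdot127+252}{17\cdot127}=\frac{3808}{2159}=\frac{224}{127},
\]
since $3808=17\cdot224$. For the hypothesis of the lemma I need $a<2(s+1)/((s-1)(30s+3))$, which at $s=3$ reads $a<8/186=4/93$. This holds because $4/127<4/93$. The constraint $2\le s\le\sqrt{\log n}$ holds for large $n$, and smaller inputs are solved exactly. The lemma then also yields $x=(14+6a)/17=106/127$ and the radii exponents quoted in the overview. That consistency check is optional, but I would record it.

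\emph{Low-density branch.} Corollary~\ref{cor:r} with $r=15/2$ gives time $n^{2-15a/2}$. At $a=4/127$ this is
\[
n^{2-30/127}=n^{224/127},
\]
so the two branches balance, and $a\le 2/15$ holds. The only subtlety is the polylogarithmic loss in Lemma~\ref{lem:bcd}. I would handle it exactly as in the paragraph after Theorem~\ref{thm:small_density_eight_fifths}: run the window branch for every guess $\nu\ge n^{-a}\log^{-c}n$, which costs only a polylog factor because $F_{s,b}$ is increasing in $b$. Below that threshold, $\lambda^3\le n^{-3a}\log^{-3c}n$ absorbs the $\widetilde O$ in the BCD factor.

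\emph{Combination.} Run all four orientations of the common padded pair. Each one-sided estimate is at most $\lcs$ by Lemmas~\ref{lem:adaptive_step_one} and~\ref{lem:capped_repair}, together with Lemma~\ref{lem:bcd}. The correct orientation and dyadic guess $\nu\le\lambda^+<2\nu$ give value $\Omega(\nu^4 n)=\Omega(\lambda^3)\lcs$ through Lemma~\ref{lem:capped_repair}. The final output is the maximum over all calls.

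\emph{Structural remarks.} No ANSS call is made anywhere. The window branch uses the adaptive Step~1, that is, pools and sparse filters with inverted lists, in place of \textsc{WindowStepOne}. It therefore involves no rectangular matrix product.

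The main obstacle is the failure-probability union bound over guesses, orientations and trials. I would take it over directly from the ones already stated inside the two lemmas, which are uniform in $s$.
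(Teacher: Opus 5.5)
Your proposal is correct and follows essentially the paper's route: specialize Lemma~\ref{lem:one_level_adaptive} at $s=3$, check $4/127<4/93$ and $a\le 2/15$, and balance $F_{3,a}=(28+63a)/17$ against the BCAD exponent $2-15a/2$, which meet exactly at $a=4/127$ with common value $224/127$. The paper's proof additionally lists the radii $n^{46/127},n^{36/127},n^{26/127}$ and checks each of the seven cost lines, and the binding side condition $r_1\le P_0r_s$ (tightest at $b=0$), by hand. It defers to Theorem~\ref{thm:explicit_anss_free_formal} the polylog absorption via guesses slightly below $n^{-a}$ that you sketch. There the precise reason is not that ``$F_{s,b}$ is increasing'': with $x,\gamma$ frozen at their $a$-values, the individual cost lines grow by only $O(b-a)$ in the exponent, and all side conditions keep polynomial slack.
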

\begin{proof}
Here $x=106/127$, $\gamma=38/53$ and $d=n^{21/127}$, and $a=4/127$ is the
solution of $F_{3,a}=2-15a/2$. At the worst window guess $b=a$ the radii are
$r_1=n^{46/127}$, $r_2=n^{36/127}$ and $r_3=n^{26/127}$, and the exponents of
all seven cost lines, namely $k^{1+\gamma}d^2$, $k^2\nu^{-3}$,
$k^2d^2/(\tau r_1)$, $md^2r_1/(\tau^2r_2)$, $md^2r_2/(\tau^2r_3)$,
$md^2r_3/\tau$ and $n^{2-15a/2}$, are all equal to $224/127$; the block
dynamic program contributes only $k^2=n^{212/127}$ and the remaining
preprocessing only $k^\gamma n/\nu=n^{207/127}$. The hypothesis of
Lemma~\ref{lem:one_level_adaptive} reads $a<4/93$ at $s=3$ and holds at
$a=4/127$, as does $a\le2/15$ for Corollary~\ref{cor:r}; the binding side
condition $r_1\le P_0r_s$ is tightest at $b=0$, where
$r_1/r_s=n^{38/127}\le d^2=n^{42/127}$. This choice of $s$ is optimal: the
balance is feasible only for $s\in\{2,3\}$, giving $39/22$ and $224/127$,
while for $s\ge4$ the constraint binds before the balance and the exponent
grows again, to $221/123$ at $s=4$.
\end{proof}

\subsection{Proof of the main theorem}
\label{sec:main-proof}

\paragraph{Hardcoded constants and exact arithmetic.}
Fix a concrete implementation of Lemma~\ref{lem:bcd}. Thus there are absolute
constants $K_{\mathrm{BCD}},q_{\mathrm{BCD}}$ such that, at parameter
$\varepsilon_{\mathrm{BCD}}=30/127$, its one-sided output is at least
\[
\frac{\lcs(A,B)}
 {K_{\mathrm{BCD}}n^{12/127}\log^{q_{\mathrm{BCD}}}n}
\]
with the stated high probability. Fix
$c_{\mathrm{BCD}}>q_{\mathrm{BCD}}/3+2$, the structural constant
$c_{\mathrm{str}}$, a reciprocal power of two
$\epsilon_{\mathrm{win}}\le c_{\mathrm{str}}/8$, $\beta=1/64$, and the
absolute constants required by Lemmas~\ref{lem:adaptive_step_one} and
\ref{lem:capped_repair}. Use the explicit constant $K_{\mathrm A}=2$ from
Theorem~\ref{thm:anss_lcs_endpoint}. For a requested failure exponent $C$, the sampling
constants are then chosen in the order described in
Remark~\ref{rem:good_event}; $n_0(C)$ absorbs the finitely many remaining
small inputs and constant-factor margins. Fix $K_\star:=16$, which also
absorbs the fixed change of base between logarithms below.

Every exponent used below is rational. In Algorithm~\ref{alg:approximate_lcs}
its numerator and denominator are determined by the integer
$s=O(\sqrt{\log\log n})$. For a quantity $n^{u/v}$ used as an integral length or radius, the algorithm
takes the least power of two $2^j$ satisfying $2^{vj}\ge n^u$, tested by
integer arithmetic, and then rounds any remaining count upward. This changes a
quantity by at most a constant factor. For a mixed power
$k^{p/q}\nu^{r/q}$ with $\nu=2^{-j}$, it analogously takes the least power
$2^t$ satisfying $2^{qt+jr}\ge k^p$. The comparison is performed after
clearing negative exponents, using repeated squaring and
$s^{O(1)}$ words of $O(\log n)$ bits. Write
$\lg n:=\lceil\log_2n\rceil$.
Every Bernoulli probability $p$ is replaced by
$p^+:=\min\{1,2^{-B}\lceil2^Bp\rceil\}$, where
$B=B_0\lg n$ and the fixed constant $B_0$ exceeds all polynomial exponents in
the proof. Since every nonzero probability used here is $n^{-O(1)}$,
$p^+/p=1+o(1)$; the one-sided increase only improves hitting events and changes
work bounds by $1+o(1)$. Exactly $B$ random bits implement the dyadic trial.
Thresholds use $p^+$, and all comparisons are made after clearing their
$O(\log n)$-bit denominators. These conventions give one uniform word-RAM
program and require no real-RAM operations.

\begin{algorithm}[!ht]
\caption{Uniform $8/5+o(1)$ approximation for LCS}
\label{alg:approximate_lcs}
\begin{algorithmic}[1]
\Procedure{ApproximateLCS}{$A,B\in\Sigma^n,\ C>0$}
\Comment{Theorem~\ref{thm:uniform_eight_fifths_formal}}
\If{$n<n_0(C)$}
  \State \Return $\lcs(A,B)$ exactly.
\EndIf
\State Coordinate-compress the alphabet as in Section~\ref{sec:basic_notation}.
Set $H\gets\lceil\log_2(\lg n)\rceil$ using integer bit lengths and
$s\gets\lfloor\sqrt H/(132K_\star)\rfloor$ using integer square root.
\If{$s<2$}
  \State \Return $\lcs(A,B)$ exactly.
\EndIf
\State Set $a\gets1/(66s)$,
$x\gets(4s+2+3(s-1)a)/(5s+2)$, and
$\gamma\gets(3x-2+3a)/x$.
\State Set $d\gets2^{\lceil(1-x)\log_2n\rceil}$ and
$j_{\max}\gets\lfloor a\log_2n\rfloor$.
\State Set $\mathcal G\gets\{2^{-j}:0\le j\le j_{\max}\}$ and
$\nu_{\min}\gets2^{-j_{\max}}$.
\State Set $d_{\max}\gets d/(\epsilon_{\mathrm{win}}\nu_{\min})$.
Right-pad the original $A,B$ once, using two private dummy alphabets, to the
least common length $n^+$ divisible by $d_{\max}$; call the padded strings
$A^+,B^+$ and set $k\gets n^+/d$.
\State $Z\gets\Call{QuantitativeSmallLambdaEstimate}{A,B,C+1}$.
\Comment{Algorithm~\ref{alg:quantitative_small_lambda_estimate}}
\For{each $\nu\in\mathcal G$ and each of the four orientations $\chi$ of
the common padded pair $(A^+,B^+)$}
\State $\mathcal I_\chi\gets\Call{AdaptiveStepOne}{A^+,B^+,\nu,d,k,
\gamma,\chi,\epsilon_{\mathrm{win}}}$.
\Comment{Algorithm~\ref{alg:adaptive_step_one}}
\State Set $r_{s+1}\gets1$.
\For{$\ell=s,s-1,\ldots,1$}
  \State Set $r_\ell\gets\max\{r_{\ell+1},
  \lceil k^{(s-\ell+1)\gamma/(s+1)}
  \nu^{(5s+5-9\ell)/(s+1)}\rceil\}$.
\EndFor
\State $Z\gets\max\{Z,
\Call{MultiscaleStepTwo}{\mathcal I_\chi,\nu,r_1,\ldots,r_s}\}$.
\Comment{Algorithm~\ref{alg:capped_repair}}
\EndFor
\State \Return $Z$.
\EndProcedure
\end{algorithmic}
\end{algorithm}

\begin{remark}[The global good event]
\label{rem:good_event}
Fix $C$. First union-bound the Step~1 pool-hitting and sparse-filter events of
Lemma~\ref{lem:adaptive_step_one} over the polynomially many anchors, stages,
thresholds, layers, density guesses and four orientations. Condition on their
intersection. For each fixed guess and orientation, now fix the deterministic
canonical comparator path supplied by the input and the cached tie-breaking
rule. The row samples used by Lemma~\ref{lem:capped_repair} are fresh and
independent of both Step~1 and this fixed path. Conditional on the preceding
choices, its witness-hitting and budget bounds therefore apply to every cell of
that path. Union-bound these events over all path cells, all $s$ scales, guesses
and orientations, and amplify the constant-success repair trials by
$O(C\log n)$ independent repetitions. Finally include the high-probability
event of the independent small-density call:
Algorithm~\ref{alg:quantitative_small_lambda_estimate} for
Algorithm~\ref{alg:approximate_lcs}, or BCAD for
Theorem~\ref{thm:explicit_anss_free_formal}. Choosing the displayed sampling constants
in terms of $C$ makes the total failure probability at most $n^{-C}$ and changes
only polylogarithmic factors. This probability bound is uniform over all input
densities; the successful guess and comparator path may, and generally do,
depend on the input.
\end{remark}

The next theorem records the strongest explicit bound obtained here without
invoking the ANSS estimator. It combines the three-scale window branch with the
BCAD small-density branch to give a one-sided $\lambda^3$-scaled estimate in time
$\widetilde O_C(n^{224/127})$.

\begin{theorem}[Explicit ANSS-free bound]
\label{thm:explicit_anss_free_formal}
There is an absolute constant $c>0$ such that, for every fixed $C>0$, a
randomized algorithm, given strings $A,B$ of length $n$ and
$\lambda=\lcs(A,B)/n$, returns a number $Z$ satisfying
\[
c\lambda^3\lcs(A,B)\le Z\le\lcs(A,B)
\]
with probability at least $1-n^{-C}$, in time
$\widetilde O_C(n^{224/127})$.
\end{theorem}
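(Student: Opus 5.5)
The plan is to assemble Theorem~\ref{thm:explicit_anss_free_formal} from the three-scale specialization of the window branch and the BCAD branch, exactly as in Corollary~\ref{cor:explicit_three_scale}. The one new ingredient is a density-oblivious wrapper with explicit constants.

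\textbf{Setup.} Fix $s=3$, $a=4/127$, $x=106/127$ and $\gamma=38/53$. Take $d=2^{\lceil(1-x)\log_2 n\rceil}$ and let the guess grid be $\mathcal G=\{2^{-j}:0\le j\le j_{\max}\}$. Here $j_{\max}$ is chosen so that $\nu_{\min}\approx n^{-a}\log^{-c_{\mathrm{BCD}}}n$; this is the polylogarithmic extension of the window range discussed after Theorem~\ref{thm:small_density_eight_fifths}. Pad the input once, as in Algorithm~\ref{alg:approximate_lcs}, to a common length divisible by $d_{\max}=d/(\epsilon_{\mathrm{win}}\nu_{\min})$.

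\textbf{Window branch.} For every $\nu\in\mathcal G$ and each of the four orientations:
\begin{itemize}
\item run \textsc{AdaptiveStepOne} (Lemma~\ref{lem:adaptive_step_one});
\item compute the rounded radii of Lemma~\ref{lem:one_level_adaptive};
\item run \textsc{MultiscaleStepTwo} (Lemma~\ref{lem:capped_repair}).
\end{itemize}
Independently, call Lemma~\ref{lem:bcd} with $\varepsilon_{\mathrm{BCD}}=30/127$. Return the maximum of all outputs.

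\textbf{One-sidedness.} Every output is at most the LCS of its padded orientation. Padding uses private alphabets, and swap and reversal preserve LCS, so every output is at most $\lcs(A,B)$. This holds on the joint good event of Remark~\ref{rem:good_event}, which has failure probability at most $n^{-C}$ after choosing the sampling constants in terms of $C$.

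\textbf{Approximation: dense case.} Suppose $\lambda\ge 2\nu_{\min}$. Let $\nu$ be the dyadic guess with $\nu\le\lambda^+<2\nu$. The fraction $\lambda^+$ computed on the padded length differs from $\lambda$ only by the factor $n/n^+\in(1/2,1]$, so this loses only a constant. In the orientation supplied by Lemma~\ref{lem:rsss_window_compatible}, Lemmas~\ref{lem:staircase_cap} and \ref{lem:capped_repair} give a value $\Omega(\beta\nu^4n^+)=\Omega(\lambda^3\cdot\lambda n)$ with an absolute constant. This yields $c\lambda^3\lcs(A,B)$.

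\textbf{Approximation: sparse case.} Suppose $\lambda<2\nu_{\min}$. The BCAD output is at least $\lcs/(K_{\mathrm{BCD}}n^{12/127}\log^{q_{\mathrm{BCD}}}n)$. Since $3a=12/127$ and $c_{\mathrm{BCD}}>q_{\mathrm{BCD}}/3+2$, we have
\[
\lambda^3\le 8n^{-12/127}\log^{-3c_{\mathrm{BCD}}}n,
\]
which absorbs both $K_{\mathrm{BCD}}$ and the polylogarithm for large $n$. Inputs with $n<n_0(C)$ are solved exactly.

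\textbf{Running time.} Lemma~\ref{lem:one_level_adaptive} with $s=3$ and the arithmetic of Corollary~\ref{cor:explicit_three_scale} show that every window cost line is at most $n^{224/127}$ at the worst guess $b=a$, times absolute polylogarithmic factors. Extending the guesses to $b=a+O(\log\log n/\log n)$ multiplies this only by $(\log n)^{O(1)}$, because $t_2$ and $t_3$ have bounded slope in $b$. BCAD costs $O(n^{2-30/127})=O(n^{224/127})$. The grid and orientations contribute $O(\log n)$.

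\textbf{Main obstacle.} I expect the main difficulty to be checking that the hypotheses of Lemma~\ref{lem:capped_repair} survive this slight extension of the guess range below $n^{-a}$. These hypotheses are $\tau r_s\ge2$, $w_{\max}\le W_s$, $2W_1<n$ and $r_1\le P_0r_s$, together with $a<4/93$ at $s=3$. Each has a strict polynomial margin at $b=a$; for example $r_1/r_s=n^{38/127}$ against $d^2=n^{42/127}$. I would show that a polylogarithmic perturbation of $\nu$ only perturbs each quantity by polylogarithmic factors, so every margin $n^{\Omega(1)}$ still holds for large $n$.
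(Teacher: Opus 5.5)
Your proposal is correct and follows the paper's proof essentially step for step. It uses the same fixed $(s,a,x,\gamma)$ and the dyadic grid extended by $c_{\mathrm{BCD}}\log\log n$ levels, with BCAD at $\varepsilon_{\mathrm{BCD}}=30/127$ covering $\lambda$ below that grid. It uses the same pad-first four-orientation window branch returning $\Omega(\nu^4n^+)$, and the same observation that the extended guesses cost only polylogarithmic factors while every side condition keeps polynomial slack. Your split at $2\nu_{\min}$ instead of the paper's $4\nu_{\min}$ is fine, since $n^+<2n$ still gives $\lambda^+>\nu_{\min}$.

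One harmless slip: $r_1/r_s=n^{38/127}$ is the value at $b=0$, where $r_1\le P_0r_s$ is tightest. At $b=a$ the ratio is $n^{20/127}$, and it only shrinks as $b$ moves past $a$.
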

\begin{proof}
Use the fixed parameters
\[
s=3,\qquad a=4/127,\qquad x=106/127,
\qquad \gamma=38/53.
\]
Set $d$ to the rounded value of $n^{21/127}$ and put
\[
j_{\max}:=\lceil a\log_2n+
c_{\mathrm{BCD}}\log_2\log_2n\rceil,
\quad
\mathcal G:=\{2^{-j}:0\le j\le j_{\max}\},
\quad
\nu_{\min}:=2^{-j_{\max}}.
\]
Pad once using $d_{\max}:=d/(\epsilon_{\mathrm{win}}\nu_{\min})$ exactly as
in Algorithm~\ref{alg:approximate_lcs}, producing $A^+,B^+$ and
$k:=n^+/d$. Initialize $Z$ by running the algorithm of
Lemma~\ref{lem:bcd} with $\varepsilon_{\mathrm{BCD}}=30/127$ and failure
exponent $C+1$.
For every $\nu\in\mathcal G$ and orientation $\chi$, run
Algorithm~\ref{alg:adaptive_step_one}, set
\[
r_3:=\max\{1,\lceil k^{\gamma/4}\nu^{-7/4}\rceil\},\quad
r_2:=\max\{r_3,\lceil r_3(\nu^4r_3)\rceil\},\quad
r_1:=\max\{r_2,\lceil r_3(\nu^4r_3)^2\rceil\},
\]
run Algorithm~\ref{alg:capped_repair} with these three radii, and update $Z$
by the maximum. Return
the final $Z$.

The BCAD call has approximation factor
$K_{\mathrm{BCD}}n^{12/127}\log^{q_{\mathrm{BCD}}}n$. By
Lemma~\ref{lem:bcd}, its runtime is
\[
O(n^{2-30/127})=O(n^{224/127}).
\]
Moreover, $\nu_{\min}\le
n^{-4/127}\log^{-c_{\mathrm{BCD}}}n$, so whenever
$\lambda<4\nu_{\min}$ we have
\[
\lambda^3\le64n^{-12/127}\log^{-3c_{\mathrm{BCD}}}n.
\]
The choice of $c_{\mathrm{BCD}}$ therefore makes the BCAD output at least
$c_0\lambda^3\lcs(A,B)$ for an absolute $c_0>0$ and all sufficiently large
$n$.

Suppose instead that $\lambda\ge4\nu_{\min}$. Padding does not change
$L:=\lcs(A,B)$ and satisfies $n\le n^+<2n$ because
$d_{\max}=\widetilde O(n^{25/127})<n$. Hence
$\lambda^+:=L/n^+\ge\lambda/2\ge2\nu_{\min}$, so the dyadic grid contains
a guess $\nu$ with $\nu\le\lambda^+<2\nu$. The pad-first convention and
Lemma~\ref{lem:rsss_window_compatible} give a successful one of the four
orientations. Lemmas~\ref{lem:adaptive_step_one},
\ref{lem:capped_repair}, and~\ref{lem:one_level_adaptive}, specialized to
$s=3$, $a=4/127$, $x=106/127$, and $\gamma=38/53$, then return at least
$c_1\nu^4n^+$ for an absolute $c_1>0$. Since
$\nu\ge\lambda^+/2\ge\lambda/4$ and $n^+\ge n$, this is at least
$(c_1/256)\lambda^4n=(c_1/256)\lambda^3L$.

Corollary~\ref{cor:explicit_three_scale} proves that the worst unshifted window
guess costs $\widetilde O(n^{224/127})$. Extending the grid by
$c_{\mathrm{BCD}}\log_2\log_2n$ dyadic levels changes each occurrence of
$\nu^{-O(1)}$ by only a polylogarithmic factor. All strict side conditions in
Lemma~\ref{lem:one_level_adaptive} retain polynomial slack, so they remain
valid. The global event is justified in Remark~\ref{rem:good_event}.

The BCAD value is one-sided. On the Step~1 and repair good events, every table
entry is either a certified lower bound or an exact window LCS, so every
compatible-path value is also one-sided. Private padding symbols never match.
Thus the maximum returned by the algorithm never exceeds $L$. Taking
$c:=\min\{c_0,c_1/256\}$ completes the proof; inputs below $n_0(C)$ are
solved exactly.
\end{proof}

The next theorem fixes the scale count as a function of an arbitrary accuracy
parameter. Combining the resulting window branch with the quantitative ANSS
endpoint gives a one-sided $\lambda^3$-scaled estimate in time
$O_{\delta,C}(n^{8/5+\delta})$.

\begin{theorem}[ANSS-assisted fixed-accuracy bound]
\label{thm:fixed_delta_formal}
There is an absolute constant $c>0$ such that, for every fixed
$\delta,C>0$, a randomized algorithm, given $A,B\in\Sigma^n$ and
$\lambda=\lcs(A,B)/n$, returns a number $Z$ satisfying
\[
 c\lambda^3\lcs(A,B)\le Z\le\lcs(A,B)
\]
with probability at least $1-n^{-C}$, in time
$O_{\delta,C}(n^{8/5+\delta})$.
\end{theorem}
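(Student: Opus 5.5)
The plan follows the template of Theorem~\ref{thm:explicit_anss_free_formal}. The changes are that the BCAD call is replaced by the quantitative ANSS endpoint of Theorem~\ref{thm:anss_lcs_endpoint}, and that $s$ and $a$ are chosen as fixed functions of $\delta$. Given $\delta$, I would choose a fixed integer $s\ge2$ with $4/(25s+10)\le\delta/2$. I would then choose a fixed $a>0$ with $a\le1/(33s)$ and $\frac{21s}{5s+2}a\le\delta/2$. Lemma~\ref{lem:one_level_adaptive} then gives $F_{s,a}\le 8/5+\delta$, and all its hypotheses hold for large $n$ since $s$ is constant. With $x,\gamma$ as in that lemma and $d=2^{\lceil(1-x)\log_2 n\rceil}$, the algorithm is:
\begin{enumerate}
\item Run Algorithm~\ref{alg:quantitative_small_lambda_estimate} with failure exponent $C+1$.
\item Over the dyadic grid $\nu=2^{-j}$, $0\le j\le\lfloor a\log_2 n\rfloor+O(1)$, and the four orientations of the once-padded pair, run Algorithm~\ref{alg:adaptive_step_one} followed by Algorithm~\ref{alg:capped_repair} with the rounded radii of Eq.~\eqref{eq:adaptive_radii}.
\item Return the maximum of all values obtained.
\end{enumerate}

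\emph{Low-density case.} Suppose $\lambda<4\nu_{\min}$, so that $\lambda^3\le 64n^{-3a}$ up to the constant grid shift. Theorem~\ref{thm:anss_lcs_endpoint} gives an output of at least $c_{\mathrm A}n^{-K_{\mathrm A}/\sqrt{\log\log n}}\lcs(A,B)$. Since $a$ is fixed, $K_{\mathrm A}/\sqrt{\log\log n}\le 3a$ for large $n$, as the theorem's last assertion states. The output is therefore at least $c_{\mathrm A}n^{-3a}L\ge(c_{\mathrm A}/64)\lambda^3L$. This costs $O_C(n^{4/3}\log n)$, well below $n^{8/5}$.

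\emph{High-density case.} Suppose $\lambda\ge4\nu_{\min}$. As in Theorem~\ref{thm:explicit_anss_free_formal}, padding to $n^+<2n$ gives $\lambda^+\ge\lambda/2$. The grid then contains a $\nu$ with $\nu\le\lambda^+<2\nu$, and Lemma~\ref{lem:rsss_window_compatible} supplies a good orientation. Lemmas~\ref{lem:adaptive_step_one}, \ref{lem:staircase_cap} and~\ref{lem:capped_repair} then return $\Omega(\nu^4n^+)\ge c_1\lambda^3L/256$.

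\emph{Running time.} Each guess satisfies $b\le a+o(1)$ and costs $n^{F_{s,a}+o(1)}$ by Lemma~\ref{lem:one_level_adaptive}. The factors $s^{O(1)}(\log n)^{O(1)}$ are constants times polylogs because $s$ is fixed. There are $O(\log n)$ guesses and four orientations.

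\emph{One-sidedness and failure probability.} One-sidedness holds on the good event of Remark~\ref{rem:good_event}: every entry is a certified lower bound or an exact LCS, padding symbols never match, and the ANSS output satisfies $Z_{\mathrm A}\le L$ on its success event. That remark also union-bounds the failure probability to $n^{-C}$. Small $n$ are solved exactly, and their cost is absorbed into $O_{\delta,C}$. Taking $c=\min\{c_{\mathrm A}/64,c_1/256\}$ finishes the proof.

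The main obstacle is polylogarithmic slack. The stated time is $O_{\delta,C}(n^{8/5+\delta})$ with no $\widetilde O$, so the polylogarithmic factors must be absorbed. I would handle this by targeting $F_{s,a}\le 8/5+\delta/2$, with the margins halved, and spending the remaining $n^{\delta/2}$ on all polylog factors. This also covers the shift of the grid by $O(1)$ levels below $n^{-a}$; any extension by $\log\log n$ levels is likewise only polylogarithmic. I also need the side conditions of Lemma~\ref{lem:one_level_adaptive}, notably $r_1\le P_0r_s$ and $W_s\ge w_{\max}$, to keep their $n^{\Omega(1/s)}$ slack after rounding. Since $s$ is fixed, this holds for all large $n$.
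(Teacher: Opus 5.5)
Your proposal is correct and follows essentially the same route as the paper's proof. You fix $s$ and $a\le 1/(33s)$ so that $F_{s,a}$ lies a fraction of $\delta$ above $8/5$, and spend the remaining slack on the $s^{O(1)}(\log n)^{O(1)}$ factors. You use the last assertion of Theorem~\ref{thm:anss_lcs_endpoint}, with failure exponent $C+1$, when $\lambda<4\nu_{\min}$, and the pad-first dyadic-guess argument with a good orientation when $\lambda\ge4\nu_{\min}$.

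The differences are only cosmetic. The paper takes explicitly $s=\max\{2,\lceil 1/\min\{\delta,1\}\rceil\}$ and $a=1/(66s)$. It also uses $j_{\max}=\lfloor a\log_2 n\rfloor$ without your $O(1)$ extra levels. This keeps every guess at $b\le a$ exactly as Lemma~\ref{lem:one_level_adaptive} requires, at the cost of the constant $512$ in place of your $64$.
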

\begin{proof}
Put $\delta_0:=\min\{\delta,1\}$ and choose the fixed constants
\[
 s:=\max\{2,\lceil1/\delta_0\rceil\},
 \qquad a:=\frac1{66s}.
\]
Then $a\le1/(33s)$, so Lemma~\ref{lem:one_level_adaptive} applies, and
\[
 F_{s,a}-\frac85=\frac4{25s+10}+\frac{21s}{5s+2}a\le\frac4{25s}+\frac7{110s}=\frac{123}{550s}<\frac{\delta_0}{4}.
\]
Because $s$ is fixed, the factors $s^{O(1)}(\log n)^{O(1)}$, the density
grid, and the four orientations are at most $n^{\delta_0/2}$ for all
sufficiently large $n$.  Thus the complete window branch takes
$O_{\delta,C}(n^{8/5+3\delta_0/4})$, which is
$O_{\delta,C}(n^{8/5+\delta})$.

Let
\[
 j_{\max}:=\lfloor a\log_2n\rfloor,
 \qquad \mathcal G:=\{2^{-j}:0\le j\le j_{\max}\},
 \qquad \nu_{\min}:=2^{-j_{\max}}.
\]
The use of the floor ensures
$n^{-a}\le\nu_{\min}<2n^{-a}$ and ensures that every
$\nu=n^{-b}$ in the grid has $b\le a$, as required by
Lemma~\ref{lem:one_level_adaptive}.  Run its $s$-scale window construction
for all guesses and all four orientations. Independently run
Algorithm~\ref{alg:quantitative_small_lambda_estimate}, the endpoint
algorithm of Theorem~\ref{thm:anss_lcs_endpoint}, and return the maximum of
all resulting values.  Choose the sampling constants in both branches with
failure exponent $C+1$.

We verify the approximation for the unknown density.  If
$\lambda<4\nu_{\min}$, then
\[
 \lambda^3<512n^{-3a}.
\]
The last assertion of Theorem~\ref{thm:anss_lcs_endpoint} therefore gives
$Z\ge(c_{\mathrm A}/512)\lambda^3\lcs(A,B)$ for all sufficiently large
$n$. Suppose instead that $\lambda\ge4\nu_{\min}$. Use the same pad-first
wrapper as Algorithm~\ref{alg:approximate_lcs}, but substitute the fixed
values of $s,a,x,$ and $\gamma$ from
Lemma~\ref{lem:one_level_adaptive}.  Its maximum window length is $o(n)$,
so the padded length satisfies $n\le n^+<2n$ and padding leaves
$L:=\lcs(A,B)$ unchanged.  Hence
\[
 \lambda^+:=L/n^+\ge\lambda/2\ge2\nu_{\min},
\]
and $\mathcal G$ contains a guess $\nu$ with
$\nu\le\lambda^+<2\nu$.  The successful orientation supplied by
Lemma~\ref{lem:rsss_window_compatible}, followed by
Lemmas~\ref{lem:adaptive_step_one}, \ref{lem:capped_repair}, and
\ref{lem:one_level_adaptive}, returns at least
$c_1\nu^4n^+$ for an absolute $c_1>0$.  Since
$\nu\ge\lambda^+/2\ge\lambda/4$, this is at least
\[
 \frac{c_1}{256}\lambda^4n
 =\frac{c_1}{256}\lambda^3L.
\]

On the intersection of the two branchwise good events, every endpoint and
window value used by the final maximum is at most $L$.  A union bound makes
the failure probability at most $2n^{-(C+1)}\le n^{-C}$.  The ANSS endpoint
takes $O_C(n^{4/3}\log n)$ time and is dominated by the window exponent.
Taking $c:=\min\{c_{\mathrm A}/512,c_1/256\}$, and solving the finitely many
exceptional input lengths exactly, completes the proof.
\end{proof}

The final theorem lets the scale count grow with $n$ and therefore defines one
algorithm rather than a family indexed by an accuracy parameter. It achieves the
same one-sided guarantee in time $n^{8/5+O(1/\sqrt{\log\log n})}$, which is
$n^{8/5+o(1)}$.

\begin{theorem}[Uniform $8/5+o(1)$ bound]
\label{thm:uniform_eight_fifths_formal}
There is an absolute constant $c>0$ such that, for every fixed $C>0$,
Algorithm~\ref{alg:approximate_lcs}, given $A,B\in\Sigma^n$ and
$\lambda=\lcs(A,B)/n$, returns a number $Z$ satisfying
\[
 c\lambda^3\lcs(A,B)\le Z\le\lcs(A,B)
\]
with probability at least $1-n^{-C}$, in time
\[
 n^{8/5+o(1)}.
\] 
\end{theorem}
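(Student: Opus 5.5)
The proof follows the template of Theorem~\ref{thm:fixed_delta_formal}, but now $s$ and $a$ grow or shrink with $n$ exactly as Algorithm~\ref{alg:approximate_lcs} prescribes. Take $s=\lfloor\sqrt H/(132K_\star)\rfloor=\Theta(\sqrt{\log\log n})$ and $a=1/(66s)$. Since $a\le1/(33s)$ and $2\le s\le\sqrt{\log n}$ for large $n$, Lemma~\ref{lem:one_level_adaptive} applies. Every guess $\nu=2^{-j}$ with $j\le j_{\max}=\lfloor a\log_2n\rfloor$ has $b\le a$, so the window branch costs $s^{O(1)}(\log n)^{O(1)}n^{F_{s,a}}$ per guess and orientation. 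The bound $F_{s,a}-8/5\le 123/(550s)=O(1/\sqrt{\log\log n})$ then follows. The uniformity clause of Lemmas~\ref{lem:capped_repair} and~\ref{lem:one_level_adaptive} controls the remaining overheads: the factor $s^{O(1)}(\log n)^{O(1)}$, the $O(\log n)$ guesses, the four orientations and the padding are all $n^{o(1)}$. The ANSS endpoint costs $O_C(n^{4/3}\log n)$. The rational-exponent rounding conventions change quantities by constant factors only, and so keep the word-RAM cost polylogarithmic in $s$ and $n$. Hence the total time is $n^{8/5+O(1/\sqrt{\log\log n})}=n^{8/5+o(1)}$.

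For correctness, split on $\lambda$ against $4\nu_{\min}$, where $n^{-a}\le\nu_{\min}<2n^{-a}$.

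\emph{Case $\lambda\ge4\nu_{\min}$.} The argument is verbatim that of Theorem~\ref{thm:fixed_delta_formal}. Padding gives $n\le n^+<2n$, because $d_{\max}=O(n^{1-x+a})=o(n)$, and hence $\lambda^+\ge\lambda/2\ge2\nu_{\min}$. So some grid guess satisfies $\nu\le\lambda^+<2\nu$. Lemma~\ref{lem:rsss_window_compatible} supplies a good orientation. Lemmas~\ref{lem:adaptive_step_one} and~\ref{lem:capped_repair} then return at least $c_1\nu^4n^+\ge(c_1/256)\lambda^3L$.

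\emph{Case $\lambda<4\nu_{\min}$.} Here $\lambda^3<512n^{-3a}$, and I invoke Theorem~\ref{thm:anss_lcs_endpoint}. This is where the care lies, because $a$ is not fixed, so its final ``fixed $a$'' assertion cannot be quoted. Instead I use its explicit form $Z_{\mathrm A}\ge c_{\mathrm A}n^{-K_{\mathrm A}/\sqrt{\log\log n}}L$ and verify directly that $K_{\mathrm A}/\sqrt{\log\log n}\le3a=1/(22s)$. This holds because $s\le\sqrt H/(132K_\star)$ with $H=\log_2\lg n\approx\log\log n$ up to a change of base. The factor $132K_\star$ with $K_\star=16$ is designed to absorb both that base change and the constant $K_{\mathrm A}=2$. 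With this inequality, $Z\ge(c_{\mathrm A}/512)\lambda^3L$.

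One-sidedness holds on the global good event of Remark~\ref{rem:good_event}. Every table entry is either a certified lower bound or an exact LCS, the ANSS output is at most $L$, and padding symbols never match. Running the ANSS branch with failure exponent $C+1$ and choosing the sampling constants in terms of $C$ bounds the total failure probability by $n^{-C}$. Inputs with $n<n_0(C)$, or with $s<2$, are solved exactly; there are only finitely many such lengths for fixed $C$, so this is absorbed. Take $c=\min\{c_{\mathrm A}/512,c_1/256\}$.

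I expect the main obstacle to be the uniformity in $s$. The constants $c_1$, $c_\tau$, $\beta$ and the failure bounds must not degrade as $s\to\infty$. I will rely on the statement of Lemma~\ref{lem:capped_repair} that the approximation constant is absolute, because $\tau$ carries the $1/s$. The side conditions of Lemma~\ref{lem:one_level_adaptive} have margins $n^{\Omega(1/s)}$, which still tend to infinity since $s=O(\sqrt{\log\log n})\ll\log n/\log\log n$. So all of them hold beyond a single threshold $n_0(C)$.
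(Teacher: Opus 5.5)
Your proposal is correct and follows essentially the same route as the paper's proof. It uses the same choices $s=\lfloor\sqrt H/(132K_\star)\rfloor$ and $a=1/(66s)$, the bound $F_{s,a}-8/5=O(1/\sqrt{\log\log n})$ from Lemma~\ref{lem:one_level_adaptive}, the split of $\lambda$ against $4\nu_{\min}$ with branch constants $c_1/256$ and $c_{\mathrm A}/512$, and, as the paper also does, a direct check of $K_{\mathrm A}/\sqrt{\log\log n}\le 3a$ in place of the fixed-$a$ clause of Theorem~\ref{thm:anss_lcs_endpoint}.
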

\begin{proof}
Algorithm~\ref{alg:approximate_lcs} uses the fixed integer $K_\star$ chosen
above from the absolute constant $K_{\mathrm A}$ of
Theorem~\ref{thm:anss_lcs_endpoint}. Put $h:=\log\log n$. Its exact integer
surrogate $H=\lceil\log_2(\lg n)\rceil$ satisfies $H=\Theta(h)$, and for all
sufficiently large $n$ the algorithm chooses
\[
 s:=\lfloor\frac{\sqrt H}{132K_\star}\rfloor,
 \qquad a:=\frac1{66s}.
\]
For the finitely many $n$ for which $s<2$, compute LCS exactly.  The choices
are made by bit lengths and integer square root, so this defines one algorithm
rather than a family indexed by an accuracy parameter. By the choice of
$K_\star$ and the fixed comparison between $H$ and $h$, for all sufficiently
large $n$,
\[
 \frac{2K_{\mathrm A}}{\sqrt h}\le a,
 \qquad a=O(\frac{K_\star}{\sqrt h}),
 \qquad s=\Theta(\sqrt h),
\]
and $a\le1/(33s)$.  Lemma~\ref{lem:one_level_adaptive} gives
\begin{align*}
 F_{s,a}-\frac85
 &\le\frac4{25s}+\frac{21}{5}a
 <15a
 =O(\frac{K_\star}{\sqrt h}).
\end{align*}
Its uniform factor $s^{O(1)}(\log n)^{O(1)}$, together with the density
grid and orientations, is $n^{o(1/\sqrt h)}$.

The algorithm defines $\mathcal G$ and $\nu_{\min}$ exactly as in the proof of
Theorem~\ref{thm:fixed_delta_formal}, now with these $n$-dependent values of
$s$ and $a$. It runs all window guesses and independently calls
Algorithm~\ref{alg:quantitative_small_lambda_estimate}, which implements
Theorem~\ref{thm:anss_lcs_endpoint}, and returns the maximum. If
$\lambda<4\nu_{\min}$, then
$\lambda^3<512n^{-3a}$, while
\[
 \frac{K_{\mathrm A}}{\sqrt h}\le\frac a2<3a.
\]
Consequently the ANSS endpoint returns at least
$(c_{\mathrm A}/512)\lambda^3L$.  If
$\lambda\ge4\nu_{\min}$, the identical padding and dyadic-guess argument in
the preceding proof gives a valid guess and the window branch returns at
least $(c_1/256)\lambda^3L$.  The hypotheses of
Lemma~\ref{lem:one_level_adaptive}, including $s\le\sqrt{\log n}$, hold for
all sufficiently large $n$.

The ANSS endpoint costs only $O_C(n^{4/3}\log n)$.  Including the uniform
lower-order factors and the finitely many rounded lengths, the window
calculation above is
\[
 n^{8/5+O(1/\sqrt h)}
 =n^{8/5+O(1/\sqrt{\log\log n})}
 =n^{8/5+o(1)}.
\]  The same branchwise good events and
union bound as in the fixed-accuracy proof give probability at least
$1-n^{-C}$ and preserve the upper bound.  Taking
$c:=\min\{c_{\mathrm A}/512,c_1/256\}$ proves the theorem.
\end{proof}

\begin{remark}[Status of the exponent calculations]
\label{rem:status_bounds}
Theorem~\ref{thm:explicit_anss_free_formal} uses the fixed three-scale adaptive window
algorithm and the BCAD tradeoff only. Lemma~\ref{lem:main} records the earlier
$116/63$ bound with oblivious Step~1. The cost family in
Lemma~\ref{lem:one_level_adaptive}, combined with the independently amplified
ANSS endpoint of Theorem~\ref{thm:anss_lcs_endpoint}, gives the fixed-accuracy
Theorem~\ref{thm:fixed_delta_formal} and the uniform
Theorem~\ref{thm:uniform_eight_fifths_formal}. The first uses a fixed number
of scales; Algorithm~\ref{alg:approximate_lcs} implements the second using
$\Theta(\sqrt{\log\log n})$ scales. The ANSS
probability guarantee is invoked as the published black box, while its
expected time is converted to a worst-case cap in
Section~\ref{sec:anss_endpoint}.
\end{remark}

\begin{remark}
\label{rem:tiny_delta}
For a fixed scale count $s$, the window cost at a vanishing crossover is
$F_s=(8s+4)/(5s+2)=8/5+4/(25s+10)$. Formally this tends to $8/5$ as
$s\to\infty$, and the limiting costs are $2-x+x\gamma$, $2x$, and
$2-x\gamma$. Their maximum is at least $8/5$, with equality only at
$x=4/5$, $\gamma=1/2$. Theorem~\ref{thm:fixed_delta_formal} first fixes $s$
as a function of $\delta$. Theorem~\ref{thm:uniform_eight_fifths_formal}
instead takes $s=\Theta(\sqrt{\log\log n})$ and balances the moving crossover
with the quantitative ANSS endpoint, turning this limiting calculation into
the stated $8/5+O(1/\sqrt{\log\log n})$ theorem. The fixed three-scale BCAD
balance is retained separately as the explicit ANSS-free
Theorem~\ref{thm:explicit_anss_free_formal}.
\end{remark}

\begin{remark}
\label{rem:status}
Section~\ref{sec:accounting} and Corollary~\ref{cor:r} use the stated results
of \cite{rsss19,bcd21}. Lemmas~\ref{lem:bipartite}, \ref{lem:perpair},
\ref{lem:step2}, \ref{lem:sound}, \ref{lem:complete} and~\ref{lem:round} are
modifications proved here. In particular the exact surrogate of
Lemma~\ref{lem:sound} certifies the RSSS constant $1/16$, whereas the sampled
matrix-product test deliberately writes the weaker one-sided value
$D_{ij}\lambda^4/32$. Dyadic layers are used only as buckets, and every
post-product comparison uses the actual $D_{ij}$. The restriction to
$W_A\times W_B$ is justified by Algorithms~6--7 of \cite{rsss19}, which read
and repair only mixed entries. Lemmas~\ref{lem:staircase_cap}
and~\ref{lem:capped_repair} are new. The first forms one global
heterogeneous-density staircase target and bounds the deficient set; the
second reaches that target by multiscale exact localization: sampled exact
detection identifies deficient cells, each scale shrinks the candidate region
around them, and only the final region is repaired, exhaustively and exactly.
At one scale the construction specializes to the sampled nearby repair of
\cite[Lemma~5.1]{rsss19} with an optimized radius. No approximate recursion is
used at any scale, so the approximation constant is absolute.
\end{remark}

\bibliographystyle{alpha}
\bibliography{ref}

\appendix

\section*{Acknowlegements and AI Disclosure}
The AI tools used in preparing this draft are Gemini Pro 3.1, ChatGPT 5.6, and Codex 5.6. These tools were used mainly to improve the presentation and grammar of the paper.

\end{document}